\newcommand{\remark}[1]{{\emph{Remark:}} #1}
\newcommand{\BEQA}{\begin{eqnarray}}
\newcommand{\EEQA}{\end{eqnarray}}
\newcommand{\gap}{\vspace{2mm}}
\newcommand{\x}{\mathbf{x}}
\newcommand{\hLambda}{\hat{\Lambda}}
\newcommand{\hhLambda}{\hat{\hat{\Lambda}}}
\newcommand{\tLambda}{\tilde{\Lambda}}
\newcommand{\pdel}{p_{\mathrm{del}}}
\newcommand{\dmax}{d_{\max}}
\newcommand{\deltabar}{\overline{\delta}}
\newcommand{\dbar}{\overline{d}}
\newtheorem{theorem}{Theorem}
\newtheorem{proposition}{Proposition}
\newtheorem{lemma}{Lemma}
\newtheorem{insight}{Insight}
\begin{document}

\title{An Approximate Inner Bound\\ to the QoS Aware Throughput Region\\ of a Tree Network under\\ IEEE 802.15.4 CSMA/CA and\\ Application to Wireless Sensor\\ Network Design}
\author{Abhijit Bhattacharya and Anurag Kumar\\Dept.\ of Electrical Communication Engineering\\Indian Institute of Science, Bangalore, 560012, India\\ email: \{abhijit, anurag\}@ece.iisc.ernet.in}


\maketitle

\begin{abstract}
We consider a tree network spanning a set of source nodes that generate measurement packets, a set of additional relay nodes that only forward packets from the sources, and a data sink. We assume that the paths from the sources to the sink have bounded hop count. We assume that the nodes use the IEEE~802.15.4 CSMA/CA for medium access control, and that there are no hidden terminals. In this setting, starting with a set of simple fixed point equations, we derive sufficient conditions for the tree network to approximately satisfy certain given QoS targets such as end-to-end delivery probability and delay under a given rate of generation of measurement packets at the sources (arrival rates vector). The structures of our sufficient conditions provide insight on the dependence of the network performance on the arrival rate vector, and the topological properties of the network. Furthermore, for the special case of equal arrival rates, default backoff parameters, and for a range of values of target QoS, we show that among all path-length-bounded trees (spanning a given set of sources and BS) that meet the sufficient conditions, a shortest path tree achieves the maximum throughput. 
\end{abstract}

\section{Introduction}
\label{sec:intro}
Our work in this paper is motivated by the following broad problem of designing multi-hop ad hoc wireless networks that utilise IEEE~802.15.4 CSMA/CA as the medium access control. Given a network graph over a set of sensor nodes (also called sources), a set of potential relay locations, and a data sink (also called base station (BS)), where each link meets a certain target quality requirement, the problem is to extract from this graph, a hop length bounded\footnote{For a discussion of why such a hop count bound is needed, see \cite{bhattacharya-kumar14comnet,bhattacharya-kumar14tr-qos-aware-nw-design-csma}} tree topology connecting the sensors to the BS, such that the resulting tree provides certain quality of service (QoS), typically expressed in terms of a bound on the packet delivery probability and/or on the mean packet delay, \emph{while also achieving a large throughput region}.

\begin{figure}[t]
\footnotesize
\begin{center}
\includegraphics[scale=0.35]{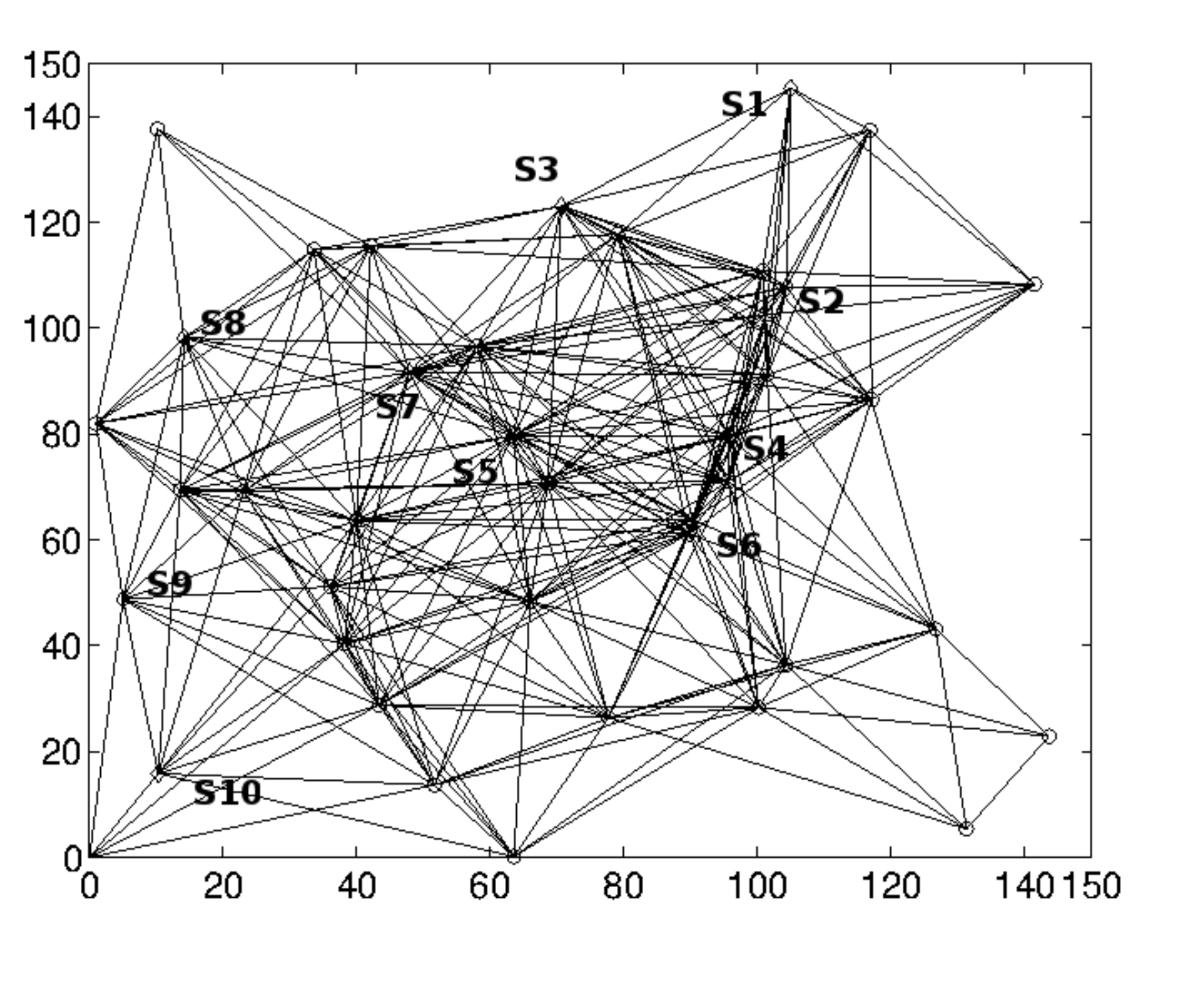}
\caption{A network graph over 10 sources and 30 potential relay locations; the base station (BS) is at (0,0); each edge is assumed to have a packet error rate of no more than 2\%. There are no hidden nodes.}
\label{fig:illustrative-topological-example}
\vspace{-5mm}
\end{center}
\normalsize
\end{figure}
As an example, consider the network graph shown in Figure~\ref{fig:illustrative-topological-example} over 10 sensors, 30 potential relay locations, and a BS at $(0,0)$; the links in the network have a worst case packet error rate of 2\%. Suppose that the nodes use IEEE~802.15.4 CSMA/CA, and that all the nodes are within carrier sense range of one another. The problem is to obtain from this graph, a tree connecting the sensors to the sink, such that the hop count from each sensor to the sink is no more than, say 5, the packet discard probability on each link is no more than 2.08\%, and the mean delay on each link is no more than 20 msec (these single hop QoS requirements translate to an end-to-end delivery probability of 90\%, and an end-to-end mean delay of 100 msec). In addition, among trees that meet these requirements, the resulting tree should achieve a large throughput region. The following are possible approaches for addressing such a problem. 

\gap
\noindent
\textbf{Via exhaustive search using simulation or an accurate performance analysis tool:} One naive way of solving the above mentioned network design problem is to consider all possible candidate tree topologies, and simulate each of them for a wide range of arrival rates to obtain their QoS respecting throughput regions, and choose the one with the largest throughput region. This method is clearly inefficient as simulation of each topology takes significant amount of time, and there could be exponentially many candidate trees (see Figure~\ref{fig:illustrative-topological-example}). An alternative approach is to replace the simulation step with a network analysis tool (such as the one proposed in \cite{srivastava} for IEEE~802.15.4 CSMA/CA networks) which is considerably faster compared to simulations; however, one still requires to evaluate an exponential number of candidate trees for a wide range of arrival rates, and hence the method is still inefficient.

\gap
\noindent
\textbf{Via a characterization of the QoS respecting throughput region:} A more efficient way of solving the network design problem would be to obtain an exact analytical characterization of the QoS respecting throughput region of a tree network under IEEE~802.15.4 CSMA/CA \emph{in terms of the topological properties of the network}, and then \emph{derive network design rules} from that characterization to maximize the throughput region. The difficulty with this approach is that for practical CSMA/CA protocols such as IEEE~802.15.4, obtaining an explicit exact characterization of the QoS respecting throughput region in terms of topological properties is notoriously hard. See Section~\ref{subsec:related-work} for more details. Therefore, some approximate methodology is in order.

\gap
\noindent
\textbf{Our strategy:} Our approach in solving this problem is two-fold:
\begin{enumerate}
\item Obtain an \emph{explicit} approximate inner bound to the QoS respecting throughput region of a tree network in terms of the topological properties of the network, and parameters of the CSMA/CA protocol. 
\item Obtain a tree that maximizes this approximate inner bound.
\end{enumerate}
Such an endeavour requires performance models of general multihop wireless networks under the CSMA/CA MAC, and the derivation of design criteria from such models. In this paper, we utilize our simplification of a detailed fixed point based analysis of a multi-hop tree network operating under IEEE~802.15.4 unslotted CSMA/CA \cite{IEEE}, provided by Srivastava et al. \cite{srivastava}, to develop certain explicit design criteria for QoS respecting networks.

As will be explained in a later section (see Section~\ref{sec:equal-arrival-rates}), it turns out that for default protocol parameters of IEEE~802.15.4 CSMA/CA, and for a wide range of QoS targets, the resulting solution is surprisingly simple: connect each sensor to the sink using a shortest path (in terms of hop count). Although this criterion is based on an approximate inner bound to the QoS respecting throughput region, we will see in our numerical experiments that this actually achieves larger throughput than a wide range of competing topologies. 

\begin{figure}[t]
\footnotesize
\begin{center}
\includegraphics[scale=0.35]{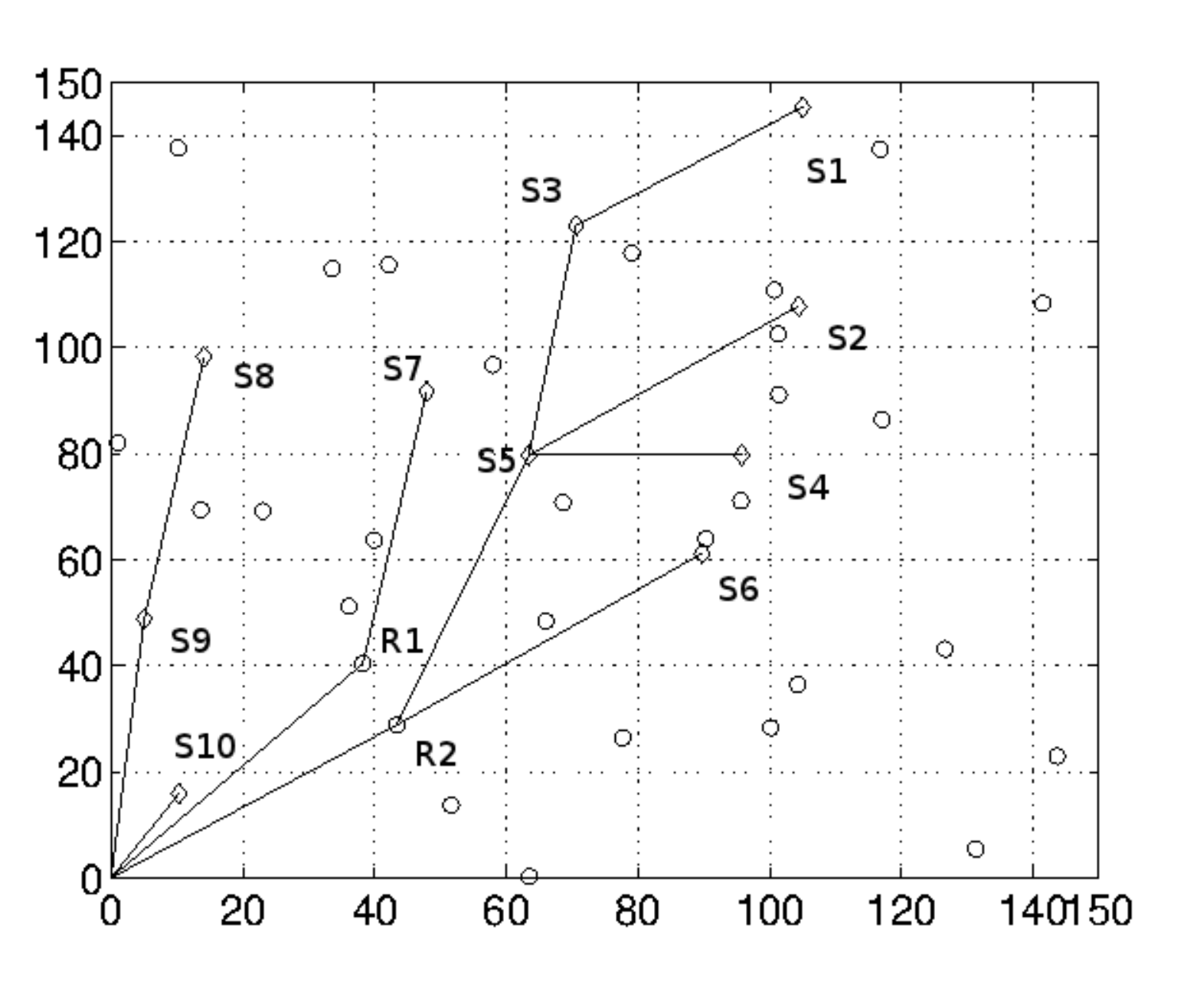}
\hspace{1mm}
\includegraphics[scale=0.36]{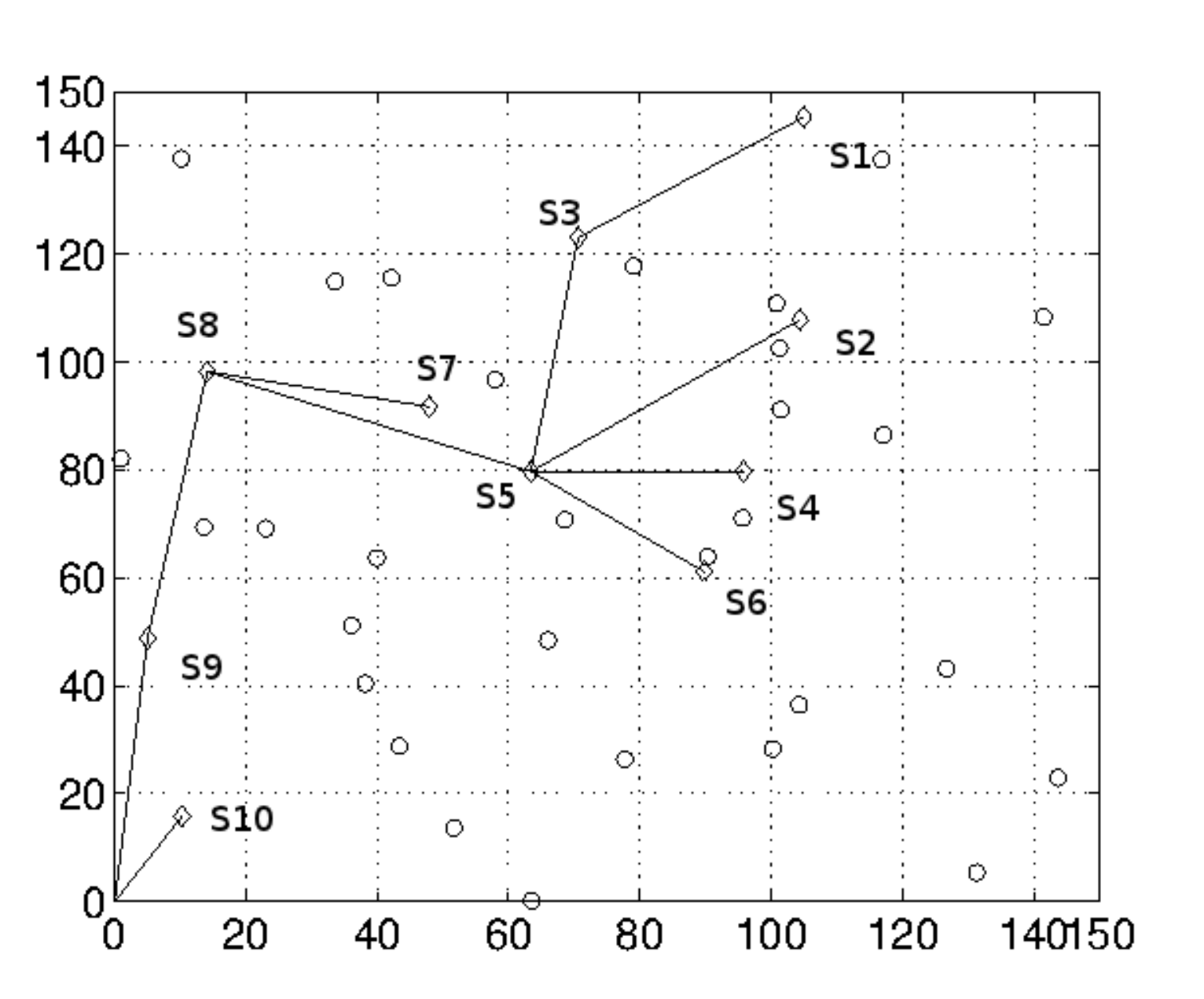}
\caption{Two competing tree topologies obtained from the network graph in Figure~\ref{fig:illustrative-topological-example}. Left panel: a shortest path tree connecting the sources to the BS. Note that it uses two relays, namely R1 and R2. Right panel: a tree obtained using an approximation algorithm for a certain Steiner graph design problem (see the SPTiRP algorithm \cite{bhattacharya-kumar14comnet}) to connect the sources to the sink. Note that it uses no relays, but has a higher total hop count compared to the shortest path tree.}
\label{fig:competing-topologies}
\vspace{-5mm}
\end{center}
\normalsize
\end{figure}
Continuing the example of Figure~\ref{fig:illustrative-topological-example}, Figure~\ref{fig:competing-topologies} demonstrates two different tree topologies connecting the sensors to the sink, that are both subgraphs of the example network graph of Figure~\ref{fig:illustrative-topological-example}. The left panel shows a shortest path tree connecting the sources to the BS; this tree requires two relays. The right panel depicts a tree obtained using the SPTiRP algorithm proposed in \cite{bhattacharya-kumar14comnet} for construction of hop constrained Steiner trees with a small number of relays. In the context of this example, our results in this paper provide

\gap
\noindent
i. explicit formulas for inner bounding the set of arrival rates that can be carried by either of these two topologies, while respecting the QoS objectives, and 

\noindent
ii. a basis for asserting that, for equal arrival rates from all sources, the shortest path tree topology in Figure~\ref{fig:competing-topologies} will achieve the larger QoS respecting arrival rate. 

Indeed, for equal arrival rates from all the sensors, it turns out (by brute force search over a wide range of arrival rates using the network analysis method presented in \cite{srivastava} for IEEE~802.15.4 CSMA/CA) that the shortest path tree can handle up to 5 packets/sec from each source, whereas the other tree topology in Figure~\ref{fig:competing-topologies} can handle 3.5 packets/sec from each source. The corresponding values from our explicit inner bound formulas are obtained as 3.511 packets/sec and 2.605 packets/sec respectively. See Section~\ref{sec:equal-arrival-rates} for details.   

\subsection{Related work} 
\label{subsec:related-work}
Most of the previous work on network design focuses on the ``lone-packet model,'' or a TDMA like MAC, where there is no contention in the network; see, for example, \cite{klein, Lin, ZhangHou, Misra, Costa, fullpaper, smartconnect-paper, multisink}, etc. On the other hand, most work on network performance modeling focuses on estimating some measure of network performance  under some simplified form of CSMA/CA MAC protocol for a \emph{given} arrival rate vector, and a \emph{given} network topology; see, for example, \cite{rachitpaper} and the references therein. This kind of work usually involves capturing the complex interaction (due to contention) among the nodes in the network via a set of fixed point equations (obtained by making some ``mean-field'' type \cite{bordenave} independence approximations), and then solving these equations using an iterative scheme to obtain the quantities of interest (see \cite{bianchi00performance, kumar-etal04new-insights} for example). For practical MAC protocols such as IEEE~802.11 and IEEE~802.15.4, these fixed point based schemes are too involved to gather any insight on the general dependence of the network performance on the arrival rates and network topology; see, for example, \cite{jindal09, park13, rachitpaper}. The only papers (see \cite{marbach11,bordenave}) that provide some insight on the topology dependence of the network performance do so for simplified MAC protocols that are not implemented in practice. In particular, Marbach et al. \cite{marbach11} provide an approximate inner bound on the stability region of a multi-hop network for a simplified CSMA/CA MAC, and Bordenave et al. \cite{bordenave} provide an approximate characterization of the network stability region for slotted ALOHA; however, neither work considers any QoS objective.

Another line of work studies the problem of joint rate control, routing and scheduling (\cite{jain-etal03interference, lin-shroff04rate-control, lin-shroff06imperfect-scheduling}) for throughput utility optimization in a given network graph. The scheduling algorithms proposed in \cite{jain-etal03interference} and \cite{lin-shroff04rate-control} are \emph{centralized}, and therefore, hard to implement in practice. In \cite{lin-shroff06imperfect-scheduling}, the authors propose a distributed rate control and scheduling algorithm (relatively easily implementable) that is guaranteed to achieve at least half the optimal throughput region of a given network. However, none of these papers consider any QoS objective. A related set of work focuses on developing \emph{queue length aware} CSMA/CA type distributed algorithms that can achieve the optimal throughput region, or some guaranteed fraction thereof, of a given network, while also yielding low delay \footnote{Note that these scheduling algorithms essentially allow infinite number of packet retransmissions to ensure 100\% end-to-end delivery}; see \cite{jiang-walrand12delay-distributed-scheduling} and the references therein for a detailed account of the progress in this line of research. However, these queue length based distributed algorithms are not implemented in any commercially available CSMA/CA (e.g., IEEE~802.11, IEEE~802.15.4). Therefore, we shall not pursue this line of work any further in this paper. 

Our current work is aimed as a first step towards bridging the gap between network performance modeling, and network design for a practical CSMA/CA MAC. In particular, we are interested in deriving, from network performance models, insights on how the network performance is affected by the topology and the arrival rate vector, and then convert that insight into useful criteria for network design. 

\subsection{Network model, QoS objectives, throughput regions, and some notation}
\label{subsec:nw-model-throuput-notation}
\subsubsection{Network model}
\label{subsubsec:setting}
In this paper we focus on developing criteria for the design of tree topologies with bounded path lengths; in the context of wireless sensor networks, such a problem arises when designing a QoS aware one-connected network spanning the sources and the base-station, placing additional relays if required, such that the number of hops from each sensor to the base-station is bounded by a given number, say, $h_{\max}$. 
 
We consider a tree network $T=(V, E_T)$ spanning a set of source nodes, $Q$ (including the BS), with $|Q|=m+1$, and a set of relays $R_T$ such that $V= Q\cup R_T$. We assume that the edges in $E_T$ are such that the packet error probability on any edge (for fixed packet lengths, non-adaptive modulation, fixed bit rates, and fixed transmit powers) is no more than a target value, $l$; additional relays may be needed to limit link lengths in order to achieve such a bound on packet error probabilities. We assume that the source-sink path lengths in $T$ obey a given hop constraint, $h_{\max}$. As explained in \cite{bhattacharya-kumar14tr-qos-aware-nw-design-csma}, the choice of such a hop constraint may be affected by several factors, including the RF propagation in the given environment, and the physical characteristics of the mote hardware. In this work, we shall not concern ourselves with details of the particular choice of $h_{\max}$. We further assume that the nodes use IEEE~802.15.4 CSMA/CA for medium access. Moreover, all the nodes are assumed to be in the carrier sense range of one another. We shall refer to this assumption as the \emph{No Hidden Nodes (NH)} assumption. 

\gap
\noindent
\textbf{Justification for the NH assumption}

While the NH assumption may seem restrictive, observe that by using a cluster based network topology where within each cluster, the nodes satisfy the NH assumption, and different channels are used across adjacent clusters (we recall that there are 16 channels available for IEEE~802.15.4 CSMA/CA \cite{IEEE}), moderate to large areas can be covered. One way of constructing such a cluster based topology is explained below. 

We assume that each cluster will cover a regular hexagonal region in the plane, and within each cluster, there will be a single BS placed at the center of the hexagon. Suppose the carrier sense range of each node is $r_{\mathrm{cs}}$, and, for a fixed packet size and a given path loss model, the maximum allowed link length to ensure the target link PER $l$ is $r$. Further suppose that $r$ is chosen such that $r_{\mathrm{cs}}=2mr$, for some integer $m > 0$. 

Having thus chosen $r$, \emph{we let the length of each side of a hexagonal cluster be $mr$}. Then, it follows from triangle inequality that all nodes within a hexagonal cluster are within CS range of each other.

\begin{figure}[h]
\begin{center}
\includegraphics[width=9cm]{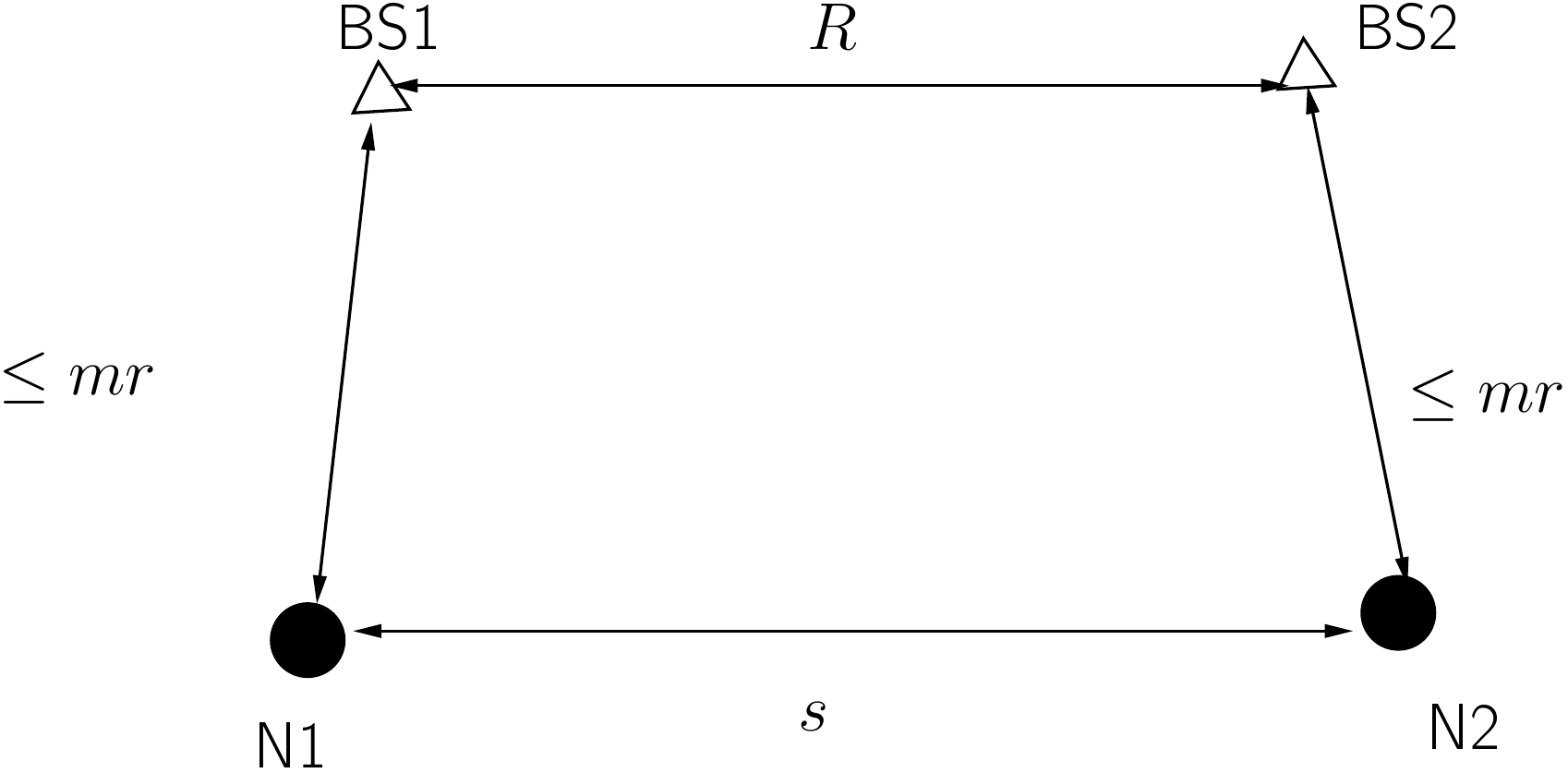}
\caption{Computation of the minimum distance between nodes in two different clusters}
\label{fig:dist_between_bs}
\vspace{-5mm}
\end{center}
\end{figure}
Now consider two distinct clusters. Let $R$ be the distance between the BSs in these two clusters. Consider a pair of nodes, one picked from each of these clusters. Let $s$ be the distance between these two nodes. See Figure~\ref{fig:dist_between_bs}. It follows from triangle inequality that
\begin{align}
R &\leq s + 2mr\nonumber\\
\Rightarrow s &\geq R - 2mr\nonumber
\end{align}
Thus, the distance between any pair of points across the two clusters is at least $R - 2mr$.

We can use the same channel in these two clusters if the nodes in one cluster are completely hidden from those in the other, so that they do not interfere with one another. This is ensured if $R - 2mr\geq r_{\mathrm{cs}}= 2mr$. Thus, it suffices to have $R\geq 4mr$. In other words, we can use the same channel in two different clusters if the distance between the BSs in these two clusters is at least $4mr$.

\begin{figure}[h]
\begin{center}
\includegraphics[width=9cm]{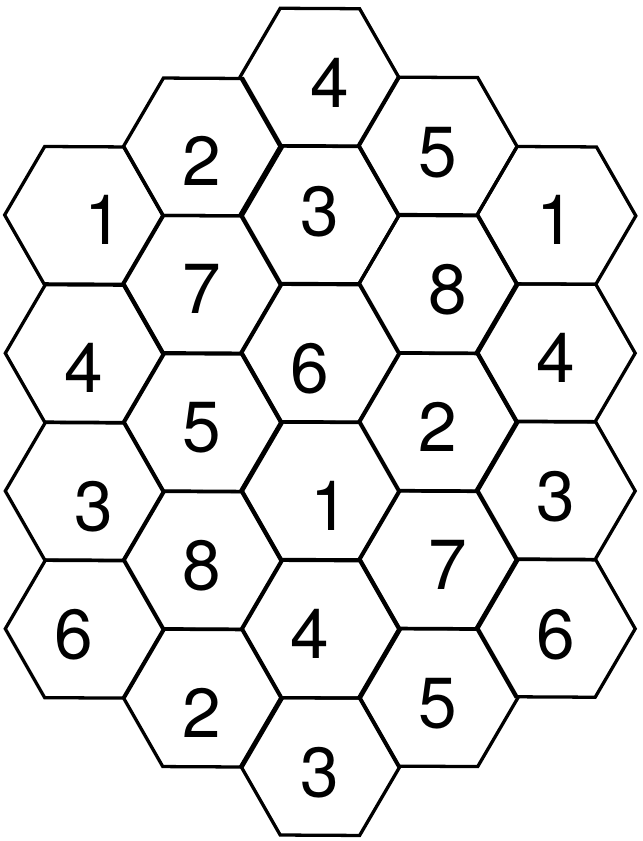}
\caption{Channel assignment to hexagonal clusters to ensure no interference across clusters}
\label{fig:hexagonal_nh}
\vspace{-5mm}
\end{center}
\end{figure}
Now consider the channel assignment in the hexagonal cluster layout shown in Figure~\ref{fig:hexagonal_nh}, where the length of each side of each hexagon is $mr$. It can be verified using elementary geometry that the BSs of any two co-channel clusters are separated by more than $4mr$, and hence with this channel assignment, there is no interference across clusters, while within each cluster, the NH assumption is satisfied. Note than we needed to use only 8 channels out of the available 16 channels. 

\subsubsection{QoS objectives}
\label{subsubsec:qos}

In order to design multi-hop networks that provide QoS, or even to route connections over such networks, given the intractability of analysis of multi-hop networks (particularly, CSMA/CA wireless networks), it has been the practice to adopt the approach of splitting the QoS (such as end-to-end mean delay, or delivery probability) over the hops along which a flow or connection is routed (see, for example, \cite[Section~5.10.1]{kmk1}). In this paper we adopt the approach that the end-to-end QoS objectives are \emph{split equally} over the links on each path. That is, to meet the target end-to-end delivery probability objective $p_{\mathrm{del}}$, we require that the packet discard probability on each link is no more than $\overline{\delta}\eqqcolon 1-\exp\left(\frac{\ln p_{\mathrm{del}}}{h_{\max}}\right)$. Similarly, to meet the target end-to-end mean delay objective $d_{\max}$, we require that the mean delay on each link is no more than $\overline{d}\eqqcolon\frac{d_{\max}}{h_{\max}}$.

Note that this splitting of QoS over the links of a path depends on the choice of $h_{\max}$. We discuss below, how the hop constraint, $h_{\max}$, can be chosen in practice. 

\gap
\noindent
\textbf{Choice of $h_{\max}$:} Recall that we define the lone-packet traffic model as one in which there is at most one packet traversing the network at any point in time. Under such a traffic model, there is no contention between links, but packets can still be lost due to channel errors and, therefore, packet discard can happen after a number of attempts (as defined by the protocol). It was proved in \cite{fullpaper} that when the probability of packet delivery is a measure of QoS, it is necessary to meet the QoS constraint under the lone-packet model in order to satisfy the QoS objective under any positive traffic arrival rate. 

Assuming the PER on each link to be the worst case link PER $l$, the mean delay on a link under the lone-packet model can be computed using an elementary analysis (see \cite{abhijitmeth}), taking into account the backoff behavior of IEEE~802.15.4 CSMA/CA, and using the backoff parameters given in the standard \cite{IEEE}. Then, to meet the mean delay requirement of $d_{\max}$ on a path with $h$ hops, we require that

\begin{equation}
h\leq \left\lfloor\frac{d_{\max}}{\overline{d}_{single-hop}}\right\rfloor\triangleq h_{\max}^{delay}
\end{equation}
where, $\overline{d}_{single-hop}$ is the mean link delay computed as explained earlier.

Again, assuming the worst case link PER of $l$, and the number of transmission attempts $n_t$ (obtained from the standard) before a packet is discarded on a link, the packet discard probability on a link \emph{under the lone-packet model} can be obtained as $\delta\:=\:l^{n_t+1}$. Hence, to ensure a packet delivery probability of at least $p_{\mathrm{del}}$ on a path with $h$ hops, we require that (assuming packet losses are independent across links)

\begin{eqnarray}
h &\leq \frac{\ln{p_{\mathrm{del}}}}{\ln{(1-\delta)}}&\triangleq h_{\max}^{delivery}
\end{eqnarray}

Hence, to ensure the QoS constraints under the lone-packet model, we require that the \emph{hop count on each path is upper bounded by $\min\{h_{\max}^{delay},h_{\max}^{delivery}\}$}.

Further, enforcing the \emph{no hidden nodes} assumption may also require us to constrain the hop length of each path, as discussed below. Suppose the carrier sense range of each node is $r_{\mathrm{cs}}$, and, for a fixed packet size and a given path loss model, the maximum allowed link length to ensure the target link PER $l$ is $r$. Then, if $h_{\max}$ is the maximum number of hops on any source-sink path, it follows by triangle inequality that to enforce the \emph{no hidden nodes} assumption, it suffices to have $2h_{\max}r\leq r_{\mathrm{cs}}\Rightarrow h_{\max}\leq \frac{r_{\mathrm{cs}}}{2r}\eqqcolon h_{\max}^{\mathrm{no-hidden}}$. 

Thus, if we define $\overline{h}_{\max}\eqqcolon \min\{h_{\max}^{delay},h_{\max}^{delivery},h_{\max}^{\mathrm{no-hidden}}\}$, then by choosing the hop constraint $h_{\max}$ such that $h_{\max}\leq \overline{h}_{\max}$, we can ensure that the end-to-end QoS objectives are met under the lone packet model, and also the \emph{no hidden nodes} condition is enforced. 

Note from the above discussion that the choice of $h_{\max}$ may depend on the RF propagation in the given environment, as well as the physical layer characteristics of the mote hardware. As we had mentioned earlier, we shall not concern ourselves with details of the particular choice of $h_{\max}$ in this work. With the values of $h_{\max}$, $p_{\mathrm{del}}$, $d_{\max}$ being given, we shall assume in the rest of the paper that we are given a target single-hop discard probability, $\deltabar$, and a target single-hop mean delay, $\dbar$.

\subsubsection{Throughput regions}
\label{subsubsec:throughput-def}
Suppose source $k$ generates traffic at rate $\lambda_k$ packets/sec, $k = 1,\ldots, m$, according to some ergodic point process. The throughput region of a given tree network, assuming that the nodes in the network use CSMA/CA for medium access, is defined as 
\begin{description}
\item $\Lambda(T)=\{\underline{\lambda}=\{\lambda_k\}_{k=1}^m:$ the network is \emph{stable}\}
\end{description}
Informally speaking, the network is said to be stable when the queue lengths do not grow unbounded with time.\footnote{See \cite{neely10stochastic-nw-optimization} for different notions of stability.} 

\cite{marbach11} and \cite{jindal09} attempted to characterize this throughput region for a simplified CSMA/CA, and for IEEE~802.11 CSMA/CA respectively for multi-hop networks. Note, however, that this notion of throughput region does not consider any QoS objective.

With our QoS splitting assumptions stated earlier, given a tree network $T$, and assuming the nodes in the network use CSMA/CA for medium access, we define 
\begin{description}
\item $\Lambda_{\deltabar}(T)=\{\underline{\lambda}:$ Under arrival rates $\underline{\lambda}$, the discard probability on each link in the tree $T$ is bounded by $\deltabar$\}
\item $\Lambda_{\dbar}(T)=\{\underline{\lambda}:$ Under arrival rates $\underline{\lambda}$, the mean delay on each link in the tree $T$ is bounded by $\dbar$\}
\item $\Lambda_{\deltabar,\dbar}(T) = \Lambda_{\deltabar}(T)\cap \Lambda_{\dbar}(T)$
\end{description}
Note that $\Lambda_{\deltabar,\dbar}\subset \Lambda$, for otherwise, at least one queue length would grow unbounded, and the mean delay constraint would not be met. 

For a tree $T$ with arrival rates $\lambda_k (packets/sec), 1 \leq k \leq m$, at the sources, we define 
\begin{description}
\item $\nu_i(\underline{\lambda},T):$ The packet rate into node $i$, if no packets are discarded in any node.
\item $h_k(T):$ The hop count on the path from source $k$ to the sink in tree $T$.
\end{description}  

\vspace{1mm}
For a given tree $T$, we then define the following throughput regions in terms of (i) a detailed fixed point analysis provided in \cite{srivastava}, (ii) a simplified version of this fixed point analysis that we provide in this paper, and (iii) numbers $B(\overline{\delta})$ and $B^{\prime}(\overline{\delta}, \overline{d})$ for which we will provide explicit formulas in terms of the parameters of the CSMA/CA protocol.

\begin{description}
\item $\hLambda_{\deltabar}(T)=\{\underline{\lambda}:$ Under arrival rates $\underline{\lambda}$, the discard probability on each link is bounded by $\deltabar$, \emph{according to the detailed analysis}\footnote{This means that if we analyze the tree $T$ under the given arrival rate vector $\underline{\lambda}$ using the detailed analysis, then the QoS values obtained from the analysis satisfy the target requirements.} presented in \cite{srivastava}\}

\item $\hLambda_{\dbar}(T)=\{\underline{\lambda}:$ Under arrival rates $\underline{\lambda}$, the mean delay on each link is bounded by $\dbar$, \emph{according to the detailed analysis} presented in \cite{srivastava}\}

\item $\hLambda_{\dbar,\deltabar}(T)=\hLambda_{\dbar}(T)\cap\hLambda_{\deltabar}(T)$

\item $\hhLambda_{\deltabar}(T)=\{\underline{\lambda}$: Under arrival rates $\underline{\lambda}$, the discard probability on each link is bounded by $\deltabar$, \emph{according to the simplified analysis} presented in Section~\ref{subsec:simplified-analysis}\}

\item $\hhLambda_{\dbar}(T)=\{\underline{\lambda}$: Under arrival rates $\underline{\lambda}$, the mean delay on each link is bounded by $\dbar$, \emph{according to the simplified analysis} presented in Section~\ref{subsec:simplified-analysis}\}

\item $\hhLambda_{\dbar,\deltabar}(T)=\hhLambda_{\dbar}(T)\cap\hhLambda_{\deltabar}(T)$

\item $\tLambda_{\deltabar}(T)= \{\underline{\lambda}:\sum_{k=1}^m\lambda_kh_k(T) < B(\deltabar)\}$  

\item $\tLambda_{\dbar,\deltabar}(T)=\{\underline{\lambda}:\sum_{k=1}^m\lambda_kh_k(T) < B(\deltabar) \text{ and }\max_{1\leq i\leq N}\nu_i \leq B^{\prime}(\deltabar,\dbar)\}$ 
\end{description}

\subsection{Contributions in this paper}
\label{subsec:contributions}
With the above definitions, the main contributions of this paper are summarized as follows: 

\begin{enumerate}
\item For the regime where packet discard probability is small, we obtain a simplified set of fixed point equations in Section~\ref{subsec:simplified-analysis}. Unlike \cite{srivastava}, \emph{we are able to show the uniqueness of the fixed point of our simplified equations} in Section~\ref{subsec:vector-uniqueness}.

\item Using the simplified fixed point analysis, we provide expressions for $B(\overline{\delta})$ and $B^{\prime}(\overline{\delta}, \overline{d})$ such that the following set relationships hold:

\begin{equation*}
\tLambda_{\dbar,\deltabar}(T)\stackrel{1}{\subset}\hhLambda_{\dbar,\deltabar}(T)\stackrel{2}{\approx}\hLambda_{\dbar,\deltabar}(T)
\end{equation*}

The set inequality~1 has been established in Sections~\ref{sec:arguments-leading-to-thm2} and \ref{sec:arguments-leading-to-thm3} (see Theorems~\ref{thm:membership-Lambda-delta} and \ref{thm:membership-lambda-dmax}), where, in the process, we provide expressions for the functions $B(\deltabar)$, and $B^{\prime}(\deltabar,\dbar)$. The ``tightness'' of Inequality~1 has been evaluated through numerical experiments (see Sections~\ref{subsubsec:equality-validation}, \ref{subsubsec:upper-bound-validation}, and Table~\ref{tbl:spt-throughput} in Section~\ref{subsubsec:comparison-against-sptirp}). Approximation~2 has been verified through extensive numerical experiments in Section~\ref{subsubsec:vector-fp-accuracy} for the regime where our simplified fixed point equations have a unique solution. 

These results are related to the throughput region of the original system, $\Lambda_{\deltabar,\dbar}(T)$, through the following approximation which was shown to be very accurate (well within 10\% in the regime where packet discard probability on a link is about the same as the PER on the link) in \cite{srivastava} by extensive comparison against simulations:

\begin{equation*}
\hLambda_{\dbar,\deltabar}(T)\approx\Lambda_{\deltabar,\dbar}(T)
\end{equation*}

\gap
\noindent
\remark It follows from the above set relationships that the explicitly defined set $\tLambda_{\dbar,\deltabar}(T)$ can be taken as an \emph{approximate inner bound} to the original throughput region, $\Lambda_{\deltabar,\dbar}(T)$. Furthermore, our numerical experiments suggest that $\tLambda_{\dbar,\deltabar}(T)$ captures a significant part of the throughput region $\hLambda_{\dbar,\deltabar}(T)$ (which, in turn, is a good approximation for $\Lambda_{\deltabar,\dbar}(T)$). In particular, when the QoS targets are in the range $\deltabar \geq 0.0209$, and $\dbar \geq 20$ msec, for the tested network topologies, it follows from our approximate inner bound that an arrival rate of at least 2-3 packets/sec from each source can be handled without violating QoS (see Table~\ref{tbl:spt-throughput} in Section~\ref{sec:numerical-results}, and the discussion thereafter). This arrival rate is more than enough for many sensor networking applications including those of industrial telemetry, and non-critical monitoring and control applications \cite{telemetry1,telemetry2}.

\item Finally, for the special case of equal arrival rates at all the sources, default backoff parameters of IEEE~802.15.4 CSMA/CA, and a range of target values $\deltabar$ and $\dbar$, we have shown in Section~\ref{sec:equal-arrival-rates} that 
\begin{align}
\tLambda_{\deltabar}(T)=\tLambda_{\dbar,\deltabar}(T)\label{membership:suff-condition-delta-suff-condition-delta-dbar}
\end{align}

Furthermore, let us define, for any tree $T$,
\begin{align}
\tilde{\lambda}(T) \eqqcolon \max_{\lambda: \lambda\mathbf{1}\in\tLambda_{\dbar,\deltabar}(T)}\lambda \nonumber
\end{align}
where, $\mathbf{1}$ is the vector of all 1's having the same length as the number of sources. 

Let $T^\ast$ be \emph{any shortest path tree spanning the sources and rooted at the sink}. Then, for any other tree $T$ spanning the sources and rooted at the sink, we argue in Section~\ref{sec:equal-arrival-rates} that
\begin{align}
\tilde{\lambda}(T)\leq \tilde{\lambda}(T^\ast)\label{eqn:spt-throughput-optimality}
\end{align}
\end{enumerate}

\section{Modeling CSMA/CA for a Network with No Hidden Nodes}
\subsection{The beaconless IEEE~802.15.4 CSMA/CA protocol\cite{srivastava}}
\label{subsec:csma-description}
When a node has data to send (i.e., has a non-empty queue), it initiates a random back-off with the first back-off period being sampled uniformly from $0$ to \emph{${2^{macminBE}-1}$}, where \emph{macminBE} is a parameter fixed by the standard. For each node, the back-off period is specified in terms of slots where a slot equals 20 symbol times ($T_{\mathsf{s}}$), and a symbol time equals 16 $\mu s$. The node then performs a CCA (\emph{Clear Channel Assessment}) to determine whether the channel is idle. We note that, unlike the IEEE 802.11 CSMA/CA, the back-off timer of a node is not ``frozen'' during the transmissions of other nodes. If the CCA succeeds, the node does a \emph{Rx-to-Tx turnaround}, which is $12$ symbol times, and starts transmitting on the channel. The failure of the CCA starts a new back-off process with the back-off exponent raised by one, i.e., to \emph{macminBE+1}, provided that this is less than its maximum value, \emph{macmaxBE}. The maximum number of successive CCA failures for the same packet is governed by \emph{macMaxCSMABackoffs}, exceeding which the packet is discarded at the MAC layer. The standard allows the inclusion of acknowledgements (ACKs) which are sent by the intended receivers on a successful packet reception. Once the packet is received, the receiver performs a \emph{Rx-to-Tx turnaround}, which is again $12$ symbol times, and sends a $22$ symbol fixed size ACK packet. A successful transmission is followed by an \emph{InterFrame Spacing}(IFS) before sending another packet.

When a transmitted packet collides or is corrupted by the PHY layer noise, the ACK packet is not generated, which is interpreted by the transmitter as failure in delivery. The node retransmits the same packet for a maximum of \emph{aMaxFrameRetries} times before discarding it at the MAC layer. After transmitting a packet, the node turns to the Rx-mode and waits for the ACK.  The \emph{macAckWaitDuration} determines the maximum amount of time a node must wait for in order to receive the ACK before concluding that the packet (or the ACK) has collided. The default values of \emph{macminBE}, \emph{macmaxBE}, \emph{macMaxCSMABackoffs}, and \emph{aMaxFrameRetries} are 3, 5, 4, and 3 respectively.

\subsection{Detailed fixed point equations for NH case}
\label{subsec:full-analysis}
Although there has been considerable research on analytical modeling of CSMA/CA, the work reported in \cite{rachitpaper}, \cite{srivastava}, in the context of beacon-less IEEE~802.15.4 CSMA/CA, appears to be the most comprehensive (capturing aspects such as multi-hopping, hop-by-hop queueing, presence of acknowledgements, packet collisions due to presence of hidden terminals, and dilation of perceived channel activity period due to hidden terminals, etc.), and also very accurate. We base our work in this paper on a simplified version of the detailed fixed point analysis reported in \cite{rachitpaper,srivastava}. For ease of reference, we briefly describe the modeling philosophy of their analysis here, and extract from that analysis the main equations for the NH case.

\subsubsection{The overall modeling approach (\cite{rachitpaper}, \cite{srivastava})}

In \cite{rachitpaper, srivastava}, akin to the approach in \cite{bianchi00performance} and \cite{kumar-etal04new-insights}, a ``decoupling'' approximation is made, whereby each node is modeled separately, incorporating the influence of the other nodes in the network by their average statistics, and as if these nodes were independent of the tagged node. In the literature, such an approach has also been called a ``mean field approximation,'' and formal justification has been provided in, for example, \cite{bordenave}.

\gap
\noindent
\emph{Modeling the activity of a tagged node, say $i$:} All packets entering node $i$ are assumed to have the same fixed length, and hence they take the same amount of time when being transmitted over the medium (denoted by $T_{\mathsf{tx}}$). In a network with no hidden nodes, the channel activity perceived by node $i$ is due to all the other nodes in the network. Each node alternates between periods during which its transmitter queue is empty and those during which the queue is non-empty. During periods in which its transmitter queue is non-empty, when the node is not transmitting it is performing repeated backoffs for the head-of-the-line (HOL) packet in its transmitter queue. Each backoff ends in a CCA attempt. Figure~\ref{fig:ARP_calc_of_rates} is a depiction of this alternation in node behavior during the periods when its queue is nonempty. We employ the decoupling approximation to analyse this process.
\begin{figure}[h]
\begin{center}
\includegraphics[width=9cm]{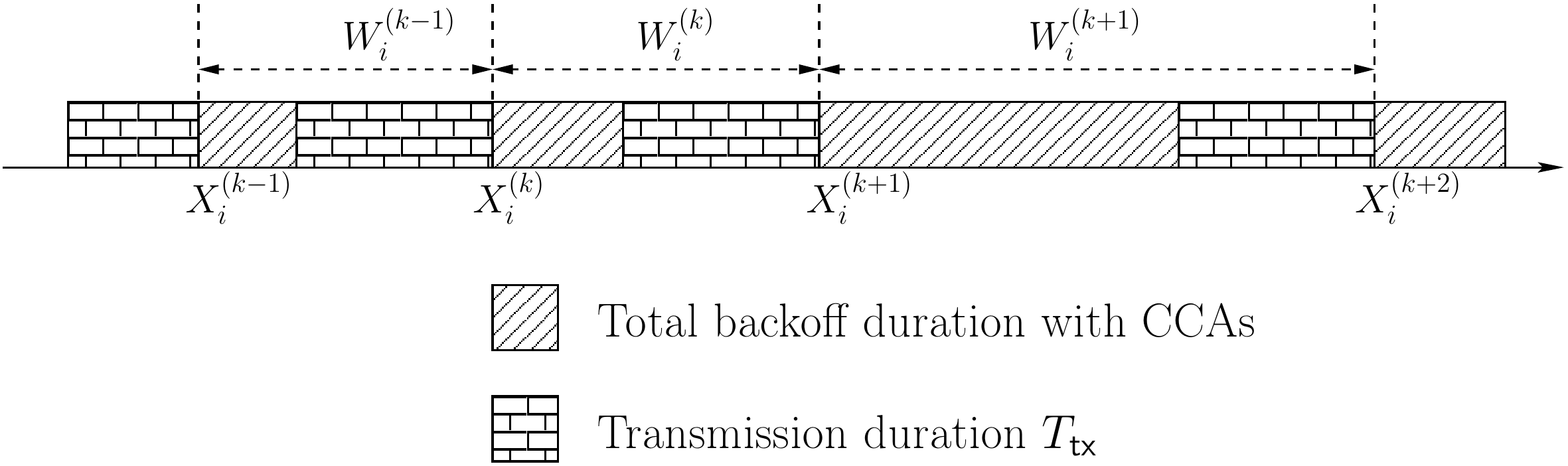}
\caption{(Taken from \cite{srivastava}) The alternating (contention-transmission) process obtained by observing the process at a node $i$ after removing all the idle time periods from the original process. The transmission completion epochs are denoted by $\{X_i^{(k)}\}$ and the contention-transmission cycle lengths by $\{W_i^{(k)}\}$.}
\label{fig:ARP_calc_of_rates}
\vspace{-5mm}
\end{center}
\end{figure}

Focusing on durations during which Node~$i$'s queue is nonempty, consider the completion of a transmission by Node~$i$; this could have been a success or a collision. Packet collisions can occur even in the absence of hidden nodes since transmissions from two transmitters placed within the CS range of each other can overlap (known as \textbf{Simultaneous Channel Sensing}) as depicted in Figure~\ref{fig:simul_channel}. 
\begin{figure}[h]
\begin{center}
	\includegraphics[scale=0.45]{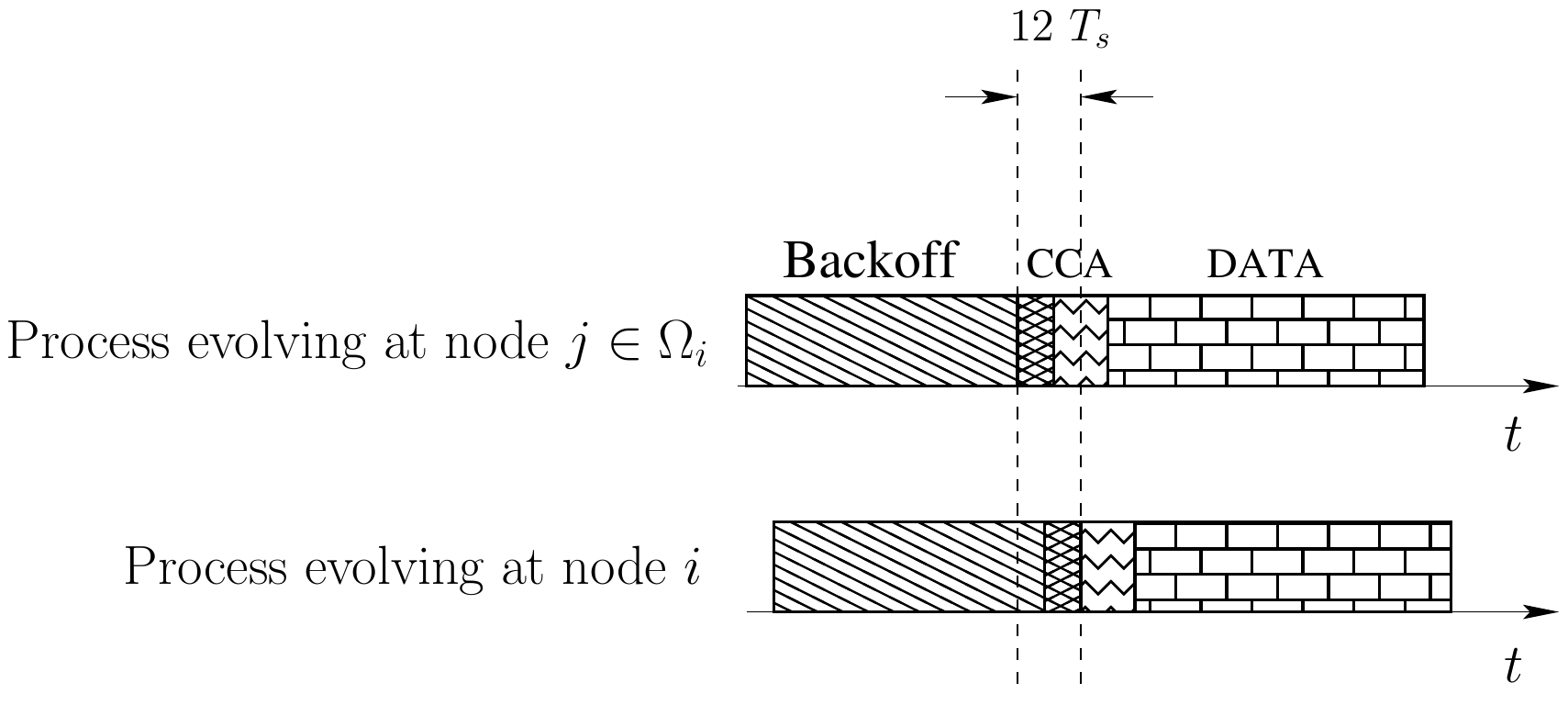}
	\caption{(Taken from \cite{srivastava}) Node $j$ finishes its backoff, performs a CCA, finds the channel idle and starts transmitting the DATA packet. Node $i$ finishes its backoff anywhere in the shown $12~T_s$ duration and as there is no other ongoing transmission in the network, its CCA succeeds and it enters the transmission duration. As a result, the DATA packets may collide at receiver of $i$ or that of $j$.}
	\label{fig:simul_channel}
	\vspace{-5mm}
\end{center}
\end{figure}

Due to the small propagation delay, and the no hidden nodes assumption, we neglect the up to 12 additional symbol durations for which other simultaneous transmissions could last. After the completion of its transmission, the node $i$ contends for the channel by executing successive backoffs; the nodes $j \neq i$ either have non-empty queues and are also contending, or have empty queues and are not contending. With these observations in mind we define the following CCA attempt rates. The CCA attempt process at node $i$ \emph{conditioned on being in backoff periods} is modeled as a Poisson process of rate $\beta_i$. For each node $j\ne i$, the CCA attempt process, conditioned on $j$'s queue being non-empty and being in backoff, or being empty, is modeled by an independent Poisson process with rate $\overline{\tau}_j^{(i)}$, $j\ne i$. This is the rate of attempts of Node~$j$ as perceived by Node~$i$.  

Now let us return to the process depicted in Figure~\ref{fig:ARP_calc_of_rates}, i.e., the alternating contention and transmission periods of Node~$i$ when its queue is nonempty. As a result of the assumption of independent Poisson attempt processes at the nodes, it can be observed that the instants $X^{(k)}_i$ are renewal instants, and the intervals $W_i^{(k)}$ form an
i.i.d.\footnote{Independently and identically distributed} sequence. 

With the above observations and definitions, the analysis in \cite{srivastava} now proceeds as follows. During a contention period of Node~$i$, the system alternates between nodes counting down their backoff counters, and some node other than $i$ transmitting. If a node other than $i$ transmits, then CCA attempts by $i$ during these times lead to $i$'s CCA failures. When a CCA does succeed, a collision can occur by the CCA of another node being performed within 12 symbol times of the first successful CCA. Given $\beta_i$ and $\tau_j^{(i)}$, and the independent Poisson processes assumption, these details can be analysed, leading to equations relating the collision probability, and the CCA failure probability to the assumed rates of CCA attempts. Moreover, the CCA attempt rate conditioned on being in backoff periods, $\beta_i$, can also be expressed in terms of the CCA failure probability of Node~$i$ using renewal theoretic arguments similar to those in \cite{kumar-etal04new-insights}. Given the collision probabilities, and the CCA failure probabilities, quantities such as the packet discard probability at a node, queue non-empty probability of a node, and the probability that a node is in back-off conditioned on its queue being non-empty can be analysed. Finally, having obtained these probabilities, the CCA attempt rates, $\tau_j^{(i)}$, can be expressed in terms of these probabilities using renewal theoretic arguments. The resulting fixed point equations, for the NH case, are shown in the next section.

The analysis of packet delay between a source node and the sink node utilizes the approximation techniques in Whitt's Queueing Network Analyzer (QNA, \cite{whitt83queueing-network-analyzer}). The arrival processes at the source nodes are modeled as independent Poisson processes, with given first two moments of the interarrival times. We recall that our network is a tree, rooted at the sink. Thus, each interior node receives packets from one or more upstream nodes, and feeds exactly one downstream node. Each node is modeled by a GI/GI/1 queue. The fixed point equations obtained as explained earlier are iterated until a convergence tolerance is met. The resulting values of the attempt rates, collision rates, CCA failure rates, etc., can be used to obtain the first two moments of the time spent by an HOL packet at a queue, i.e., the service time at the queue. Given the first two moments of the interarrival time and of the service time, QNA uses a standard GI/GI/1 mean delay approximation for the mean delay at a queue. QNA also provides a method for approximately modeling the superposition of the output processes of two or more GI/GI/1 queues by a renewal process, and thus provides approximations for the first two moments of the interarrival times into downstream queues. In this manner, starting with the model of the input processes at the source nodes, the end-to-end mean delay can be approximated. 

In \cite{srivastava}, the authors find that the analysis is accurate to within 10\% in the regime where the discard probability on a link is about the same as the PER on that link.

\subsubsection{Main equations}

Assuming that the CCA attempt process at each node $i$, conditioned on the node being in backoff, is distributed as $\exp(\beta_i)$, the CCA failure probability, $\alpha_i$, of node $i$ is given by
\begin{eqnarray}
\alpha_i &=& \frac{(1-\eta_i)(1-c_i)\beta_i T_{tx}}{\eta_i + (1-\eta_i)c_i + (1-\eta_i)(1-c_i)\beta_i T_{tx}}
\label{eqn:nh-alpha}
\end{eqnarray}
where, $T_{tx}$ is the packet transmission duration, $\eta_i$ is the probability that node $i$ finishes its backoff first, and $c_i$ is the probability that node $i$ finishes its backoff within 12 symbol times (the turnaround time) after another node finishes its backoff. This equation can be interpreted as a consequence of the Renewal-Reward Theorem (RRT)(e.g., see \cite{ross07prob-models}) as follows: the numerator is the mean number of failed CCA attempts by node $i$ in a renewal cycle (see Figure~\ref{fig:ARP_calc_of_rates}), and the denominator is the mean number of total CCA attempts by node $i$ in a renewal cycle. 

\begin{align*}
\eta_i &= \frac{\beta_i}{\beta_i + \displaystyle{\sum_{j\ne i}\overline{\tau}_j^{(i)}}} \\
c_i &= \Bigg(1-e^{-12\beta_i}\Bigg) 
\end{align*}
where, $\overline{\tau}_j^{(i)}$ is the CCA attempt rate of node $j$ \emph{conditioned on the times during which it is either empty or in backoff}, and is given by 
\begin{align}
\overline{\tau}_j^{(i)}&= \frac{\beta_j \times b_j \times q_j}{1 - q_j + q_j \times b_j}\label{eqn:nh-1}
\end{align}
where, $b_j$ is fraction of time node $j$ is in backoff given that it is non-empty, and $q_j$ is the probability that node $j$ is non-empty. The above expression can also be interpreted as a consequence of the RRT: the numerator is the mean number of CCA attempts by node $j$ in a renewal cycle, and the denominator is the mean time during which the node $j$ is not transmitting in a renewal cycle (both obtained after normalizing the renewal cycle length to unity). Also, note that at the start of the renewal cycle, node $j$ can be either empty, or in backoff (i.e., it cannot be transmitting); hence $\overline{\tau}_j^{(i)}$ is also the CCA attempt rate of node $j$ \emph{as perceived by the tagged node $i$}. We also have,
\begin{eqnarray}
b_i=\frac{\overline{B}_i}{\overline{B}_i + (1-\alpha_i^{n_c})T_{\mathsf{tx}}}
\label{eqn:nh-calc_of_b}
\end{eqnarray}
where, $\overline{B}_i$ is the mean time spent in backoff by the HOL packet at node $i$ before it is transmitted, or gets discarded due to successive CCA failures. $\overline{B}_i$ is a function of the backoff parameters of the protocol, and the maximum allowed successive CCA failures, $n_c$, before a packet is discarded. 

\begin{eqnarray}
q_i=\min\{1,\frac{\nu_i}{\sigma_i}\}
\label{eqn:nh-q_arrival}
\end{eqnarray}
where, $\nu_i$ is the total arrival rate into node $i$, and $\frac{1}{\sigma_i}$ is the mean service time of the HOL packet at node $i$. 

The packet failure probability, $\gamma_i$, at node $i$ is given by
\begin{eqnarray}
\gamma_i &=& p_i +(1-p_i )l
\label{eqn:nh-gamma}
\end{eqnarray}
where, $l$ is the PER on the outgoing link from node $i$, and $p_i$ is the collision probability for a packet transmitted from node $i$ (i.e., already conditioned on the fact that it did not encounter a CCA failure), and is given by
\begin{eqnarray}
p_i &=& \frac{R_i^{(3)}+R_i^{(4)}}{\eta_i+(1-\eta_i)c_i}									
\label{eqn:nh_p_hidden}
\end{eqnarray}
where,
\begin{eqnarray*}
\scriptstyle{R_i^{(3)}} &\scriptstyle{=}& \scriptstyle{\eta_i\Bigg(1-\exp \Bigg\{-12\Bigg( \displaystyle{\sum_{j\ne i}}\overline{\tau}_j^{(i)} \Bigg) \Bigg\}\Bigg)}
\end{eqnarray*}
is the unconditional probability that node $i$ finishes its backoff first, and then some other node finishes its backoff within the vulnerable period of 12 symbol times, thereby causing a collision.
 
\begin{eqnarray*}
R_i^{(4)} &=& (1-\eta_i)c_i 
\end{eqnarray*}
is the unconditional probability that some other node finished its backoff first, and then node $i$ finished its backoff within the vulnerable period of the aforementioned node, thus resulting in a collision.

$r_i = \gamma_i(1-\alpha_i^{n_c})$ denotes the probability that the HOL packet at node $i$ was transmitted (i.e., it did not encounter $n_c$ successive CCA failures), and it encountered a transmission failure (either due to collision or due to link error).
 
Then, the mean service time of the HOL packet at node $i$ is given by
\begin{eqnarray}
\frac{1}{\sigma_i}=\overline{Z}_i + \overline{Y}_i
\label{eqn:nh-sigma}
\end{eqnarray}
where,
\begin{align}
\overline{Z}_i &=\overline{B}_i(1+r_i+r_i^2+\ldots +r_i^{n_t-1})\label{eqn:nh-z}\\
\overline{Y}_i &=(1-\alpha_i^{n_c})T_{\mathsf{tx}}(1+r_i+r_i^2+\ldots +r_i^{n_t-1})\label{eqn:nh-y}
\end{align}
denote respectively, the mean time spent in backoff by the HOL packet, and the mean time spent in transmission by the HOL packet. Here $n_t$ is the number of transmission failures after which the packet is discarded. 

The packet discard probability at node $i$, denoted $\delta_i$, is given by
\begin{align}
\label{eqn:nh-delta}
\delta_i&= \alpha_i^{n_c}(1+r_i+r_i^2+\ldots +r_i^{n_t-1})+r_i^{n_t}
\end{align}

Then, the total arrival rate into node $i$ is given by
\begin{eqnarray}
\nu_i=\lambda_i+\sum_{k\in\mathcal{P}_i}\theta_k
\label{eqn:nh-nu}
\end{eqnarray}
where, 
\begin{eqnarray}
\theta_k=\nu_k(1-\delta_k)
\label{eqn:nh-theta_arrival}
\end{eqnarray}
is the goodput of node $k$. 

Finally, we also have
\begin{eqnarray}
\beta_i=\frac{1+\alpha_i+\alpha_i^2+\ldots +\alpha_i^{n_c-1}}{\overline{B}_i}
\label{eqn:nh-beta}
\end{eqnarray}
This can again be interpreted as a consequence of the RRT, where the numerator is the mean number of CCA attempts for the HOL packet at node $i$, and the denominator is the mean time spent in backoff by the HOL packet at node $i$. 
 
Suppose that the given tree (with $N$ nodes) satisfies the hop constraint $h_{\max}$ on the path from each source to the sink. Suppose $L_j$ is the set of nodes on the path from source $j$ to the sink. Then, assuming packet discards are independent across nodes, \emph{the end-to-end delivery probability} of source $j$ is given by $\prod_{i\in L_j}(1-\delta_i)$, where $\delta_i$ is the packet discard probability at node $i$ as defined earlier.

Another set of approximations were proposed in \cite{srivastava} based on Whitt's Queueing Network Analyzer \cite{whitt83queueing-network-analyzer} to approximate the end-to-end delay from a source to the sink. The end-to-end mean packet delay for a source node $j$, provided that the set of nodes along the path from this node to the BS is $L_j$, is given by
\begin{eqnarray}
\Delta_j &=& \displaystyle{\sum_{i\in L_j}\overline{\Delta}_i}
\end{eqnarray}
where, $\overline{\Delta}_i$ is the  mean sojourn time at a node $i$, and is given by

\begin{eqnarray}
\overline{\Delta}_i &=& \frac{\rho_i\mathbb{E}(S_i)(c_{A_i}^2 + c_{S_i}^2)}{2(1-\rho_i)} + \mathbb{E}(S_i)
\end{eqnarray}
Here, $\rho_i$ denotes the traffic load at node $i$, $S_i$ denotes the service time at node $i$, $c_{S_i}^2$ is the squared coefficient of variance of service time at node $i$, and $c_{A_i}^2$ is the squared coefficient of variance for the interarrival times at node $i$. We also have,

\begin{eqnarray*}
c_{S_i}^2 &=& \frac{\mathbb{E}(S_i^2)}{(\mathbb{E}(S_i))^2} - 1\\
\mathbb{E}(S_i) &=& -\frac{d(M_{S_i}(z))}{dz}\Bigg \vert_{z=0} \\
\mathbb{E}(S_i^2) &=& \frac{d^2(M_{S_i}(z))}{dz^2}\Bigg \vert_{z=0}
\end{eqnarray*}
where, $M_{S_i}(z)$ is the MGF of the service time of node $i$, and can be expressed as,
\begin{eqnarray*}
M_{S_i}(z) &=& \frac{\beta_i(1-\alpha_i)(1-\gamma_i)e^{-zT_{\mathsf{tx}}}}{z+\beta_i(1-\alpha_i)(1-\gamma_ie^{-zT_{\mathsf{tx}}})}.
\end{eqnarray*}
with $\beta_i$, $\alpha_i$, $\gamma_i$ and $T_{\mathsf{tx}}$ having the same interpretation as in Section~\ref{subsec:full-analysis}. 

Further, $c_{A_i}^2$ is calculated as 
\begin{eqnarray*}
c_{A_i}^2 &=& \frac{1}{\Lambda_i}\Bigg(\lambda_i + \displaystyle{\sum_{j\in \mathcal{P}_i}}\Lambda_jc_{D_j}^2\Bigg)
\end{eqnarray*}
where, $\Lambda_i$ is the total arrival rate into Node~$i$, $\lambda_i$ is the exogeneous Poisson arrival rate into Node~$i$ ($\lambda_i=0$ if Node~$i$ is a relay node), $\mathcal{P}_i$ is the set of predecessors of node $i$, and 
\begin{eqnarray*}
c_{D_i}^2 &=& (1-\delta_i)(1 + \rho_i^2(c_{S_i}^2 - 1) + (1-\rho_i^2)(c_{A_i}^2 - 1))
\end{eqnarray*}

It is evident from the complex structure of the fixed point equations displayed above that it is difficult to use this analysis to extract insight about the relation between network topology and the QoS measures. We therefore, in the next section, resort to simplifying these detailed equations in a certain regime of network operation.

\subsection{Simplified fixed point equations in the low discard regime}
\label{subsec:simplified-analysis}
For practical purposes, we are primarily interested in modeling the behavior of CSMA/CA for arrival rates at which the discard probability is low. We now proceed to further simplify the above fixed point equations in this low discard regime by making the following approximations:

\begin{description}
\item [(A1)] $\delta_i\approx 0$ so that, by Eqn.~\eqref{eqn:nh-nu}, the total arrival rate into node $j$ is $\nu_j\approx \sum_{k=1}^m z_{k,j}\lambda_k$, where $z_{k,j}=1$ iff node $j$ is in the path of source $k$, and $z_{k,j}=0$ otherwise. $\lambda_k$ is the exogeneous arrival rate at source $k$. 
\item [(A2)] $r_i\ll 1$, i.e., the probability that the HOL packet encounters a transmission failure is very small.
\item [(A3)] $q_i\ll 1$, i.e., the fraction of time that the queue of node $i$ is non-empty is very small.
\item [(A4)] $c_i \approx 0$, i.e., the probability that node $i$ finishes its backoff within 12 symbol times after another node has finished backoff is negligibly small.
\item [(A5)] $\gamma_i \ll 1$, so that $(1-\gamma_i)\approx 1$, i.e., the probability that a transmitted packet encounters a collision or link error is very small.
\end{description}

Let us define
\begin{align*}
\overline{\tau}_{-i}&\eqqcolon \sum_{j\neq i}\overline{\tau}_j^{(i)}
\end{align*}
where $\overline{\tau}_j^{(i)}$ is given by Equation~\eqref{eqn:nh-1}. Using Equation~\eqref{eqn:nh-1}, we have, for all $i=1,\ldots,N$,

\begin{align}
\overline{\tau}_{-i}&= \sum_{j\neq i}\frac{\beta_j \times b_j \times q_j}{1 - q_j + q_j \times b_j}\nonumber\\
&\approx \sum_{j\neq i}\beta_j \times b_j \times q_j,\:\text{using (A3)}\nonumber\\
&= \sum_{j\neq i} \frac{1 + \alpha_j + \ldots+\alpha_j^{n_c-1}}{\overline{B}_j + (1-\alpha_j^{n_c})T_{\mathsf{tx}}}\nu_j(\overline{Z}_j + \overline{Y}_j)\label{eqn:fp1}\\
&\approx \sum_{j\neq i} \nu_j(1 + \alpha_j + \ldots+\alpha_j^{n_c-1})\nonumber\\
&\approx \sum_{j\neq i}\Bigg(\sum_{k=1}^m z_{k,j}\lambda_k\Bigg)(1 + \alpha_j + \ldots+\alpha_j^{n_c-1}),\:\text{using (A1)}\nonumber
\end{align}
In writing Equation~\eqref{eqn:fp1}, we have used Equations~\eqref{eqn:nh-beta}, \eqref{eqn:nh-calc_of_b}, \eqref{eqn:nh-q_arrival}, and \eqref{eqn:nh-sigma}. In writing the last step, we have used (A2) along with eqns.~\ref{eqn:nh-z} and \ref{eqn:nh-y} to approximate $\overline{Z}_j + \overline{Y}_j\approx \overline{B}_j + (1-\alpha_j^{n_c})T_{\mathsf{tx}}$.

Also, from Equation~\eqref{eqn:nh-alpha}, using (A4), we have, for all $i=1,\ldots,N$,
\begin{align*}
\alpha_i&\approx \frac{T_{\mathsf{tx}}\overline{\tau}_{-i}}{1 + T_{\mathsf{tx}}\overline{\tau}_{-i}}
\end{align*}

Thus, we can write the simplified fixed point equations compactly as follows:
\begin{align}
\overline{\tau}_{-i}&=\sum_{j\neq i}\Bigg(\sum_{k=1}^m z_{k,j}\lambda_k\Bigg)(1 + \alpha_j + \ldots+\alpha_j^{n_c-1})\:\forall i=1,\ldots,N\label{eqn:simple-tau-i}\\
\alpha_i &= \frac{T_{\mathsf{tx}}\overline{\tau}_{-i}}{1 + T_{\mathsf{tx}}\overline{\tau}_{-i}}\:\forall i = 1,\ldots,N\label{eqn:simple-alpha}
\end{align}
Equation~\ref{eqn:simple-tau-i} can be interpreted as the total CCA attempt rate seen by node $i$ due to the other nodes; indeed, the first term inside the summation is the packet arrival rate into node $j$, while the second term is the mean number of CCA attempts of a packet at node $j$.
   
Further, using (A4), Equation~\eqref{eqn:nh-gamma} simplifies to
\begin{align}
\gamma_i &= l + (1-l)(1-\exp(-12\overline{\tau}_{-i}))\label{eqn:simple-gamma}
\end{align}

\gap
\noindent
\textbf{Simplifications for the delay approximation}

Recall that in the low discard regime, $\delta_i\approx 0$, and $\rho_i\ll 1$ so that $\rho_i^2\approx 0$. Then it is easy to see that $c_{A_i}^2\approx 1$. Furthermore, $\rho_i\approx \nu_i \mathbb{E}(S_i)$, where we recall that $\nu_i$ is the total arrival rate into node $i$ assuming no packet discard. Thus, we have

\begin{align}
\overline{\Delta}_i &= \frac{\rho_i\mathbb{E}(S_i)(1 + c_{S_i}^2)}{2(1-\rho_i)} + \mathbb{E}(S_i)
\end{align}

Note that the above expression is precisely the Pollaczek-Khintchine formula (see, for example, \cite{ross07prob-models}) for the mean delay of an M/G/1 queue. Thus, in the low discard regime, the delay approximation reduces to modeling each node in the network by an M/G/1 queue.

Starting with the MGF $M_{S_i}(z)$, and taking derivative w.r.t $z$, straightforward calculations yield,
\begin{align}
\mathbb{E}(S_i)&=-\frac{d}{dz}M_{S_i}(z)\Bigg\vert_{z=0}\nonumber\\
&= T_{\mathrm{tx}}+\frac{1+\beta_i(1-\alpha_i)\gamma_iT_{\mathrm{tx}}}{\beta_i(1-\alpha_i)(1-\gamma_i)}\nonumber\\
&= \frac{1+\beta_i(1-\alpha_i)T_{\mathrm{tx}}}{\beta_i(1-\alpha_i)(1-\gamma_i)}\label{eqn:mean-serv-rate-delay}
\end{align}

Now, let us compute $c_{S_i}^2$. Again, taking the derivative of $M_{S_i}(z)$ w.r.t $z$, straightforward computations yield,
\begin{align}
\mathbb{E}(S_i^2)&= \frac{d^2}{dz^2}M_{S_i}(z)\Bigg\vert_{z=0}\nonumber\\
&= T_{\mathrm{tx}}^2 + \frac{2T_{\mathrm{tx}}}{\beta_i(1-\alpha_i)(1-\gamma_i)}\nonumber\\
&+\frac{3\beta_i(1-\alpha_i)\gamma_iT_{\mathrm{tx}}^2}{\beta_i(1-\alpha_i)(1-\gamma_i)}\nonumber\\
&+ \frac{2(1+\beta_i(1-\alpha_i)\gamma_iT_{\mathrm{tx}})^2}{(\beta_i(1-\alpha_i)(1-\gamma_i))^2}\label{eqn:mean-of-squared-serv-time}
\end{align}

Finally, after some algebraic manipulations, we have
\begin{align}
c_{S_i}^2 &= \frac{Var(S_i)}{\mathbb{E}^2(S_i)}\nonumber\\
&= \gamma_i + \frac{1-\gamma_i}{(1+\beta_i(1-\alpha_i)T_{\mathrm{tx}})^2}\nonumber\\
&\approx \gamma_i + \frac{1}{(1+\beta_i(1-\alpha_i)T_{\mathrm{tx}})^2}\label{eqn:coeff-var-serv-time}
\end{align}
where, in Eqn.~\eqref{eqn:coeff-var-serv-time}, we have used (A5). 

\gap
\noindent
\remark
In our numerical experiments with default backoff parameters, the above simplifications yield delay values that are accurate to within 6\% w.r.t the delay values obtained from the detailed analysis. 

In what follows, we shall use this simplified fixed point analysis to obtain sufficient conditions for the membership of $\hLambda_{\overline{\delta}}$ and $\hLambda_{\dbar}$. 

We start by establishing a condition for the uniqueness of the solution to the simplified vector fixed point equations~\eqref{eqn:simple-tau-i} and \eqref{eqn:simple-alpha}.

\subsection{Existence and uniqueness of the simplified vector fixed point}
\label{subsec:vector-uniqueness}

Let us define $\alpha_{\max}\eqqcolon \overline{\delta}^{1/n_c}$, and $a\eqqcolon \frac{\alpha_{\max}}{T_{tx}(1-\alpha_{\max})}$, where $\overline{\delta}$ is the target discard probability (at a node) as defined earlier. Observe from Equation~\eqref{eqn:simple-alpha} that $\alpha_i\leq \alpha_{\max}\Leftrightarrow \overline{\tau}_{-i}\leq a$. Moreover, note that in our regime of interest, i.e., the regime where $\max_{1\leq i\leq N}\delta_i\leq \overline{\delta}$, we have $\alpha_i\leq \alpha_{\max}$ for all $i=1,\ldots,N$, or equivalently, $\overline{\tau}_{-i}\leq a$ for all $i=1,\ldots,N$. Then, we have the following proposition. 

\begin{proposition}
\label{prop:vector-uniqueness}
If 
\begin{align}
\sum_{k=1}^m \lambda_kh_k &< \min\big\{\frac{a}{1+\alpha_{\max}+\ldots\alpha_{\max}^{n_c-1}},\nonumber\\
&\frac{1}{T_{tx}(1+2\alpha_{\max}+\ldots +(n_c-1)\alpha_{\max}^{n_c-2})}\Bigg\}\eqqcolon B_1(\overline{\delta})\label{eqn:uniqueness-condition}
\end{align}
then the simplified fixed point equations defined by \eqref{eqn:simple-tau-i} and \eqref{eqn:simple-alpha} have a unique solution $\{\overline{\tau}_{-i}\}_{i=1}^N$ in $[0,a]^N$. 
\end{proposition}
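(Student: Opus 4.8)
The plan is to read the coupled equations \eqref{eqn:simple-tau-i}--\eqref{eqn:simple-alpha} as a single vector fixed-point equation and apply the Banach contraction mapping theorem on the complete metric space $([0,a]^N,\|\cdot\|_\infty)$. First I would eliminate the $\alpha_i$ by substituting \eqref{eqn:simple-alpha} into \eqref{eqn:simple-tau-i}. Writing $g(\tau)\eqqcolon T_{tx}\tau/(1+T_{tx}\tau)$, so that $\alpha_j=g(\overline{\tau}_{-j})$, and $P(\alpha)\eqqcolon 1+\alpha+\cdots+\alpha^{n_c-1}$, the system becomes $\overline{\tau}_{-i}=F_i(\overline{\tau})$ with
\[F_i(\overline{\tau})=\sum_{j\neq i}\Big(\sum_{k=1}^m z_{k,j}\lambda_k\Big)P\big(g(\overline{\tau}_{-j})\big).\]
A point $\{\overline{\tau}_{-i}\}\in[0,a]^N$ solves \eqref{eqn:simple-tau-i}--\eqref{eqn:simple-alpha} (with $\alpha_i$ recovered from \eqref{eqn:simple-alpha}) if and only if it is a fixed point of $F=(F_1,\ldots,F_N)$, so it suffices to prove that $F$ has a unique fixed point in $[0,a]^N$.

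Next I would check that $F$ maps $[0,a]^N$ into itself. Since $g$ is increasing with $g(a)=\alpha_{\max}$, any $\overline{\tau}\in[0,a]^N$ gives $g(\overline{\tau}_{-j})\le\alpha_{\max}$ and hence $P(g(\overline{\tau}_{-j}))\le P(\alpha_{\max})$. Using the identity $\sum_{j} z_{k,j}=h_k$ (source $k$'s path contains exactly $h_k$ transmitting nodes), we have $\sum_{j\neq i}\sum_k z_{k,j}\lambda_k\le\sum_k\lambda_k h_k$, so $0\le F_i(\overline{\tau})\le\big(\sum_k\lambda_k h_k\big)P(\alpha_{\max})$. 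The first branch of the minimum in \eqref{eqn:uniqueness-condition}, namely $\sum_k\lambda_k h_k<a/P(\alpha_{\max})$, is exactly the requirement that this bound be below $a$; hence $F([0,a]^N)\subseteq[0,a]^N$.

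For the contraction property I would bound the Lipschitz constant of $F$ in the sup norm. The scalar map $\tau\mapsto P(g(\tau))$ has derivative $P'(g(\tau))\,g'(\tau)$ with $g'(\tau)=T_{tx}/(1+T_{tx}\tau)^2\le T_{tx}$ and, because $P'$ has nonnegative coefficients and $g(\tau)\le\alpha_{\max}$ on $[0,a]$, $P'(g(\tau))\le P'(\alpha_{\max})=1+2\alpha_{\max}+\cdots+(n_c-1)\alpha_{\max}^{n_c-2}$; thus by the mean value theorem it is Lipschitz with constant $T_{tx}P'(\alpha_{\max})$. Carrying this through $F$ and again collapsing the per-source sums via $\sum_j z_{k,j}=h_k$ gives
\[\|F(\overline{\tau})-F(\overline{\tau}')\|_\infty\le T_{tx}\,P'(\alpha_{\max})\Big(\sum_k\lambda_k h_k\Big)\,\|\overline{\tau}-\overline{\tau}'\|_\infty.\]
The second branch of the minimum in \eqref{eqn:uniqueness-condition} is precisely the statement that this factor $T_{tx}P'(\alpha_{\max})\sum_k\lambda_k h_k$ is strictly less than $1$.

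With both branches of \eqref{eqn:uniqueness-condition} in force, $F$ is a contraction of the complete space $([0,a]^N,\|\cdot\|_\infty)$ into itself, and Banach's theorem yields a unique fixed point, equivalently a unique solution $\{\overline{\tau}_{-i}\}$ in $[0,a]^N$ (existence coming for free). I do not anticipate a deep obstacle: the two branches of the minimum appear to have been designed exactly as the self-map bound and the contraction bound, respectively. The only steps needing care are the combinatorial identity $\sum_j z_{k,j}=h_k$, which lets the per-source contributions collapse to $\sum_k\lambda_k h_k$, and the Lipschitz estimate for the composite $P\circ g$, where I would deliberately use the simple product bound $g'\le T_{tx}$ and $P'(g(\cdot))\le P'(\alpha_{\max})$ rather than trying to exploit that $g'$ is small precisely where $g$ is large; the resulting constant is what makes the second branch come out clean.
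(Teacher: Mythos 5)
Your proposal is correct and follows essentially the same route as the paper: Banach's fixed point theorem on $([0,a]^N,\|\cdot\|_\infty)$, with the first branch of the minimum in \eqref{eqn:uniqueness-condition} giving the self-map property and the second branch giving the contraction constant $T_{tx}(1+2\alpha_{\max}+\cdots+(n_c-1)\alpha_{\max}^{n_c-2})\sum_k\lambda_k h_k<1$. The only cosmetic differences are that the paper packages the contraction argument as a separate lemma (with per-node Lipschitz constants $L_j=\nu_j T_{tx}P'(\alpha_{\max})$ summing to your constant) and obtains the Lipschitz bound by algebraically factoring $\alpha_{1}^k-\alpha_{2}^k$ rather than by your mean-value-theorem estimate on $P\circ g$; both yield identical bounds.
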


\begin{proof}
The proof proceeds via a series of lemmas that establish that the simplified fixed point equations define a contraction mapping on $[0,a]^N$. See Appendix~\ref{subsec:proof-vector-uniqueness} for details. 
\end{proof}

We call the regime defined by Equation~\eqref{eqn:uniqueness-condition} the \emph{uniqueness regime}. Our numerical experiments suggest that in this regime, the simplified fixed point equations well-approximate the original fixed point equations (to within 10\% for $\overline{\tau}_{-i}$), i.e., we have $\hLambda_{\cdot}\approx \hhLambda_{\cdot}$, where $\hLambda_{\cdot}$ and $\hhLambda_{\cdot}$ are as defined in Section~\ref{subsubsec:throughput-def}. See Section~\ref{sec:numerical-results} for details. 

\section{Derivation of A Sufficient Condition for Membership of $\hhLambda_{\deltabar}$}
\label{sec:arguments-leading-to-thm2}

In this section, we shall derive a sufficient condition for the membership of the set $\hhLambda_{\deltabar}(T)$ (defined in Section~\ref{subsubsec:throughput-def}) for a given tree $T$. In other words, we shall derive the structure of the set $\tLambda_{\deltabar}(T)$ introduced in Section~\ref{subsubsec:throughput-def}. 

\subsection{A control on the maximum packet discard probability}
From equation~\eqref{eqn:nh-delta}, we can derive the following Lemma.

\begin{lemma}
\label{lem:delta-inc-alpha-gamma}
For $\alpha_i^{n_c} \leq 1/n_t$, and $\gamma_i^{n_t}\leq 1/n_t$, the packet discard probability at node $i$, $\delta_i$, is monotonically increasing in $\alpha_i$ and $\gamma_i$, i.e., $\delta_{i,1}\leq \delta_{i,2}$ whenever $\alpha_{i,1}\leq \alpha_{i,2}$ and $\gamma_{i,1}\leq \gamma_{i,2}$.
\end{lemma}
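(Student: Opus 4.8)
The plan is to prove the two coordinate-wise monotonicities separately and then compose them. Since the hypothesis region $\{\alpha_i^{n_c}\le 1/n_t,\ \gamma_i^{n_t}\le 1/n_t\}$ is a product of intervals, it suffices to show that $\delta_i$ is nondecreasing in $\gamma_i$ for each fixed $\alpha_i$ and nondecreasing in $\alpha_i$ for each fixed $\gamma_i$; the joint statement then follows by moving one coordinate at a time through the intermediate point $(\alpha_{i,1},\gamma_{i,2})$, which still lies in the region. I would also first record that $\delta_i$ depends on $\alpha_i$ only through $u:=\alpha_i^{n_c}$, which is itself increasing in $\alpha_i$, so monotonicity in $\alpha_i$ reduces to monotonicity in $u\in[0,1)$.

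Monotonicity in $\gamma_i$ is the routine direction. Writing $r_i=\gamma_i(1-u)$ and $\delta_i=u\sum_{j=0}^{n_t-1}r_i^{\,j}+r_i^{\,n_t}$ from \eqref{eqn:nh-delta}, I observe that for fixed $u\le 1$ the map $\gamma_i\mapsto r_i$ is affine with nonnegative slope $1-u$, while $r\mapsto u\sum_{j=0}^{n_t-1}r^{j}+r^{n_t}$ is manifestly nondecreasing on $r\ge 0$ (a sum of monomials with nonnegative coefficients). Composing these two gives $\partial\delta_i/\partial\gamma_i\ge 0$, with no further conditions needed.

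The substance is monotonicity in $u$, where $u$ enters both through the explicit prefactor (pushing $\delta_i$ up) and through $r_i=\gamma_i(1-u)$, which decreases and thereby pushes $\delta_i$ down; this competition is where the hypotheses earn their keep and is the main obstacle. My plan is to collapse the geometric sum and work with the closed form
\[
\delta_i=\frac{u+r_i^{\,n_t}(1-u)(1-\gamma_i)}{1-r_i},\qquad r_i=\gamma_i(1-u),
\]
which is valid once $r_i<1$. Substituting $w:=1-u$ and $t:=r_i=\gamma_i w$, differentiating $\delta_i$ with respect to $w$, and clearing the positive denominator, I expect the numerator of $\partial\delta_i/\partial w$ to factor as $(1-\gamma_i)\bigl(\phi(t)-1\bigr)$, where $\phi(t):=(n_t+1)t^{n_t}-n_t\,t^{\,n_t+1}$. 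Since I want $\delta_i$ increasing in $u$, i.e. decreasing in $w$, it then remains only to show $\phi(t)\le 1$ on $[0,1]$, and this follows immediately from $\phi(1)=1$ together with $\phi'(t)=n_t(n_t+1)t^{\,n_t-1}(1-t)\ge 0$ on $[0,1]$.

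Finally I would check that the stated hypotheses supply exactly what this computation needs: $\gamma_i^{n_t}\le 1/n_t$ forces $\gamma_i<1$, $\alpha_i^{n_c}\le 1/n_t$ forces $u<1$, and hence $r_i=\gamma_i(1-u)<1$, so both the closed form and the range $t\in[0,1)$ are justified. (In fact the factorization shows the $\alpha$-monotonicity holds whenever $\gamma_i<1$ and $r_i<1$, so the stated bounds are comfortably sufficient; a less slick alternative is to bound the negative terms of $\partial\delta_i/\partial u$ directly in the sum form, which is where one would invoke $u\le 1/n_t$ and $\gamma_i^{n_t}\le 1/n_t$ term by term.) Combining the two coordinate monotonicities then yields $\delta_{i,1}\le \delta_{i,2}$ whenever $\alpha_{i,1}\le \alpha_{i,2}$ and $\gamma_{i,1}\le \gamma_{i,2}$, as claimed.
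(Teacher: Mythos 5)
Your proof is correct, and its key step takes a genuinely different route from the paper's. Both arguments share the same skeleton---coordinatewise monotonicity composed through the intermediate point $(\alpha_{i,1},\gamma_{i,2})$, with the $\gamma$-direction handled identically and trivially---but they diverge on the substantive $\alpha$-direction. The paper keeps $\delta_i$ in the expanded form of \eqref{eqn:nh-delta}, sets $x=\alpha^{n_c}$, and groups it as $\bigl[x+\gamma^{n_t}(1-x)^{n_t}\bigr]+\sum_{j=1}^{n_t-1}\gamma^j x(1-x)^j$; each middle term has derivative $\gamma^j(1-x)^{j-1}\bigl[1-(j+1)x\bigr]$, nonnegative exactly when $x\le 1/(j+1)$, which is where $\alpha_i^{n_c}\le 1/n_t$ is consumed, while the bracketed group has derivative $1-n_t\gamma^{n_t}(1-x)^{n_t-1}\ge 0$, which is where $\gamma_i^{n_t}\le 1/n_t$ is consumed. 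You instead collapse the geometric sum and differentiate the closed form; your claimed factorization of the numerator of $\partial\delta_i/\partial w$ as $(1-\gamma_i)\bigl(\phi(t)-1\bigr)$ with $\phi(t)=(n_t+1)t^{n_t}-n_t t^{n_t+1}$ checks out by direct computation, and $\phi\le 1$ on $[0,1]$ follows as you say from $\phi(1)=1$ and $\phi'(t)=n_t(n_t+1)t^{n_t-1}(1-t)\ge 0$. What your route buys is a strictly stronger conclusion: monotonicity in $\alpha_i$ holds on essentially the whole square $0\le\gamma_i\le 1$, $0\le\alpha_i\le 1$, and the lemma's hypotheses serve only to keep $r_i<1$ so the closed form is legitimate---so they are sufficient but far from necessary, a fact invisible in the paper's term-by-term argument, which genuinely needs both bounds. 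What the paper's route buys is elementary bookkeeping: no quotient rule and no factorization to discover, only signs of derivatives of monomials. One pinhole in your parenthetical: when $n_t=1$ the hypothesis $\gamma_i^{n_t}\le 1/n_t$ gives only $\gamma_i\le 1$, not $\gamma_i<1$, so $r_i=1$ is possible at $(\gamma_i,\alpha_i)=(1,0)$ and the closed form degenerates; but in that case $\delta_i\equiv 1$ identically (check directly, or argue by continuity), so the conclusion is unaffected.
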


\begin{proof}
See Appendix~\ref{subsec:proof-delta}.
\end{proof}

From the simplified fixed point equations~\eqref{eqn:simple-tau-i}-\eqref{eqn:simple-gamma}, we have the following lemma. Since the proof is short, we present the proof here itself. 

\begin{lemma}
\label{lem:alpha-gamma-inc-tau}
$\alpha_i$ and $\gamma_i$ are monotonically increasing in $\overline{\tau}_{-i}$.
\end{lemma}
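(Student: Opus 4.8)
The plan is to exploit a structural simplification that makes this lemma essentially immediate: unlike the fully coupled system, the simplified equations express both $\alpha_i$ and $\gamma_i$ as \emph{explicit} single-variable functions of the scalar $\overline{\tau}_{-i}$ alone. Hence monotonicity is not a fixed-point question at all, but merely a matter of inspecting two elementary closed-form expressions. The whole argument reduces to a short differentiation, or an even shorter algebraic rewriting.

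First I would treat $\alpha_i$. Equation~\eqref{eqn:simple-alpha} gives $\alpha_i = T_{\mathsf{tx}}\overline{\tau}_{-i}/(1+T_{\mathsf{tx}}\overline{\tau}_{-i})$. Rewriting this as $\alpha_i = 1 - 1/(1+T_{\mathsf{tx}}\overline{\tau}_{-i})$ makes the claim transparent: since $T_{\mathsf{tx}} > 0$, the denominator $1 + T_{\mathsf{tx}}\overline{\tau}_{-i}$ is strictly increasing in $\overline{\tau}_{-i}$, so the subtracted reciprocal strictly decreases and $\alpha_i$ strictly increases. Equivalently, differentiating yields $d\alpha_i/d\overline{\tau}_{-i} = T_{\mathsf{tx}}/(1+T_{\mathsf{tx}}\overline{\tau}_{-i})^2 > 0$, which certifies monotonicity on all of $[0,\infty)$.

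Then I would treat $\gamma_i$. Equation~\eqref{eqn:simple-gamma} gives $\gamma_i = l + (1-l)(1-\exp(-12\overline{\tau}_{-i}))$. Here the term $1 - \exp(-12\overline{\tau}_{-i})$ is increasing in $\overline{\tau}_{-i}$, and the prefactor $(1-l)$ is nonnegative because $l$ is a packet error probability with $0 \leq l \leq 1$; differentiating gives $d\gamma_i/d\overline{\tau}_{-i} = 12(1-l)\exp(-12\overline{\tau}_{-i}) \geq 0$. Combining the two computations establishes the lemma.

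I expect no genuine obstacle here. The only points that need care are the sign conditions $T_{\mathsf{tx}} > 0$ and $0 \leq l \leq 1$, both of which hold by definition of the quantities involved, so the conclusion follows at once once the explicit forms from \eqref{eqn:simple-alpha} and \eqref{eqn:simple-gamma} are invoked. This is precisely why the authors note that the proof is short enough to present inline.
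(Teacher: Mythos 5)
Your proposal is correct and follows essentially the same route as the paper's own proof: both simply inspect the explicit closed-form expressions for $\alpha_i$ and $\gamma_i$ as functions of $\overline{\tau}_{-i}$ and note that their derivatives are non-negative. The only difference is that you spell out the derivatives explicitly, which the paper leaves to the reader.
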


\begin{proof}
From Equation~\eqref{eqn:simple-gamma}, it is clear that $\gamma_i$ is monotonically increasing in $\overline{\tau}_{-i}$. To see that $\alpha_i$ is monotonically increasing in $\overline{\tau}_{-i}$, observe that the derivative of the R.H.S of Equation~\eqref{eqn:simple-alpha} w.r.t $\overline{\tau}_{-i}$ is non-negative. 
\end{proof}

Combining Lemmas~\ref{lem:delta-inc-alpha-gamma} and \ref{lem:alpha-gamma-inc-tau}, we have 

\begin{proposition}
\label{prop:delta-inc-tau}
$\delta_i$ is monotonically increasing in $\overline{\tau}_{-i}$.
\end{proposition}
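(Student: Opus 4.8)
The plan is to obtain Proposition~\ref{prop:delta-inc-tau} purely as a composition of the two preceding lemmas, since $\delta_i$ depends on $\overline{\tau}_{-i}$ only through the intermediate quantities $\alpha_i$ and $\gamma_i$. Concretely, Equations~\eqref{eqn:simple-alpha} and \eqref{eqn:simple-gamma} express $\alpha_i$ and $\gamma_i$ as functions of $\overline{\tau}_{-i}$, while Equation~\eqref{eqn:nh-delta} expresses $\delta_i$ as a function of the pair $(\alpha_i,\gamma_i)$. Thus the entire dependence factors as $\overline{\tau}_{-i}\mapsto(\alpha_i,\gamma_i)\mapsto\delta_i$, and monotonicity should propagate through this composition with no fresh calculation.

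First I would invoke Lemma~\ref{lem:alpha-gamma-inc-tau} to conclude that both $\alpha_i$ and $\gamma_i$ are monotonically increasing in $\overline{\tau}_{-i}$. Next I would invoke Lemma~\ref{lem:delta-inc-alpha-gamma} to conclude that $\delta_i$ is monotonically increasing in each of $\alpha_i$ and $\gamma_i$. Combining the two: as $\overline{\tau}_{-i}$ increases, $\alpha_i$ and $\gamma_i$ each increase by Lemma~\ref{lem:alpha-gamma-inc-tau}, and hence $\delta_i$ increases by Lemma~\ref{lem:delta-inc-alpha-gamma}. This is just the elementary fact that a coordinatewise-monotone map composed with monotone coordinate functions is itself monotone, so the core of the argument is a one-line chaining of the lemmas.

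The only point requiring care is that Lemma~\ref{lem:delta-inc-alpha-gamma} is conditional: it assumes $\alpha_i^{n_c}\leq 1/n_t$ and $\gamma_i^{n_t}\leq 1/n_t$. I would therefore verify that these hypotheses hold throughout the regime in which the proposition is applied. In the regime of interest we have $\alpha_i\leq\alpha_{\max}=\deltabar^{1/n_c}$, so $\alpha_i^{n_c}\leq\deltabar$, and the first condition follows whenever the target discard probability obeys $\deltabar\leq 1/n_t$; an analogous bound on $\gamma_i$, obtained from Equation~\eqref{eqn:simple-gamma} together with the operating range $\overline{\tau}_{-i}\in[0,a]$, secures $\gamma_i^{n_t}\leq 1/n_t$. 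I expect this hypothesis-checking to be the only mild obstacle, and with the default parameters (where $n_t=3$ and the targeted $\deltabar$ is a few percent) both inequalities hold comfortably, so the composition argument goes through without further restriction.
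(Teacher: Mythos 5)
Your proposal is correct and is essentially identical to the paper's own proof, which simply states that the proposition follows by combining Lemma~\ref{lem:delta-inc-alpha-gamma} and Lemma~\ref{lem:alpha-gamma-inc-tau}. In fact you are slightly more careful than the paper, since you explicitly verify the hypotheses $\alpha_i^{n_c}\leq 1/n_t$ and $\gamma_i^{n_t}\leq 1/n_t$ in the operating regime (note only that the paper's experiments use $n_t=4$, not $n_t=3$, which still satisfies your inequalities comfortably).
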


It follows from Proposition~\ref{prop:delta-inc-tau} that to control the maximum packet discard probability, we need to control $\displaystyle{\max_{i=1,\ldots,N}}\overline{\tau}_{-i}$, i.e.,

\begin{align*}
\max_{i=1,\ldots,N}\delta_i\leq \overline{\delta}\Leftrightarrow \max_{i=1,\ldots,N}\overline{\tau}_{-i}\leq \tau_{\max}
\end{align*}
where $\tau_{\max}$ is the upper bound on $\overline{\tau}_{-i}$ that ensures $\delta_i\leq \overline{\delta}$.

\subsection{A scalar fixed point}
\label{subsec:scalar-fp}
Before proceeding further, we take a slight detour, and introduce a further simplification to the fixed point equations described in Sections~\ref{subsec:full-analysis} and \ref{subsec:simplified-analysis}. While Equation~\eqref{eqn:uniqueness-condition} gives us a condition for uniqueness of the simplified vector fixed point in terms of topological properties of the network, it does not provide any explicit information as to the dependence of the vector fixed point (or, more importantly, the network performance) on \emph{topological properties} within the uniqueness regime. We therefore, aim to further simplify the fixed point equations to obtain a scalar fixed point equation which will be exploited later on to extract information about topological dependencies. 

We start with the following Lemma. 
\begin{lemma}
\label{lem:tau}
(i) For all $j$, and for any given positive\footnote{by ``positive'', we mean $\lambda_k > 0$ for all $k=1,\ldots,m$} arrival rate vector $\{\lambda_k\}_{k=1}^m$, $\overline{\tau}_j^{(i)}$ is bounded away from zero.

\noindent
(ii) For all $j$, $\overline{\tau}_j^{(i)}$ is uniformly upper bounded by a constant that depends only on the backoff parameters of the protocol. 
\end{lemma}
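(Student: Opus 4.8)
The plan is to prove both bounds by working directly with the renewal-theoretic expression \eqref{eqn:nh-1}, which I would write as $\overline{\tau}_j^{(i)} = \frac{\beta_j b_j q_j}{1 - q_j + q_j b_j}$, and to control its three factors $\beta_j, b_j, q_j$ together with the denominator, bounding the denominator from below for part~(ii) and from above for part~(i). The point is that $q_j,b_j\in[0,1]$ makes the denominator easy to sandwich, after which everything reduces to bounding $\beta_j$, the telescoped product $\beta_j b_j$, and $q_j$.

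For the upper bound (part~(ii)) I would first note that since $0\le q_j\le 1$ and $0\le b_j\le 1$, the denominator $1 - q_j + q_j b_j = 1 - q_j(1-b_j)$ is linear and decreasing in $q_j$, hence $\ge b_j$. Substituting and using $q_j\le 1$ gives $\overline{\tau}_j^{(i)} \le \frac{\beta_j b_j q_j}{b_j} = \beta_j q_j \le \beta_j$. It then remains to bound $\beta_j$ by a protocol constant: from \eqref{eqn:nh-beta} the numerator $1+\alpha_j+\ldots+\alpha_j^{n_c-1}\le n_c$ (as $\alpha_j\in[0,1]$), while $\overline{B}_j$ is nondecreasing in $\alpha_j$ and so is bounded below by its value $\overline{B}_{\min}$ at $\alpha_j=0$, i.e.\ the mean of a single first-stage backoff, a strictly positive quantity fixed by \emph{macminBE}. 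Hence $\overline{\tau}_j^{(i)}\le \beta_j \le n_c/\overline{B}_{\min}$, which depends only on the backoff parameters.

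For the lower bound (part~(i)) I would use the complementary bound $1-q_j(1-b_j)\le 1$, giving $\overline{\tau}_j^{(i)} \ge \beta_j b_j q_j$, and bound each factor away from zero. By \eqref{eqn:nh-beta} and \eqref{eqn:nh-calc_of_b} the product telescopes to $\beta_j b_j = \frac{1+\alpha_j+\ldots+\alpha_j^{n_c-1}}{\overline{B}_j + (1-\alpha_j^{n_c})T_{\mathsf{tx}}} \ge \frac{1}{\overline{B}_{\max}+T_{\mathsf{tx}}}$, where the numerator is $\ge 1$ and $\overline{B}_j\le \overline{B}_{\max}$ (its value at $\alpha_j=1$, again a protocol constant). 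The remaining factor is $q_j = \min\{1,\nu_j/\sigma_j\}$; using \eqref{eqn:nh-sigma} and \eqref{eqn:nh-z} we have $1/\sigma_j = \overline{Z}_j+\overline{Y}_j \ge \overline{B}_j \ge \overline{B}_{\min}$, so it suffices to bound $\nu_j$ below by a positive constant, which then yields $\overline{\tau}_j^{(i)} \ge \frac{\min\{1,\,\nu_{\min}\overline{B}_{\min}\}}{\overline{B}_{\max}+T_{\mathsf{tx}}} > 0$.

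The step I expect to be the main obstacle is showing $\nu_j \ge \nu_{\min} > 0$ uniformly over the nodes. The clean argument uses the tree structure: every node $j$ lies on the path of at least one source $k$, so by \eqref{eqn:nh-nu} and \eqref{eqn:nh-theta_arrival} the rate into $j$ is at least $\lambda_k$ thinned by the deliveries $\prod(1-\delta_i)$ at the upstream nodes; since each $\delta_i<1$ (a packet is delivered with positive probability in finitely many attempts whenever $l<1$, and there are only finitely many nodes) and there are at most $h_{\max}$ such hops, we get $\nu_j \ge \lambda_{\min}(1-\delta_{\max})^{h_{\max}}$ with $\lambda_{\min}=\min_k\lambda_k>0$ and $\delta_{\max}=\max_i\delta_i<1$. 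The subtlety lies in justifying $\delta_{\max}<1$ for the given fixed point, i.e.\ that no node discards essentially all of its traffic; I would handle this by observing that for a fixed positive arrival vector and its fixed-point solution each $\alpha_i<1$ and $\gamma_i<1$, so \eqref{eqn:nh-delta} forces $\delta_i<1$. In the simplified low-discard regime this difficulty disappears entirely, since there $\nu_j\approx \sum_k z_{k,j}\lambda_k \ge \lambda_{\min}$ with no thinning at all.
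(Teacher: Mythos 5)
Your proposal is correct and follows essentially the same route as the paper: both proofs work directly from the renewal-reward expression \eqref{eqn:nh-1}, sandwich the denominator $1-q_j+q_jb_j$ (below by $1$ for the lower bound, and by a multiple of $b_j$ for the upper bound, so that $\overline{\tau}_j^{(i)}\leq \beta_j$), and then reduce everything to bounding $\beta_j$, $b_j$, $q_j$ through Equations \eqref{eqn:nh-beta}, \eqref{eqn:nh-calc_of_b}, \eqref{eqn:nh-q_arrival}, \eqref{eqn:nh-sigma} in terms of protocol constants. The only substantive difference is that you justify the uniform positive lower bound on $\nu_j$ (via the tree structure and $\delta_i<1$, or via (A1) in the low-discard regime), whereas the paper's proof simply asserts $\nu_j\geq \min_{1\leq k\leq m}\lambda_k$ inside the chain of inequalities for $q_j$; your treatment of that step is, if anything, more careful.
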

\begin{proof}
See Appendix~\ref{subsec:proof-lemma-tau}.
\end{proof}

From the Lemma~\ref{lem:tau}, we have the following insight:
\begin{insight}
\label{cor:equality}
In the large $N$ regime, for all $i=1,\ldots,N$,
\begin{equation*}
\overline{\tau}_{-i}\approx \sum_{j=1}^N\overline{\tau}_j^{(i)}\eqqcolon \overline{\tau}
\end{equation*}
i.e., the total attempt rate seen by a node is roughly equal for all nodes, and is approximately equal to the total attempt rate in the network.
\end{insight}

The argument behind this insight is as follows: it follows from Part (i) of Lemma~\ref{lem:tau} that $\displaystyle{\lim_{N\rightarrow \infty}\sum_{j=1}^N\overline{\tau}_j^{(i)}}=\infty$. But from Part (ii) of Lemma~\ref{lem:tau}, we have that for each $j=1,\ldots, N$, $\overline{\tau}_j^{(i)}$ is uniformly upper bounded by a finite constant, independent of $N$. Hence, in the large $N$ regime, we can conclude that for all $j$, $\overline{\tau}_j^{(i)}\ll \sum_{j=1}^N\overline{\tau}_j^{(i)}$. Thus, in this regime, $\overline{\tau}_{-i}\eqqcolon\sum_{j\neq i}\overline{\tau}_j^{(i)}\approx \sum_{j=1}^N\overline{\tau}_j^{(i)}\eqqcolon \overline{\tau}$. 

The above insight, along with a slight modification of the derivations leading to the vector fixed point equations~\eqref{eqn:simple-tau-i},\eqref{eqn:simple-alpha}, suggests the following scalar fixed point equation in $\overline{\tau}$, the total attempt rate in the network:

\begin{align}
\overline{\tau}&= (\sum_{k=1}^m\lambda_kh_k)(1+\alpha+\ldots+\alpha^{n_c-1})\label{eqn:scalar-tau}\\
\alpha &= \frac{T_{tx}\overline{\tau}}{1 + T_{tx}\overline{\tau}}\label{eqn:scalar-alpha}
\end{align}

\subsection{Existence and uniqueness of the scalar fixed point}
\label{subsec:scalar-uniqueness}

Let us, as before, define $a\eqqcolon \frac{\alpha_{\max}}{T_{tx}(1-\alpha_{\max})}$, where $\alpha_{\max}=\overline{\delta}^{1/n_c}$. Observe from Eqn.~\eqref{eqn:scalar-alpha} that $\alpha\leq \alpha_{\max}\Leftrightarrow \overline{\tau}\leq a$. Then, we have the following proposition.

\begin{proposition}
\label{prop:scalar-uniqueness}
Suppose the condition defined by Eqn.~\eqref{eqn:uniqueness-condition} holds. Then, the scalar fixed point equation defined by \eqref{eqn:scalar-tau},\eqref{eqn:scalar-alpha} has a unique solution in $[0,a]$. 
\end{proposition}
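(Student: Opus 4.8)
The plan is to collapse the pair \eqref{eqn:scalar-tau}--\eqref{eqn:scalar-alpha} into a single scalar self-map and invoke the contraction mapping principle, exactly mirroring the vector argument behind Proposition~\ref{prop:vector-uniqueness} but in one dimension. Substituting \eqref{eqn:scalar-alpha} into \eqref{eqn:scalar-tau}, I would define $g:[0,a]\to\mathbb{R}$ by
\begin{align*}
g(\overline{\tau})\eqqcolon\Big(\sum_{k=1}^m\lambda_kh_k\Big)\sum_{j=0}^{n_c-1}\alpha(\overline{\tau})^j,\qquad \alpha(\overline{\tau})\eqqcolon\frac{T_{tx}\overline{\tau}}{1+T_{tx}\overline{\tau}}.
\end{align*}
A point $\overline{\tau}\in[0,a]$ solves \eqref{eqn:scalar-tau}--\eqref{eqn:scalar-alpha} if and only if it is a fixed point of $g$, so it suffices to verify that $g$ maps the (complete) interval $[0,a]$ into itself and is a contraction there; the two terms of the minimum in \eqref{eqn:uniqueness-condition} are precisely what these two properties require.

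First I would show $g([0,a])\subseteq[0,a]$. Since $\alpha(\cdot)$ is non-decreasing in $\overline{\tau}$ (the scalar analogue of Lemma~\ref{lem:alpha-gamma-inc-tau}, read off directly from \eqref{eqn:scalar-alpha}) and $\sum_{j=0}^{n_c-1}\alpha^j$ is non-decreasing in $\alpha$, the map $g$ is non-decreasing; hence on $[0,a]$ it is largest at $\overline{\tau}=a$, where a direct substitution of $a=\alpha_{\max}/(T_{tx}(1-\alpha_{\max}))$ gives $\alpha(a)=\alpha_{\max}$. Therefore $g(\overline{\tau})\le g(a)=(\sum_k\lambda_kh_k)(1+\alpha_{\max}+\ldots+\alpha_{\max}^{n_c-1})$, and the first term of the minimum in \eqref{eqn:uniqueness-condition} is exactly the statement $g(a)\le a$. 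Combined with $g(0)=\sum_k\lambda_kh_k\ge 0$, this yields the self-mapping property.

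Next I would establish the contraction by bounding $g'$. Differentiating, and using the identity $1+T_{tx}\overline{\tau}=1/(1-\alpha)$ (immediate from \eqref{eqn:scalar-alpha}) to rewrite $\tfrac{d\alpha}{d\overline{\tau}}=T_{tx}(1-\alpha)^2$, I obtain
\begin{align*}
g'(\overline{\tau})=\Big(\sum_{k=1}^m\lambda_kh_k\Big)\,T_{tx}\,(1-\alpha)^2\sum_{j=1}^{n_c-1}j\,\alpha^{j-1}.
\end{align*}
On $[0,a]$ one has $0\le\alpha\le\alpha_{\max}<1$, so $(1-\alpha)^2\le 1$ and $\sum_{j=1}^{n_c-1}j\alpha^{j-1}\le 1+2\alpha_{\max}+\ldots+(n_c-1)\alpha_{\max}^{n_c-2}$, giving $0\le g'(\overline{\tau})\le(\sum_k\lambda_kh_k)\,T_{tx}(1+2\alpha_{\max}+\ldots+(n_c-1)\alpha_{\max}^{n_c-2})$. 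The second term of the minimum in \eqref{eqn:uniqueness-condition} is precisely the requirement that this upper bound be strictly below $1$, so by the mean value theorem $g$ is a contraction on $[0,a]$, and the Banach fixed-point theorem delivers a unique solution in $[0,a]$.

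The two elementary bounds are routine; the only step I would be careful about is the derivative computation, where the substitution $1+T_{tx}\overline{\tau}=1/(1-\alpha)$ is what makes the bound on $g'$ align exactly with the second term of $B_1(\overline{\delta})$. I expect no genuine obstacle here, since the whole argument is the one-dimensional specialization of the contraction argument already used for Proposition~\ref{prop:vector-uniqueness}.
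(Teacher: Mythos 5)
Your proposal is correct and takes essentially the same route as the paper: both verify that the first term of the minimum in \eqref{eqn:uniqueness-condition} makes the scalar map a self-map of $[0,a]$, that the second term makes it a contraction with the Lipschitz constant $T_{tx}(1+2\alpha_{\max}+\ldots+(n_c-1)\alpha_{\max}^{n_c-2})\sum_{k=1}^m\lambda_k h_k < 1$, and then invoke Banach's fixed point theorem. The only cosmetic difference is that you obtain the Lipschitz bound by differentiating (using $d\alpha/d\overline{\tau}=T_{tx}(1-\alpha)^2$ and the mean value theorem), whereas the paper factors the difference of geometric sums algebraically and applies its Lemma~\ref{lem:scalar-uniqueness-lipschitz}; the resulting constants coincide.
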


\begin{proof}
The proof proceeds by showing that the scalar fixed point equation defines a contraction mapping on $[0,a]$. See Appendix~\ref{subsec:proof-scalar-uniqueness} for details.  
\end{proof}

\subsection{A tight upper bound on $\max_{1\leq i \leq N}\overline{\tau}_{-i}$}
We now come back to the main thread of our discussion. Recall the scalar fixed point equations~\eqref{eqn:scalar-tau}, \eqref{eqn:scalar-alpha} derived in Section~\ref{subsec:scalar-fp}. We had promised that these equations will find use in deriving topological properties that control the network performance. To this end, we start with the following proposition.

\begin{proposition}
\label{prop:scalar-vector-fp-relation}
In the uniqueness regime (defined by Eqn.~\ref{eqn:uniqueness-condition}), 
\begin{align}
\overline{\tau}\geq \max_{1\leq i\leq N}\overline{\tau}_{-i}\label{eqn:scalar-vector-fp-relation}
\end{align}
where, $\overline{\tau}$ is the unique solution to the scalar fixed point equations~\eqref{eqn:scalar-tau},\eqref{eqn:scalar-alpha}, and $\{\overline{\tau}_{-i}\}_{i=1}^N$ is the unique solution to the simplified vector fixed point equations~\eqref{eqn:simple-tau-i},\eqref{eqn:simple-alpha}.
\end{proposition}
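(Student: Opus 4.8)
The plan is to exploit the monotonicity already established for the scalar and vector maps together with the contraction property of Proposition~\ref{prop:scalar-uniqueness}, reducing the comparison to a single one-sided fixed point inequality. Throughout, write $g(\alpha)\eqqcolon 1+\alpha+\cdots+\alpha^{n_c-1}$ and $\alpha(t)\eqqcolon \frac{T_{tx}t}{1+T_{tx}t}$, so that the scalar system \eqref{eqn:scalar-tau},\eqref{eqn:scalar-alpha} reads $\overline{\tau}=F(\overline{\tau})$ with $F(t)\eqqcolon\big(\sum_{k=1}^m\lambda_kh_k\big)\,g(\alpha(t))$, an increasing function of $t$ since $\alpha(\cdot)$ and $g(\cdot)$ are increasing.

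First I would record the topological identity that links the two systems. Since $\nu_j=\sum_{k=1}^m z_{k,j}\lambda_k$ with $z_{k,j}=1$ exactly when the (transmitting) node $j$ lies on the path of source $k$, summing over all nodes gives $\sum_{j=1}^N\nu_j=\sum_{k=1}^m\lambda_k\big(\sum_j z_{k,j}\big)=\sum_{k=1}^m\lambda_kh_k$; that is, the total transmission load in the scalar equation equals the sum over nodes of the per-node arrival rates appearing in \eqref{eqn:simple-tau-i}. This single identity is what ties $F$ to the right-hand sides of the vector equations.

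Next I would let $i^\ast$ attain $\tau^\ast\eqqcolon\max_{1\le i\le N}\overline{\tau}_{-i}$ and derive the key inequality $\tau^\ast\le F(\tau^\ast)$. Because $\alpha(\cdot)$ and $g(\cdot)$ are nondecreasing (Lemma~\ref{lem:alpha-gamma-inc-tau}) and $\overline{\tau}_{-j}\le\tau^\ast$ for every $j$ by maximality, each factor obeys $g(\alpha_j)=g(\alpha(\overline{\tau}_{-j}))\le g(\alpha(\tau^\ast))$. Hence, evaluating \eqref{eqn:simple-tau-i} at $i^\ast$ and then adding back the nonnegative $j=i^\ast$ term,
\begin{align*}
\tau^\ast=\sum_{j\neq i^\ast}\nu_j\,g(\alpha_j)\le \sum_{j\neq i^\ast}\nu_j\,g(\alpha(\tau^\ast))\le\Big(\sum_{j=1}^N\nu_j\Big)g(\alpha(\tau^\ast))=\Big(\sum_{k=1}^m\lambda_kh_k\Big)g(\alpha(\tau^\ast))=F(\tau^\ast),
\end{align*}
where the last two equalities use the topological identity and the definition of $F$.

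Finally I would upgrade $\tau^\ast\le F(\tau^\ast)$ to $\tau^\ast\le\overline{\tau}$. In the uniqueness regime, the bound $\sum_{k=1}^m\lambda_kh_k<\frac{a}{g(\alpha_{\max})}$ coming from the first branch of $B_1(\overline{\delta})$ in \eqref{eqn:uniqueness-condition} yields $F(a)=\big(\sum_{k=1}^m\lambda_kh_k\big)g(\alpha_{\max})<a$ (using $\alpha(a)=\alpha_{\max}$), so the increasing map $F$ sends $[0,a]$ into itself; moreover $\tau^\ast\le a$ since we work in the regime $\max_i\delta_i\le\overline{\delta}$, equivalently $\overline{\tau}_{-i}\le a$ for all $i$. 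Iterating $F$ from $\tau^\ast$ then produces a nondecreasing sequence (by monotonicity of $F$ together with $\tau^\ast\le F(\tau^\ast)$) which, by the contraction property of Proposition~\ref{prop:scalar-uniqueness}, converges to the unique fixed point $\overline{\tau}$; since a nondecreasing sequence starting at $\tau^\ast$ cannot have a limit below $\tau^\ast$, we obtain $\overline{\tau}\ge\tau^\ast$, which is \eqref{eqn:scalar-vector-fp-relation}. I expect the main obstacle to be precisely this last step, namely confirming that $F$ is a self-map of $[0,a]$ and that $\tau^\ast\in[0,a]$, so that the monotone-iteration-plus-contraction argument is legitimate; the monotonicity and topological-identity steps are essentially bookkeeping once the vector right-hand sides are rewritten in terms of $F$.
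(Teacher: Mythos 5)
Your proof is correct, but it inverts the paper's argument rather than reproducing it. The paper initializes the $N$-dimensional iteration of the \emph{vector} map at the constant vector $(\overline{\tau},\ldots,\overline{\tau})$, shows by induction (using monotonicity of $\alpha(\cdot)$ and $\sum_{j\ne i}\nu_j\le\sum_{k}\lambda_k h_k$) that the iterates are coordinatewise nonincreasing, and invokes the vector contraction of Proposition~\ref{prop:vector-uniqueness} to conclude that the limit, namely the vector fixed point, is dominated by $\overline{\tau}$ in every coordinate. You instead evaluate the vector equations at one maximizing coordinate to extract the scalar sub-fixed-point inequality $\tau^\ast\le F(\tau^\ast)$, and then run the \emph{one-dimensional} iteration upward from $\tau^\ast$, relying only on the scalar contraction of Proposition~\ref{prop:scalar-uniqueness} to identify the limit with $\overline{\tau}$. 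Both arguments rest on the same two ingredients, monotonicity of $t\mapsto g(\alpha(t))$ and the identity $\sum_{j=1}^N\nu_j=\sum_{k=1}^m\lambda_k h_k$, but yours is slightly more economical: the final step is one-dimensional, and it never actually uses uniqueness of the vector fixed point, only that the given solution lies in $[0,a]^N$, which is exactly what Proposition~\ref{prop:vector-uniqueness} provides (this is also the cleanest justification for your claim $\tau^\ast\le a$, which you attribute somewhat loosely to the operating regime $\max_{1\le i\le N}\delta_i\le\overline{\delta}$). The paper's version, by contrast, needs the full $N$-dimensional contraction to guarantee that its specially initialized iteration converges to the very fixed point being compared. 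One further simplification is available to you: once $\tau^\ast\le F(\tau^\ast)$ with $F$ increasing and a self-map of $[0,a]$, the nondecreasing sequence $F^n(\tau^\ast)$ converges by boundedness alone, and its limit is a fixed point of $F$ by continuity, hence equals $\overline{\tau}$ by uniqueness; so contraction is needed only through the uniqueness of the scalar fixed point, not for convergence of your iteration.
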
 

\begin{proof}
See Appendix~\ref{subsec:proof-scalar-vector}.
\end{proof}

\gap
\noindent
\emph{Remarks:}

\begin{enumerate}
\item The intuition behind the above proposition is clear. Each component of the simplified vector fixed point approximates the total attempt rate \emph{seen} by the corresponding node in the network, which should clearly be upper bounded by the total attempt rate of all the nodes in the network, approximated by the scalar fixed point. 
\item Corollary~\ref{cor:equality} suggests that the above upper bound on $\max_{1\leq i\leq N}\overline{\tau}_{-i}$ is tight in the large $N$ regime. Our numerical experiments show that the bound is tight even for moderate values of $N$ (see Section~\ref{sec:numerical-results} for details). 
\end{enumerate}

From Proposition~\ref{prop:scalar-vector-fp-relation}, it follows that to ensure $\max_{1\leq i\leq N}\overline{\tau}_{-i}\leq \tau_{\max}$, it \emph{suffices} that we ensure $\overline{\tau}\leq \tau_{\max}$, i.e., to control the maximum packet discard probability, $\max_{1\leq i\leq N}\delta_i$ (or, equivalently, $\max_{1\leq i\leq N}\overline{\tau}_{-i}$; recall Proposition~\ref{prop:delta-inc-tau}, and the discussion thereafter), it suffices to control the approximate total attempt rate, $\overline{\tau}$. We, therefore, next investigate the dependence of $\overline{\tau}$ on the network topology and arrival rate vector to come up with a sufficient condition for the membership of $\hhLambda_{\overline{\delta}}$.

\subsection{A sufficient condition for membership of $\hhLambda_{\overline{\delta}}$}
\label{subsec:sufficient-cond-Lambda-delta}
The following proposition is an easy consequence of Part~2 of Lemma~\ref{lem:scalar-uniqueness-lipschitz} stated in Appendix~\ref{subsec:proof-scalar-uniqueness}, and hence we omit the proof.

\begin{proposition}
\label{prop:tau-inc-total-load}
The approximate total attempt rate, $\overline{\tau}$, given by the solution to the scalar fixed point equations~\eqref{eqn:scalar-tau},\eqref{eqn:scalar-alpha}, is monotonically increasing in $\sum_{k=1}^m\lambda_kh_k$, the total load in the network.
\end{proposition}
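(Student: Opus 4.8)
The plan is to prove Proposition~\ref{prop:tau-inc-total-load} directly from the scalar fixed point equations~\eqref{eqn:scalar-tau} and \eqref{eqn:scalar-alpha}, treating the total load $s \eqqcolon \sum_{k=1}^m\lambda_kh_k$ as the independent parameter and showing that the unique solution $\overline{\tau}(s)$ in $[0,a]$ is monotonically increasing in $s$. The author already tells us this follows from ``Part~2 of Lemma~\ref{lem:scalar-uniqueness-lipschitz},'' so the intended argument is almost surely to substitute~\eqref{eqn:scalar-alpha} into~\eqref{eqn:scalar-tau} to eliminate $\alpha$, obtaining a single scalar equation $\overline{\tau} = s \cdot g(\overline{\tau})$, where $g(\overline{\tau}) \eqqcolon 1 + \alpha + \ldots + \alpha^{n_c-1}$ with $\alpha = T_{tx}\overline{\tau}/(1+T_{tx}\overline{\tau})$. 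Equivalently, defining $F(\overline{\tau}) \eqqcolon \overline{\tau}/g(\overline{\tau})$, the fixed point is characterized by $F(\overline{\tau}) = s$.

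First I would observe that $\alpha$ is a strictly increasing function of $\overline{\tau}$ on $[0,a]$ (the derivative of the RHS of~\eqref{eqn:scalar-alpha} is $T_{tx}/(1+T_{tx}\overline{\tau})^2 > 0$), hence $g(\overline{\tau})$, being a sum of non-negative increasing powers of $\alpha$, is itself nondecreasing and bounded below by $1$. The key analytic step is to show that $F(\overline{\tau}) = \overline{\tau}/g(\overline{\tau})$ is \emph{strictly increasing} on $[0,a]$: this is exactly what ``Part~2 of Lemma~\ref{lem:scalar-uniqueness-lipschitz}'' is expected to supply (its Lipschitz/contraction analysis controls the derivative of the fixed-point map, which is built from $g$ and $g'$). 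Given strict monotonicity of $F$, the equation $F(\overline{\tau}) = s$ defines $\overline{\tau}$ as a strictly increasing function of $s$, because a larger right-hand side forces the argument at which $F$ attains it to be larger. Concretely, if $s_1 < s_2$ gave $\overline{\tau}(s_1) \geq \overline{\tau}(s_2)$, then applying the increasing $F$ would yield $s_1 = F(\overline{\tau}(s_1)) \geq F(\overline{\tau}(s_2)) = s_2$, a contradiction.

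The main obstacle is verifying that $F(\overline{\tau})$ is genuinely increasing rather than merely the ratio of two increasing quantities (which need not be monotone in general). This reduces to checking $F'(\overline{\tau}) = (g(\overline{\tau}) - \overline{\tau}\,g'(\overline{\tau}))/g(\overline{\tau})^2 > 0$, i.e. $g(\overline{\tau}) > \overline{\tau}\,g'(\overline{\tau})$ on $[0,a]$; since $g' = g'(\alpha)\cdot \alpha'(\overline{\tau})$ and $\alpha'(\overline{\tau}) = T_{tx}(1-\alpha)^2 \leq T_{tx}$, this inequality is where the uniqueness-regime bound~\eqref{eqn:uniqueness-condition} (specifically the second term of $B_1(\overline{\delta})$, which bounds $T_{tx}(1 + 2\alpha_{\max} + \ldots + (n_c-1)\alpha_{\max}^{n_c-2}) \cdot (\sum\lambda_k h_k) < 1$) is designed to close the gap. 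I would therefore anchor the bound on $\overline{\tau}\,g'(\overline{\tau})$ using $\overline{\tau} \leq a$ and $\alpha \leq \alpha_{\max}$, and invoke the uniqueness condition to guarantee $F' > 0$ throughout $[0,a]$. Since the author explicitly defers to the already-established Lemma~\ref{lem:scalar-uniqueness-lipschitz} and omits the proof, I would state the monotonicity of $\overline{\tau}$ in $s$ as the immediate consequence of that Lipschitz bound together with the strict monotonicity of $F$, rather than re-deriving the derivative estimate in full.
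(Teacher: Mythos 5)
Your overall strategy---eliminating $\alpha$ to obtain $\overline{\tau}=s\,g(\overline{\tau})$ with $s=\sum_{k=1}^m\lambda_kh_k$, characterizing the fixed point by $F(\overline{\tau})=s$ where $F(\overline{\tau})=\overline{\tau}/g(\overline{\tau})$, and inverting a monotone $F$---is a legitimate route, and it differs from the paper's. The paper's proof (Part~2 of Lemma~\ref{lem:scalar-uniqueness-lipschitz}) never needs $F$ to be monotone on the whole interval: it subtracts the fixed-point identities at loads $M$ and $M+\epsilon$, applies the Mean Value Theorem to $g$ at a point $\xi$ between the two fixed points, and solves to get $h(M+\epsilon)-h(M)=\epsilon\,g(h(M+\epsilon))/(1-Mg^{\prime}(\xi))$; positivity of the denominator follows from $Mg^{\prime}(\xi)\leq ML<1$, which is exactly the contraction condition of the uniqueness regime. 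Note the structure of that bound: it is the \emph{load} $M$ that multiplies $g^{\prime}$, and $ML<1$ is available by hypothesis.

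Your proposed way of establishing $F^{\prime}>0$, by contrast, has a genuine gap. You need $g(\overline{\tau})>\overline{\tau}\,g^{\prime}(\overline{\tau})$ for \emph{all} $\overline{\tau}\in[0,a]$ (your contradiction argument applies monotonicity of $F$ between two fixed points whose locations are not known a priori), and you propose to get it by anchoring $\overline{\tau}\leq a$ and $g^{\prime}\leq L$, where $L=T_{tx}(1+2\alpha_{\max}+\ldots+(n_c-1)\alpha_{\max}^{n_c-2})$, and then ``invoking the uniqueness condition.'' But condition~\eqref{eqn:uniqueness-condition} bounds $sL<1$; it says nothing about $aL$, which depends only on $\overline{\delta}$ and the protocol parameters. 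Indeed $aL=\frac{\alpha_{\max}}{1-\alpha_{\max}}(1+2\alpha_{\max}+\ldots+(n_c-1)\alpha_{\max}^{n_c-2})$, and for the paper's default values ($n_c=5$, $\overline{\delta}=0.0208$, hence $\alpha_{\max}\approx 0.46$) this is about $2.5$, while $g(\overline{\tau})\leq 1+\alpha_{\max}+\ldots+\alpha_{\max}^{n_c-1}\approx 1.8$; so the estimate $\overline{\tau}g^{\prime}\leq aL$ cannot close the inequality near $\overline{\tau}=a$. The step is repairable in two ways. First, the inequality you need is in fact true \emph{unconditionally} for this particular $g$: using $T_{tx}\overline{\tau}=\alpha/(1-\alpha)$ and $\frac{d\alpha}{d\overline{\tau}}=T_{tx}(1-\alpha)^2$, one computes $\overline{\tau}g^{\prime}(\overline{\tau})=(1-\alpha)\sum_{k=1}^{n_c-1}k\alpha^{k}=\sum_{k=1}^{n_c-1}\alpha^{k}-(n_c-1)\alpha^{n_c}$, whence $g(\overline{\tau})-\overline{\tau}g^{\prime}(\overline{\tau})=1+(n_c-1)\alpha^{n_c}>0$, so $F$ is strictly increasing on all of $[0,a]$ with no appeal to the uniqueness regime at all. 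Second, you could simply follow the paper's MVT argument, where the load $M$ (not the variable $\overline{\tau}$) multiplies $g^{\prime}(\xi)$, so the available bound $ML<1$ suffices. As written, however, the key analytic step of your proposal does not go through.
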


It follows from Proposition~\ref{prop:tau-inc-total-load} that given $\overline{\delta}$ (or, equivalently, $\tau_{\max}$), there exists $B_2(\overline{\delta})$ such that $\sum_{k=1}^m\lambda_kh_k\leq B_2(\overline{\delta})\Leftrightarrow \overline{\tau}\leq \tau_{\max}$. Hence, as long as the vector fixed point equations~\eqref{eqn:simple-tau-i},\eqref{eqn:simple-alpha} and the scalar fixed point equations~\eqref{eqn:scalar-tau},\eqref{eqn:scalar-alpha} have unique solutions, the following cause-effect relations hold:
\begin{equation*}
\sum_{k=1}^m\lambda_kh_k\leq B_2(\overline{\delta})\Rightarrow \overline{\tau}\leq \tau_{\max}\Rightarrow \max_{1\leq i\leq N}\overline{\tau}_{-i}\leq \tau_{\max}\Rightarrow \max_{1\leq i\leq N}\delta_i\leq \overline{\delta}
\end{equation*}

However, Eqn.~\eqref{eqn:uniqueness-condition} gives us a sufficient condition for the uniqueness of the fixed point equations. Thus, we have the following theorem:

\begin{theorem}
\label{thm:membership-Lambda-delta}
Let $B(\overline{\delta})\eqqcolon \min\{B_1, B_2\}$, where $B_1$ is as defined in Eqn.~\eqref{eqn:uniqueness-condition}, and $B_2$ is as defined in the above discussion. Suppose, for a given tree network topology, an arrival rate vector $\mathbf{\lambda}$ satisfies 
\begin{align}
\sum_{k=1}^m \lambda_kh_k < B(\overline{\delta})\label{eqn:pdel-membership}
\end{align}
where, $m$ is the number of sources, and $h_k$ is the hop count on the path from source $k$ to the sink. Then, $\mathbf{\lambda}\in \hhLambda_{\overline{\delta}}$.
\end{theorem}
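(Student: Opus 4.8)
The plan is to assemble the chain of cause-and-effect implications displayed just before the theorem, taking care that every link in the chain is licensed by a previously established result. The quantity we must ultimately control is $\max_{1\leq i\leq N}\delta_i$, and the entry point is the single hypothesis $\sum_{k=1}^m \lambda_k h_k < B(\overline{\delta}) = \min\{B_1,B_2\}$.

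First I would use the fact that the hypothesis forces $\sum_k \lambda_k h_k < B_1(\overline{\delta})$, so that condition~\eqref{eqn:uniqueness-condition} is satisfied. By Proposition~\ref{prop:vector-uniqueness} this places the system in the uniqueness regime, where the simplified vector fixed point equations~\eqref{eqn:simple-tau-i},\eqref{eqn:simple-alpha} admit a unique solution $\{\overline{\tau}_{-i}\}_{i=1}^N$ in $[0,a]^N$; by Proposition~\ref{prop:scalar-uniqueness} the scalar fixed point equations~\eqref{eqn:scalar-tau},\eqref{eqn:scalar-alpha} likewise admit a unique solution $\overline{\tau}$ in $[0,a]$. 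Having both fixed points well-defined is exactly what is needed to invoke Proposition~\ref{prop:scalar-vector-fp-relation}, which yields the domination $\overline{\tau}\geq \max_{1\leq i\leq N}\overline{\tau}_{-i}$.

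Next I would use the other half of the hypothesis, $\sum_k \lambda_k h_k < B_2(\overline{\delta})$. By the monotonicity of $\overline{\tau}$ in the total load $\sum_k \lambda_k h_k$ (Proposition~\ref{prop:tau-inc-total-load}) together with the defining property of $B_2$, this gives $\overline{\tau}\leq \tau_{\max}$, where $\tau_{\max}$ is the threshold on $\overline{\tau}_{-i}$ guaranteeing $\delta_i\leq \overline{\delta}$. Chaining the two inequalities produces $\max_{1\leq i\leq N}\overline{\tau}_{-i}\leq \overline{\tau}\leq \tau_{\max}$. Finally, Proposition~\ref{prop:delta-inc-tau}, namely the monotonicity of $\delta_i$ in $\overline{\tau}_{-i}$, together with the definition of $\tau_{\max}$ converts this into $\max_{1\leq i\leq N}\delta_i\leq \overline{\delta}$, which is precisely the membership statement $\mathbf{\lambda}\in\hhLambda_{\overline{\delta}}$.

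The only delicate point, and the reason the bound must be a minimum of two separate quantities, is that the key domination $\overline{\tau}\geq\max_i\overline{\tau}_{-i}$ of Proposition~\ref{prop:scalar-vector-fp-relation} is valid \emph{only} in the uniqueness regime. Controlling the load via $B_2$ alone would therefore not suffice: one must simultaneously stay below $B_1$ so that both fixed points are unique and the scalar-dominates-vector comparison is legitimate. Taking $B(\overline{\delta})=\min\{B_1,B_2\}$ is exactly what secures both prerequisites at once, so no genuinely new estimate is required beyond threading the previously proved propositions together in the correct order.
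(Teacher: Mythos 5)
Your proposal is correct and follows essentially the same route as the paper: the paper also proves the theorem by chaining $\sum_k \lambda_k h_k < B_2 \Rightarrow \overline{\tau}\leq\tau_{\max} \Rightarrow \max_i\overline{\tau}_{-i}\leq\tau_{\max} \Rightarrow \max_i\delta_i\leq\overline{\delta}$ via Propositions~\ref{prop:tau-inc-total-load}, \ref{prop:scalar-vector-fp-relation}, and \ref{prop:delta-inc-tau}, and notes that this chain is licensed only when both fixed points are unique, which is exactly what staying below $B_1$ (Eqn.~\eqref{eqn:uniqueness-condition}) guarantees. Your closing remark on why the minimum $\min\{B_1,B_2\}$ is needed is precisely the paper's own justification.
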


It follows that an arrival rate vector $\{\lambda_k\}_{k=1}^m\in \tLambda_{\deltabar}(T)$ for a given tree $T$ \emph{iff} it satisfies Equation~\eqref{eqn:pdel-membership}.

\section{Derivation of a Sufficient Condition for the Membership of $\hhLambda_{\deltabar}\cap\hhLambda_{\dbar}$}
\label{sec:arguments-leading-to-thm3}

In this section, we shall derive the structure of the set $\tLambda_{\dbar,\deltabar}(T)$ for a given tree $T$. 

\subsection{Dependence of $\mathbb{E}(S_i)$ and $c_{S_i}^2$ on $\overline{\tau}_{-i}$}

We make the following claim. Since the proof is short, we provide the proof here itself. 

\begin{lemma}
\label{lem:service-rate-inc-tau}
$\mathbb{E}(S_i)$ and $c_{S_i}^2$ are monotonically increasing in $\overline{\tau}_{-i}$.
\end{lemma}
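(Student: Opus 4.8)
The plan is to reduce the dependence of both $\mathbb{E}(S_i)$ and $c_{S_i}^2$ on $\overline{\tau}_{-i}$ to the behaviour of two scalar quantities: the collision-plus-error probability $\gamma_i$, and the composite rate $x \eqqcolon \beta_i(1-\alpha_i)$. Rewriting Equation~\eqref{eqn:mean-serv-rate-delay} as $\mathbb{E}(S_i) = \frac{1}{x(1-\gamma_i)} + \frac{T_{\mathrm{tx}}}{1-\gamma_i}$ and reading off Equation~\eqref{eqn:coeff-var-serv-time} as $c_{S_i}^2 = \gamma_i + (1+xT_{\mathrm{tx}})^{-2}$, both expressions are seen to be increasing in $\gamma_i$ and decreasing in $x$: raising $\gamma_i$ shrinks the factor $(1-\gamma_i)$ in the two denominators of $\mathbb{E}(S_i)$ and adds directly to $c_{S_i}^2$, while lowering $x$ shrinks the denominators $x(1-\gamma_i)$ and $(1+xT_{\mathrm{tx}})^2$. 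Hence it suffices to show that, as $\overline{\tau}_{-i}$ increases, $\gamma_i$ increases while $x$ decreases.

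The first of these is exactly Lemma~\ref{lem:alpha-gamma-inc-tau}, which already gives that both $\alpha_i$ and $\gamma_i$ are increasing in $\overline{\tau}_{-i}$. The second is the heart of the argument, and it hinges on a telescoping identity. Using Equation~\eqref{eqn:nh-beta}, namely $\beta_i = (1+\alpha_i+\cdots+\alpha_i^{n_c-1})/\overline{B}_i$, together with the factorisation $(1-\alpha_i)(1+\alpha_i+\cdots+\alpha_i^{n_c-1}) = 1-\alpha_i^{n_c}$, I would write $x = \beta_i(1-\alpha_i) = \frac{1-\alpha_i^{n_c}}{\overline{B}_i}$. Since $\alpha_i$ increases with $\overline{\tau}_{-i}$ (again by Lemma~\ref{lem:alpha-gamma-inc-tau}) and $\alpha_i\in[0,1)$, the numerator $1-\alpha_i^{n_c}$ decreases, so $x$ is decreasing in $\overline{\tau}_{-i}$ provided the denominator $\overline{B}_i$ does not decrease.

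Combining the two monotonicities then closes the proof: as $\overline{\tau}_{-i}$ grows, $\gamma_i$ grows and $x$ shrinks, and by the reductions of the first paragraph each of $\mathbb{E}(S_i)$ and $c_{S_i}^2$ is pushed upward by both effects simultaneously, so both are monotonically increasing in $\overline{\tau}_{-i}$.

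The main obstacle is precisely the caveat flagged at the end of the second paragraph, the behaviour of the mean backoff time $\overline{B}_i$. If $\overline{B}_i$ is treated as a pure protocol constant (as its description in Section~\ref{subsec:simplified-analysis} suggests), then $x = (1-\alpha_i^{n_c})/\overline{B}_i$ is decreasing in $\alpha_i$ immediately. To be safe I would also record the structural fact that reaching the $k$-th backoff stage requires $k$ prior CCA failures, so that $\overline{B}_i$ has the form $\sum_{k=0}^{n_c-1}\alpha_i^{k}\,\overline{W}_k$ with $\overline{W}_k\geq 0$ the mean duration of stage $k$; this is visibly non-decreasing in $\alpha_i$, whence the numerator of $x$ falls while the denominator (weakly) rises and $x$ is strictly decreasing in $\overline{\tau}_{-i}$ under either reading. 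This is the only point in the argument that uses anything beyond the displayed formulas of Section~\ref{subsec:simplified-analysis} and Lemma~\ref{lem:alpha-gamma-inc-tau}, so I would state it explicitly.
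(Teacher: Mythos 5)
Your proof is correct and takes essentially the same route as the paper's: reduce both quantities to functions that are increasing in $\gamma_i$ and decreasing in $\beta_i(1-\alpha_i)$, then invoke Lemma~\ref{lem:alpha-gamma-inc-tau}. The one difference is in your favour: where the paper simply asserts that $\beta_i$ is monotonically decreasing in $\alpha_i$ (a fact that actually hinges on $\overline{B}_i$ growing with $\alpha_i$, and would fail if $\overline{B}_i$ were a constant), your telescoping identity $\beta_i(1-\alpha_i)=(1-\alpha_i^{n_c})/\overline{B}_i$ delivers the needed monotonicity of the product under either reading of $\overline{B}_i$.
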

\begin{proof}
It is easy to see from Eqn.~\eqref{eqn:mean-serv-rate-delay} that $\mathbb{E}(S_i)$ is coordinatewise monotonically increasing in $\alpha_i$ and $\gamma_i$ (since, $\beta_i$ is monotonically decreasing in $\alpha_i$). Then, the claim follows from Lemma~\ref{lem:alpha-gamma-inc-tau}. 

The argument for $c_{S_i}^2$ is identical, starting with Eqn.~\eqref{eqn:coeff-var-serv-time}.
\end{proof}

\subsection{A bound on $\overline{\Delta}_i$}
Recall from Proposition~\ref{prop:delta-inc-tau} that the constraint on the packet discard probability, namely $\max_{1\leq i\leq N}\delta_i\leq \overline{\delta}$, translates to a constraint on $\overline{\tau}_{-i}$, namely $\max_{1\leq i\leq N}\overline{\tau}_{-i}\leq \tau_{\max}$. This, together with Lemma~\ref{lem:service-rate-inc-tau}, implies a bound on $\mathbb{E}(S_i)$ and on $c_{S_i}^2$, namely that, for all $i$, $\mathbb{E}(S_i)\leq \overline{S}$, and $c_{S_i}^2\leq \overline{c_S^2}$. Noting that $\frac{\rho_i}{1-\rho_i}$ is monotonically increasing in $\rho_i$, This yields a bound on $\overline{\Delta}_i$ for fixed $\nu_i$, namely,

\begin{align}
\overline{\Delta}_i\leq \frac{\nu_i\overline{S}}{2(1-\nu_i\overline{S})}\overline{S}(1+\overline{c_S^2})+\overline{S}\nonumber
\end{align}

\subsection{A sufficient condition for membership of $\hhLambda_{\dbar}\cap \hhLambda_{\deltabar}$}

In order that $\overline{\Delta}_i\leq \overline{d}$ for all $i$, it suffices that for all $i$, $\frac{\nu_i\overline{S}}{2(1-\nu_i\overline{S})}\overline{S}(1+\overline{c_S^2})+\overline{S}\leq \overline{d}$. After some straightforward algebraic manipulations, this yields an upper bound, designated by $B^{\prime}(\overline{\delta},\overline{d})$, on $\nu_i$ for all $i$. Thus, we have the following theorem:
\begin{theorem}
\label{thm:membership-lambda-dmax}
If an arrival rate vector $\mathbf{\lambda}=\{\lambda_k\}_{k=1}^m$ satisfies Equation~\eqref{eqn:pdel-membership}, and the following holds:
\begin{align}
\max_{1\leq i\leq N}\nu_i \leq B^{\prime}(\overline{\delta},\overline{d})\label{eqn:dmax-membership}
\end{align}
then, $\mathbf{\lambda}\in\hhLambda_{\overline{\delta}}\cap\hhLambda_{\overline{d}}$. 
\end{theorem}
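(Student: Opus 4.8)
The plan is to establish Theorem~\ref{thm:membership-lambda-dmax} by chaining together the monotonicity results already developed, showing that the two hypotheses~\eqref{eqn:pdel-membership} and~\eqref{eqn:dmax-membership} jointly force the per-link mean delay below the target $\overline{d}$, while Equation~\eqref{eqn:pdel-membership} alone already secures membership in $\hhLambda_{\overline{\delta}}$ via Theorem~\ref{thm:membership-Lambda-delta}. Since $\hhLambda_{\overline{\delta}}\cap\hhLambda_{\overline{d}}$ is an intersection, it suffices to prove the two memberships separately and conjoin them. The first half, $\mathbf{\lambda}\in\hhLambda_{\overline{\delta}}$, is immediate: hypothesis~\eqref{eqn:pdel-membership} is precisely the condition of Theorem~\ref{thm:membership-Lambda-delta}, so I would simply invoke that theorem. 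The real content lies in proving $\mathbf{\lambda}\in\hhLambda_{\overline{d}}$, i.e.\ that the simplified delay analysis yields $\overline{\Delta}_i\leq \overline{d}$ for every node $i$.

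For the delay half, I would argue as follows. First, note that~\eqref{eqn:pdel-membership} places us in the uniqueness regime of Equation~\eqref{eqn:uniqueness-condition}, so the vector fixed point $\{\overline{\tau}_{-i}\}$ is well-defined, and by the discussion following Proposition~\ref{prop:delta-inc-tau} the discard constraint is equivalent to $\max_{1\leq i\leq N}\overline{\tau}_{-i}\leq \tau_{\max}$. By Lemma~\ref{lem:service-rate-inc-tau}, $\mathbb{E}(S_i)$ and $c_{S_i}^2$ are monotonically increasing in $\overline{\tau}_{-i}$, so the bound $\overline{\tau}_{-i}\leq\tau_{\max}$ transfers to uniform bounds $\mathbb{E}(S_i)\leq\overline{S}$ and $c_{S_i}^2\leq\overline{c_S^2}$, where $\overline{S}$ and $\overline{c_S^2}$ are the values attained at $\overline{\tau}_{-i}=\tau_{\max}$. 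Substituting these into the M/G/1 (Pollaczek--Khintchine) expression for $\overline{\Delta}_i$, and using that $\rho_i/(1-\rho_i)$ is increasing in $\rho_i$ together with $\rho_i\approx\nu_i\mathbb{E}(S_i)\leq\nu_i\overline{S}$, I obtain the stated bound $\overline{\Delta}_i\leq \frac{\nu_i\overline{S}}{2(1-\nu_i\overline{S})}\overline{S}(1+\overline{c_S^2})+\overline{S}$ displayed just before the theorem.

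It then remains to convert the target $\overline{\Delta}_i\leq\overline{d}$ into a bound on $\nu_i$. The right-hand side of the displayed inequality is monotonically increasing in $\nu_i$ (the factor $\nu_i/(1-\nu_i\overline{S})$ increases in $\nu_i$ on its domain of definition), so there is a unique threshold value of $\nu_i$ at which it equals $\overline{d}$; call this threshold $B^{\prime}(\overline{\delta},\overline{d})$. Solving the resulting inequality is a routine quadratic manipulation in $\nu_i$: isolating the fractional term and clearing the denominator $2(1-\nu_i\overline{S})$ gives a linear-in-$\nu_i$ condition after cancellation, yielding an explicit closed form for $B^{\prime}(\overline{\delta},\overline{d})$ in terms of $\overline{S}$, $\overline{c_S^2}$, and $\overline{d}$. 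Consequently $\max_{1\leq i\leq N}\nu_i\leq B^{\prime}(\overline{\delta},\overline{d})$, which is exactly hypothesis~\eqref{eqn:dmax-membership}, guarantees $\overline{\Delta}_i\leq\overline{d}$ for all $i$, hence $\mathbf{\lambda}\in\hhLambda_{\overline{d}}$.

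I expect the only genuine subtlety to be the legitimacy of the worst-case substitution: the bound $\overline{\Delta}_i\leq\frac{\nu_i\overline{S}}{2(1-\nu_i\overline{S})}\overline{S}(1+\overline{c_S^2})+\overline{S}$ decouples the dependence on $\overline{\tau}_{-i}$ (through $\overline{S},\overline{c_S^2}$) from the dependence on $\nu_i$, but $\nu_i$ and $\overline{\tau}_{-i}$ are not independent quantities in the fixed point. One must ensure the bound is valid \emph{for the fixed-point-consistent} values, which holds because both monotonicities point the same way: increasing load raises $\overline{\tau}_{-i}$ and hence $\overline{S},\overline{c_S^2}$, so replacing these by their $\tau_{\max}$-values is a genuine upper bound regardless of the actual $\nu_i$. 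A secondary point to verify is that $\nu_i\overline{S}<1$ throughout, i.e.\ that the bounding queue is stable; this follows because $B^{\prime}(\overline{\delta},\overline{d})<1/\overline{S}$ whenever $\overline{d}$ is finite, keeping the denominator $1-\nu_i\overline{S}$ strictly positive. With these two observations the chain of implications is complete and both set memberships follow.
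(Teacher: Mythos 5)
Your proposal is correct and follows essentially the same route as the paper: membership in $\hhLambda_{\overline{\delta}}$ via Theorem~\ref{thm:membership-Lambda-delta}, then the chain discard constraint $\Rightarrow$ $\max_i\overline{\tau}_{-i}\leq\tau_{\max}$ $\Rightarrow$ bounds $\overline{S},\overline{c_S^2}$ via Lemma~\ref{lem:service-rate-inc-tau} $\Rightarrow$ the Pollaczek--Khintchine bound on $\overline{\Delta}_i$ $\Rightarrow$ algebraic inversion to obtain $B^{\prime}(\overline{\delta},\overline{d})$. Your two added observations (that the worst-case substitution of $\overline{S},\overline{c_S^2}$ is legitimate despite the coupling between $\nu_i$ and $\overline{\tau}_{-i}$, and that $B^{\prime}<1/\overline{S}$ keeps the denominator positive) are points the paper leaves implicit, and they strengthen rather than alter the argument.
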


Note that from the definition of $\tLambda_{\deltabar}(T)$, and Theorem~\ref{thm:membership-lambda-dmax}, it also follows that $\tLambda_{\deltabar}(T)\supset \tLambda_{\dbar,\deltabar}(T)$. 

\section{Special Case: Equal Arrival Rates at All Sources}
\label{sec:equal-arrival-rates}

In this section, we shall focus on the notion of equal throughput, i.e., the scenario where the external arrival rates at all the sources are equal, say, $\lambda$. This leads to the following interesting consequences. 

\begin{enumerate}
\item Condition~\eqref{eqn:dmax-membership} reduces to $\lambda\max_{1\leq i\leq N}m_i\leq B^{\prime}$, whereas, condition~\eqref{eqn:pdel-membership} reduces to $\lambda\sum_{i=1}^N m_i \leq B$, where $m_i$ is the number of sources using node $i$. 
Now, for default backoff parameters, and for $\overline{\delta}=.0208$, $\overline{d}=20$ msec (corresponding to, for example, $\pdel=90\%$ and $\dmax=100$ msec for $h_{\max}=5$), it turns out that $B(\overline{\delta})=$  80.75 packets/sec, and $B^{\prime}(\overline{\delta},\overline{d})=$ 82.85 packets/sec, i.e., $B(\overline{\delta}) < B^{\prime}(\overline{\delta},\overline{d})$ so that for all $\lambda$ satisfying the discard probability objective~\eqref{eqn:pdel-membership}, $\lambda\max_{1\leq i\leq N}m_i\leq \lambda\sum_{i=1}^N m_i < B(\overline{\delta}) < B^{\prime}(\overline{\delta},\overline{d})$, i.e., the mean delay objective~\eqref{eqn:dmax-membership} is also satisified. Hence, it follows that for equal arrival rates, and default backoff parameters, $\tLambda_{\dbar,\deltabar}=\tLambda_{\deltabar}$. Furthermore, it was observed that this conclusion continues to hold for a range of QoS targets such that $.015 \leq \deltabar \leq 0.04$ and $20\text{ msec}\leq \dbar \leq 45$ msec.

\item \textbf{A simple network design criterion:} Suppose we are given a graph $G=(V,E)$ with $V=Q\cup R$, where $R$ is a set of \emph{potential} locations where one can place relays, and $E$ is the set of \emph{admissible} edges with PER at most $l$. Our objective is to design a tree network spanning the sources and the BS, possibly using a few relays, such that the resulting network meets a given hop constraint $h_{\max}$ on each source-sink path, and meets given per-hop discard probability and mean delay targets, while achieving a large throughput region. 

In light of the discussion in Item 1 above, for default backoff parameters and reasonable QoS targets, an approximate lower bound to the throughput of a given tree network topology is $\frac{B(\overline{\delta})}{\sum_{k=1}^m h_k}$ (using $\tLambda_{\dbar,\deltabar}$ as an approximate inner bound to $\Lambda_{\deltabar,\dbar}$; recall the set relationships from Section~\ref{subsec:contributions}). This, in turn, gives a \emph{simple criterion for throughput optimal network design} for the case of equal arrival rates, namely, \emph{obtain a tree that minimizes the total hop count from the sources to the sink, $\sum_{k=1}^m h_k$}; this is nothing but the \emph{shortest path tree} (with hop count as cost).\footnote{Note that if the hop constraint, $h_{\max}$, is feasible, the shortest path tree will also meet the hop constraint.} Note that this is simply a re-wording of Equation~\eqref{eqn:spt-throughput-optimality} stated in Section~\ref{subsec:contributions}.

Observe that the same criterion holds for the notion of max min throughput (see, for example, \cite{bhattacharya-kumar14tr-qos-aware-nw-design-csma} for a formal definition), since the max-min is achieved along the direction of equal arrival rates. 
\end{enumerate}

As we shall see in our numerical experiments (Section~\ref{sec:numerical-results}), although the shortest path tree criterion is based on an approximation to $\Lambda_{\deltabar,\dbar}$, it, in fact, achieves better throughput than a wide range of competing topologies.  

\if 0
\section{QoS Aware Network Design with Minimum Relay Count under Positive Traffic Load}
Consider the setting described in Section~\ref{subsubsec:setting}. Suppose we are given the arrival rate $\lambda$ at each source node (assumed to be the same for all sources), and also the QoS requirements, namely $p_{\mathrm{del}}$ and $d_{\max}$. Our objective is \emph{to extract from the graph $G=(V, E)$, a tree spanning the sources, and rooted at the sink such that for any arrival rate up to $\lambda$, the path from each source to the sink satisifies the QoS requirements, and the number of relays used is minimized.}

Recall from Section~\ref{sec:discard-prob-upper-bound} that under the lone packet model, the QoS constraints can be mapped to a hop constraint on the path from each source to the sink. Thus, under zero load, the above QoS aware, cost optimal network design problem reduces to the hop constrained, node weighted minimum Steiner tree problem \cite{fullpaper, klein}. We proposed in \cite{iwqos, fullpaper}, the SPTiRP algorithm to solve this NP-Hard problem approximately. 
\fi

\section{Numerical Results}
\label{sec:numerical-results}

\subsection{The setting}
\label{subsec:experiment-setting}
We conducted all our numerical experiments using the default protocol parameters of IEEE~802.15.4 CSMA/CA. In particular, we assumed $n_c = 5$, $n_t = 4$, and default back-off parameters (\cite{IEEE}) in all the experiments. 

We chose packet error rate (PER) on each link to be $l=2\%$, and packet length $T=131$ bytes\footnote{This includes 70 bytes of data, 8 bytes of UDP header, 20 bytes IP header, 27 bytes MAC header, and 6 bytes Phy header.}. The target packet discard probability, $\overline{\delta}$, was chosen to be 0.0208, and the target single-hop mean delay was chosen to be 20 msec (which correspond, for example, to a target end-to-end delivery probability $p_{\mathrm{del}}=90\%$, and target end-to-end mean delay $d_{\max}=100$ msec for a hop constraint $h_{\max}=5$). We assumed equal arrival rates at all the sources for the experiments. 

All the experiments were conducted on 5 random network topologies, each of which was generated as follows: 10 sources and 30 potential relay locations were selected uniformly at random over a 150$\times$150 $m^2$ area. The sink was chosen at a corner point of the area. A Steiner tree was then formed connecting the sources to the sink, using no more than 4 relays\footnote{This restriction was imposed since, in practice, the network planner would like to budget the use of additional relays.}, such that the hop count from each source to the sink was at most 5. The algorithm used to form this Steiner tree is a variation of the SPTiRP algorithm (\cite{fullpaper}), and is described in the Appendix. 

\subsection{Model validation}
\subsubsection{Accuracy of the simplified vector fixed point equations}
\label{subsubsec:vector-fp-accuracy}
To verify the accuracy of the simplified vector fixed point equations~\eqref{eqn:simple-tau-i} in the uniqueness regime (described by Eqn.~\eqref{eqn:uniqueness-condition}) w.r.t the original, more involved fixed point equations described in Section~\ref{subsec:full-analysis}, we analyzed each of the 5 test networks using both the original equations, and the simplified equations for several different arrival rates within the uniqueness regime for that topology\footnote{Since we are considering equal arrival rates at all the sources, the uniqueness regime for a particular network is given by an upper bound on the arrival rate $\lambda$ determined by the R.H.S of Eqn.~\eqref{eqn:uniqueness-condition}, and the total hop count of the concerned network.} For each topology, we computed the worst case percentage error (over all the nodes, and all the arrival rates) in the simplified fixed point $\{\overline{\tau}_{-i}\}_{i=1}^N$ w.r.t their values obtained from the original analysis. We also computed the worst case error in $\max_{1\leq i\leq N}\delta_i$, the maximum packet discard probability over all the nodes in the network. Note that when $\max_{1\leq i\leq N}\delta_i\leq 0.002$, even an absolute error of 0.0005 would result in a percentage error of 25\%. We, therefore, adopt the following convention for reporting the errors in $\max_{1\leq i\leq N}\delta_i$. For arrival rates at which $\max_{1\leq i\leq N}\delta_i\leq 0.002$, we report the worst case \emph{absolute} error (over all such arrival rates, and all the nodes in the network); for arrival rates at which $\max_{1\leq i\leq N}\delta_i> 0.002$, we report the worst case \emph{percentage} error (over all such arrival rates, and all the nodes in the network). Finally, we also computed the worst case percentage error in the end-to-end probability of delivery (over all arrival rates, and all the sources in the network). Table~\ref{tbl:accuracy-vector-fp} summarizes the results.

\begin{table*}[ht]
  \centering
\caption{Worst case (over all nodes and all arrival rates) errors of the simplified vector fixed point scheme w.r.t the original fixed point scheme in the uniqueness regime}
\label{tbl:accuracy-vector-fp}
\footnotesize
  \begin{tabular}{|c|c|c|c|c|c|c|c|}\hline
    Topology & Node & Total hop & \% error  & Absolute error & \% error & \% error & \% error \\
& count & count & in $\overline{\tau}_{-i}$ &  in $\max_{1\leq i\leq N}\delta_i$ & in $\max_{1\leq i\leq N}\delta_i$ & in end-to-end & in end-to-end \\
& & & & for $\displaystyle{\max_{1\leq i\leq N}}\delta_i\leq 0.002$ & for $\displaystyle{\max_{1\leq i\leq N}}\delta_i> 0.002$ & delivery probability & delay\\
 \hline
    1 & 14 & 36 & 8.09 & .00026 & 12.5 & .1095 & 5.1\\
 \hline
    2 & 10 & 25 & 10.05 & .00007 & 0.7 & .0236 & 4.98\\
 \hline	
    3 & 12 & 26 & 9.85 & .00016 & 4.74 & .0492 & 5.7\\
 \hline
    4 & 12 & 27 & 8.29 & .00022 & 5.76 & .0739 & 4.83\\
 \hline
    5 & 12 & 22 & 9.77 & .00015 & 5.33 & .0487 & 5.6\\  
  \hline
\end{tabular}
\normalsize
\end{table*}

\gap
\noindent
\textbf{Observations:} 
\begin{enumerate}
\item The error in $\overline{\tau}_{-i}$ never exceeded 10.05\%.
\item The error in $\max_{1\leq i\leq N}\delta_i$ never exceeded 6\% in the regime where $\max_{1\leq i\leq N}\delta_i> 0.002$.
\item The error in end-to-end delay never exceeded 6\%. 
\item The error in end-to-end delivery probability was negligibly small over the entire uniqueness regime, never exceeding 0.11\%, i.e., \emph{the simplified fixed point equations predicted the end-to-end delivery probability extremely accurately}. 
\end{enumerate} 

\gap
\noindent
\remark In this paper, we have chosen to compare the results obtained from our simplified fixed point equations against those obtained from the detailed fixed point equations in \cite{srivastava} as opposed to comparing against simulation results from the original system. The reason for this choice is as follows: the detailed fixed point equations were shown to be very accurate (well within 10\% compared to simulations) in the regime where the discard probability on a link is close to the link PER. Since we are interested in this low discard regime, we can compare our results against these detailed equations instead of the more time consuming simulations. Note that in this low discard regime, an error of 10\% w.r.t the detailed equations translates to an error of at most 19\%-21\% w.r.t the original system.   

\subsubsection{Validity of the equality assumption}
\label{subsubsec:equality-validation}
The numerical experiments leading to Observation~\ref{cor:equality} were conducted as follows: we analyzed each of the 5 test networks using the full analysis (described in Section~\ref{subsec:full-analysis}) to obtain the vector $\{\overline{\tau}_{-i}\}_{i=1}^N$ for several different arrival rates ranging from 0.001 pkts/sec to 4 pkts/sec. Then, we computed Jain's fairness index \cite{fairness-ix} for each of these vectors. The Jain's fairness index is a measure of fairness (equality) among the components in a given vector $\mathbf{x}=\{x_1,\ldots, x_N\}$, and is computed as 

\begin{align}
J(\mathbf{x}) = \frac{(\sum_{i=1}^N x_i)^2}{N\sum_{i=1}^N x_i^2}
\end{align}
$J(\x)=1$ when all the components are exactly equal, and is $k/N$ when $k$ components have equal non-zero values, and the rest of the components are zero. In particular, \emph{closer the value of $J(\x)$ to 1, better is the fairness among the components}. 

As it turned out, in all our experiments, $J(\{\overline{\tau}_{-i}\}_{i=1}^N\})\gg \frac{N-1}{N}$, and within $1.5\%$ of 1, indicating that the components were indeed roughly equal. The results are summarized in Table~\ref{tbl:fairness}, where, for compact representation, we have reported, for each topology, the \emph{least} value of the fairness index over all the arrival rates.

\begin{table*}[ht]
  \centering
\caption{Validation of Observation~\ref{cor:equality}, using Jain's Fairness Index; fairness index of 1 implies exact equality}
\label{tbl:fairness}
\footnotesize
  \begin{tabular}{|c|c|c|c|}\hline
    Topology & Node count & Worst case fairness index\\
   \hline
   1 & 14 & 0.99608\\
   \hline
   2 & 10 & 0.98741\\	
  \hline
   3 & 12 & 0.99334\\
  \hline
   4 & 12 & 0.99536\\
  \hline
   5 & 12 & 0.99314\\   
  \hline
\end{tabular}
\normalsize
\end{table*}

\subsubsection{Tightness of the bound \eqref{eqn:scalar-vector-fp-relation}}
\label{subsubsec:upper-bound-validation}

To check the tightness of the bound given by Eqn.~\eqref{eqn:scalar-vector-fp-relation} in the uniqueness regime (described by Eqn.~\ref{eqn:uniqueness-condition}), we analyzed each of the 5 test networks using both the simplified vector fixed point equations~\eqref{eqn:simple-tau-i}, \eqref{eqn:simple-alpha} (to obtain $\max_{1\leq i\leq N}\overline{\tau}_{-i}$), and the scalar fixed point equations~\eqref{eqn:scalar-tau}, \eqref{eqn:scalar-alpha} (to obtain $\overline{\tau}$) for several different arrival rates within the uniqueness regime for that topology. The uniqueness regime for a particular network can be computed as described in Section~\ref{subsubsec:vector-fp-accuracy}. For each topology, we computed the worst case percentage error between $\max_{1\leq i\leq N}\overline{\tau}_{-i}$ and $\overline{\tau}$ over all the arrival rates. The results are summarized in Table~\ref{tbl:tightness-upper-bound}. We can see from the table that the bound is tight to within $3\%$ even in the worst case over all the arrival rates tested. 

\begin{table*}[ht]
  \centering
\caption{Slackness in the bound \eqref{eqn:scalar-vector-fp-relation}}
\label{tbl:tightness-upper-bound}
\footnotesize
  \begin{tabular}{|c|c|c|c|}\hline
    Topology & Node count & Worst case \% slack\\
             &            & in bound~\eqref{eqn:scalar-vector-fp-relation}\\
             &            & over all tested arrival rates\\
   \hline
   1 & 14 & 0.945\\
   \hline
   2 & 10 & 2.2\\	
  \hline
   3 & 12 & 2.055\\
  \hline
   4 & 12 & 1.912\\
  \hline
   5 & 12 & 2.801\\   
  \hline
\end{tabular}
\normalsize
\end{table*}
 
\subsection{Throughput optimality of the shortest path tree}
\label{subsec:spt-throughput}
To verify the throughput performance of shortest path trees, we generated 5 random instances, each with 10 sources, and 30 potential relay locations deployed uniformly over a $150\times 150\:m^2$ area.

\subsubsection{Comparison against competing topologies}
\label{subsubsec:competing-topologies-compare}

We are looking for a design that uses a small number of nodes and has a large throughput region for the given target QoS. Since an exhaustive search for the optimal throughput over all possible Steiner trees is computationally impractical, we proceeded as follows to compute an estimate of the optimal throughput for each instance.

Intuitively, two topological properties can affect the throughput of a given network, namely, the total hop count (i.e., the total load in the network), and the number of nodes in the network. Since the SPT gives the least total hop count, clearly there is no need to look at designs that use more relays than an SPT. Hence, for each instance, we first constructed an arbitrary shortest path tree, $T_{\mathrm{SPT}}$, connecting the sources to the sink. Let $n_{\mathrm{SPT}}$ be the number of relays in this SPT. We then generated many other candidate Steiner trees as follows: for each $n\in\mathbb{N}$ such that $0\leq n\leq n_{\mathrm{SPT}}$, we considered all possible combinations of $n$ relays, and with each of these combinations, we constructed a shortest path tree connecting the sources to the sink. If the resulting tree violated the hop constraint $h_{\max}=5$, it was discarded; otherwise, it was accepted as a candidate solution. 

The intuition behind this method of selecting candidate trees is as follows: for a deployment consisting of a \emph{fixed} number of nodes, among all possible topologies that one can construct over those nodes, the SPT achieves the least total hop count. Hence, it is likely to achieve better throughput than any other topology over those nodes. Hence, we consider only the shortest path tree over each combination of relays as a candidate solution. 

For the chosen QoS targets (see Section~\ref{subsec:experiment-setting}), we analyzed each of these candidate trees using the full analysis (Section~\ref{subsec:full-analysis}) for increasing arrival rates, starting from 0.001 pkts/sec, and obtained the maximum arrival rate up to which the target discard probability requirement was met. The maximum value of this arrival rate among all the candidate trees was taken as the estimate for the optimal throughput for that instance. Let us denote this as $\hat{\lambda}^\star$. This was compared against the throughput achieved by the initially constructed SPT, $T_{\mathrm{SPT}}$, obtained using the full analysis in the same manner as described above, and denoted by $\hat{\lambda}^{\mathrm{SPT}}$. \emph{For all the 5 instances, it turned out that $\hat{\lambda}^{\mathrm{SPT}}=\hat{\lambda}^\star$.} The results are summarized in Table~\ref{tbl:throughput-optimality}.

\begin{table*}[ht]
  \centering
\caption{Verification of throughput optimality of shortest path tree}
\label{tbl:throughput-optimality}
\footnotesize
  \begin{tabular}{|c|c|c|}\hline
    Scenario & Maximum throughput & Maximum throughput\\
             & among candidate trees & of $T_{\mathrm{SPT}}$\\
             & $\hat{\lambda}^\star$(pkts/sec) & $\hat{\lambda}^{\mathrm{SPT}}$(pkts/sec) \\
              
 \hline
    1 & 5  & 5\\
 \hline
    2 & 4 & 4\\
 \hline
    3 & 3 & 3\\
 \hline
    4 & 5 & 5\\
 \hline
    5 & 5 & 5\\
\hline
\end{tabular}
\normalsize
\end{table*}

\subsubsection{Comparison against the outcome of the SPTiRP algorithm\cite{bhattacharya-kumar14comnet}}
\label{subsubsec:comparison-against-sptirp}

For each instance, we also computed a hop count feasible Steiner tree, $T_{\mathrm{SPTiRP}}$ using the SPTiRP algorithm proposed in \cite{bhattacharya-kumar14comnet} ($T_{\mathrm{SPT}}$ was used as the initial feasible solution in running the SPTiRP algorithm). \emph{The idea is to check how much we gain in terms of throughput by using a shortest path design (without relay count constraint) instead of the SPTiRP design which uses a nearly minimum number of relays}. Hence, for this Steiner tree also, we computed its throughput using the full analysis in the same manner as described above. Furthermore, for the chosen QoS targets, we computed the inner bound on the throughput (maximum arrival rate) as predicted by Eqn.~\ref{eqn:pdel-membership} for both $T_{\mathrm{SPT}}$ and $T_{\mathrm{SPTiRP}}$.\footnote{Note that for the other candidate trees, we did not compute the inner bound as it is clear from Eqn.~\ref{eqn:spt-throughput-optimality} that the SPT will have the maximum value of this inner bound among all the candidate solutions.} The results are summarized in Table~\ref{tbl:spt-throughput}. 

\begin{table*}[ht]
  \centering
\caption{Throughput comparison of the shortest path tree and the SPTiRP design}
\label{tbl:spt-throughput}
\footnotesize
  \begin{tabular}{|c|c|c|c|c|}\hline
    Scenario & \multicolumn{2}{c|}{Predicted $\lambda_{\max}$(pkts/sec)} & \multicolumn{2}{c|}{$\lambda_{\max}$(pkts/sec)} \\
             & \multicolumn{2}{c|}{from formula} & \multicolumn{2}{c|}{from full analysis}\\
              
             &  $T_{\mathrm{SPTiRP}}$ & $T_{\mathrm{SPT}}$ & $T_{\mathrm{SPTiRP}}$ & $T_{\mathrm{SPT}}$\\
 \hline
    1 & 2.605 & 3.511 & 3.5 & 5\\
 \hline
    2 & 2.375 & 2.785 & 3 & 4\\
 \hline
    3 & 2.019 & 2.243 & 2.9 & 3\\
 \hline
    4 & 3.106 & 3.67 & 4 & 5\\
 \hline
    5 & 2.884 & 3.67 & 4 & 5\\
\hline
\end{tabular}
\normalsize
\end{table*}

From Table~\ref{tbl:spt-throughput}, we observe the following:

\begin{enumerate}
\item As we had predicted, the shortest path tree always achieved better throughput than the outcome of the SPTiRP algorithm.
\item However, even the SPTiRP design \emph{can operate at a significant positive load}, while possibly saving on the number of relays used. Hence, when relays are costly, the design based on the SPTiRP algorithm can be used instead of the shortest path tree without much loss in throughput.
\item Finally, comparing column~2 against column~4, and column~3 against column~5, we observe that the inner bound on the throughput region ($\tLambda_{\deltabar,\dbar}$) predicted by our formula are within 30\% of the throughput region obtained using the detailed analysis ($\hLambda_{\deltabar,\dbar}$).
\end{enumerate}

\subsubsection{Effect of slight fluctuation of arrival rates about the throughput}
In the previous experiments, we have observed that when the arrival rates at all the sources are equal, the shortest path tree is approximately throughput optimal, i.e., it can sustain a higher arrival rate than all other competing topologies without violating QoS. In this section, we intend to see what happens to the QoS performance of the shortest path tree when the arrival rates at the sources are independently subjected to small deviations from the maximum equal arrival rate that the topology can carry without violating QoS (i.e., its throughput).   

To do this, we proceeded as follows. For each instance, we first obtained the throughput of the shortest path tree in the manner described in Section~\ref{subsubsec:competing-topologies-compare}. See, for example, column~3 of Table~\ref{tbl:throughput-optimality}. For half of the sources, their arrival rates were incremented from the equal throughput by a value chosen independently and uniformly from the set $\{0.01, 0.02, 0.03, 0.04, 0.05\}$. For the remaining half of the sources, their arrival rates were decremented from the equal throughput by a value chosen independently and uniformly from the set $\{0.01, 0.02, 0.03, 0.04, 0.05\}$. Consider, for example, Scenario~1 from Table~\ref{tbl:throughput-optimality}. The maximum equal throughput is 5 packets/sec. For 5 sources, the arrival rates were chosen uniformly and independently from the set $\{5.01, 5.02, 5.03, 5.04, 5.05\}$ packets/sec. For the remaining 5 sources, the arrival rates were chosen uniformly and independently from the set $\{4.99, 4.98, 4.97, 4.96, 4.95\}$ packets/sec. For the resulting arrival rate vector, the shortest path tree topology was analyzed using the detailed fixed point analysis, and the resulting maximum discard probability and maximum mean delay (over all links) were observed, and compared against the target values, namely, $\deltabar = 0.0208$ and $\dbar = 20$ msec. For brevity, we summarize the results for 5 instances (the same ones as in Table~\ref{tbl:throughput-optimality}) in Table~\ref{tbl:spt-test-unequal-arrival-rates}, where we also report for each instance, the average arrival rate (averaged over the sources), as well as the corresponding `equal' throughput (obtained in Table~\ref{tbl:throughput-optimality}). Our observations for all the 60 scenarios tested are summarized at the end of this section.

\begin{table*}[ht]
  \centering
\caption{Worst case (over all links) values of discard probability and single-hop mean delay for the shortest path tree under slight deviations in arrival rates from the maximum sustainable equal arrival rate}
\label{tbl:spt-test-unequal-arrival-rates}
\scriptsize
  \begin{tabular}{|c|c|c|c|c|c|c|}\hline
    Scenario & Maximum sustainable & Mean & Worst case & \% error & Worst case & \% error\\
          & equal arrival rate       & arrival rate & discard probability & w.r.t target & mean single-hop & w.r.t target\\
         &   (pkts/sec)                 &  (pkts/sec)  &                               &                    &   delay (in msecs) &                  \\
 \hline
    1 & 5 & 4.997 & 0.0198 & 0 & 9.8 & 0\\
 \hline
    2 & 4 & 4.006 & 0.0210 & 0.3745 & 9.7 & 0\\
 \hline
    3 & 3 & 3.007 & 0.0147 & 0 & 9.2 & 0 \\
 \hline
    4 & 5 & 5.003 & 0.0158 & 0 & 9.3 & 0\\
 \hline
    5 & 5 & 5.004 & 0.0159 & 0 & 9.3 & 0\\
\hline
\end{tabular}
\normalsize
\end{table*} 

\textbf{Observations}
\begin{enumerate}
\item The mean delay requirement was never violated in any of the 60 scenarios tested.
\item Except in Scenario 2, the discard probability target was never violated in any of the 60 scenarios tested.
\item Even in Scenario 2, the violation in the discard probability requirement was only by an insignificant value of 0.3745\%.
\end{enumerate}
Thus, we can conclude that small deviations in the arrival rates from the maximum sustainable equal arrival rate have no significant impact on the QoS performance of the shortest path tree.

\subsubsection{Verification by practical experiments}

So far, we have verified our analytical prediction through numerical experiments and simulations. It is, however, of interest to see whether the predictions hold good in an actual deployment. First recall that we had argued in Section~\ref{sec:equal-arrival-rates} that for the chosen QoS targets, if the packet delivery probability requirement is met for every source at an arrival rate, then the delay requirement is also met. A similar conclusion was also drawn from our simulation results in \cite{srivastava}. Thus, it suffices to verify if the packet delivery probability requirement is met at the maximum arrival rate predicted by our analysis. To this end, we proceeded as follows. We created 5 arbitrary tree topologies, each with no hidden nodes, in a small area inside our lab. The details of each of the topologies are given in Table~\ref{tbl:field-trials}. We used TelosB motes running TinyOS-2.x operating system for the experiments.  We used a packet size of 131 bytes. Owing to the close proximity of the motes in our deployments, the link packet error rate turns out to be very small, namely, 0.00001 at the worst. For each topology, we first obtained its throughput using the detailed analysis (\cite{srivastava}) in the manner described in Section~\ref{subsubsec:competing-topologies-compare}. Then, for each topology, we generated packets from each source at that arrival rate according to a Poisson process, and routed the packets to the sink along the routes specified by the tree. The experiment was terminated when at least 3000 packets were generated from each source. To avoid computational and memory overheads at the nodes, instead of measuring the discard probability on each link, we measured the packet delivery ratio for each source at the base station, and compared that to the predicted packet delivery ratio for the target per link packet discard probability of 0.0208. Table~\ref{tbl:field-trials} summarizes the results, where, for brevity, we provide for each topology, the worst case packet delivery ratio (\emph{over all sources}). 

\begin{table*}[ht]
  \centering
\caption{Results from experiments with actual motes; column 7 reports the measured worst case packet delivery probability \emph{over all sources} for each topology}
\label{tbl:field-trials}
\scriptsize
  \begin{tabular}{|c|c|c|c|c|c|c|c|}\hline
    Scenario &Number of & Total number & Total hop & Maximum hop & Predicted throughput  & Measured & Predicted\\
          & sources ($m$) & of nodes  & count ($\sum_{k=1}^m h_k$) & count ($\max_{1\leq k\leq m}h_k$) & from analysis (pkts/sec) & worst case & delivery\\
          &               &           &            &             &                       &  delivery probability & probability\\
         
 \hline
    1 & 4 & 8 & 10 & 3 & 8.1 & 0.962 & 0.9376\\
 \hline
    2 & 5 & 13 & 20 & 4 & 6 & 0.936 & 0.9168\\
 \hline
    3 & 8 & 14 & 18 & 3 & 6 & 0.945 & 0.9376\\
 \hline
    4 & 4 & 11 & 24 & 7 & 5 & 0.928 & 0.8544\\
 \hline
    5 & 7 & 14 & 20 & 3 & 6 & 0.947 & 0.9376\\
\hline
\end{tabular}
\normalsize
\vspace{-4.5mm}
\end{table*} 
We observe from Table~\ref{tbl:field-trials} that for each of the tested topologies, the measured packet delivery probability at the arrival rate predicted by our analysis actually exceeds the packet delivery probability predicted from the target per link discard probability and the maximum number of hops, thus further validating the analytical model. 

\subsection{Sensitivity of the bounds to protocol parameters}
Finally, it is interesting to ask how the bounds derived in Sections~\ref{sec:arguments-leading-to-thm2} and \ref{sec:arguments-leading-to-thm3} vary with the parameters of the protocol. To check this, we proceeded as follows: first observe that the protocol parameters that affect the bounds are (i) $n_c$, the maximum number of CCA failures before packet discard, (ii) $n_t$, the maximum number of transmission failures before packet discard, (iii) the minimum back-off exponent \cite{IEEE}, which determines the range from which the first random back-off, and hence the subsequent random back-offs are sampled; \emph{this affects only the mean delay, and hence the bound $B^{\prime}(\deltabar,\dbar)$, but does not affect $B(\deltabar)$}.

To test the effect of each of these protocol parameters, we fixed the other protocol parameters at their default values (as specified in \cite{IEEE}), and varied the concerned parameter over a reasonable range, and for each value of the parameter, we computed $B_1(\deltabar)$, $B_2(\deltabar)$, $B(\deltabar)$, and $B^{\prime}(\deltabar,\dbar)$. The results are summarized in Tables~\ref{tbl:sensitivity-to-nc}, \ref{tbl:sensitivity-to-nt}, and \ref{tbl:sensitivity-to-minbo}. 

\begin{table*}[ht]
  \centering
\caption{Variation of the bounds with $n_c$, keeping the other protocol parameters fixed at their default values}
\label{tbl:sensitivity-to-nc}
\footnotesize
  \begin{tabular}{|c|c|c|c|c|}\hline
    $n_c$ & $B_1(\deltabar)$ & $B_2(\deltabar)$ & $B(\deltabar)$ & $B^{\prime}(\deltabar,\dbar)$\\
 \hline
    3 & 67.11 & 66 & 66 & 122.21\\
 \hline
    4 & 92.64 & 91 &	91	& 101.28\\
 \hline
    5 & 80.75 & 110.5 & 80.75 & 82.85\\
\hline
    6 & 62.22 & 126 & 62.22 &	67.18\\
\hline
\end{tabular}
\normalsize
\end{table*}  

\begin{table*}[ht]
  \centering
\caption{Variation of the bounds with $n_t$, keeping the other protocol parameters fixed at their default values}
\label{tbl:sensitivity-to-nt}
\footnotesize
  \begin{tabular}{|c|c|c|c|c|}\hline
    $n_t$ & $B_1(\deltabar)$ & $B_2(\deltabar)$ & $B(\deltabar)$ & $B^{\prime}(\deltabar,\dbar)$\\
 \hline
    2 & 80.75 & 107 & 80.75 & 85.32\\
 \hline
    3 & 80.75 & 110.5 &	80.75 & 82.85\\
 \hline
    4 & 80.75 & 110.5 & 80.75 & 82.85\\
\hline
    5 & 80.75 & 110.5 & 80.75 & 82.7\\
\hline
\end{tabular}
\normalsize
\end{table*}  

\begin{table*}[ht]
  \centering
\caption{Variation of the $B^{\prime}(\deltabar,\dbar)$ with minimum backoff exponent, keeping the other protocol parameters fixed at their default values}
\label{tbl:sensitivity-to-minbo}
\footnotesize
  \begin{tabular}{|c|c|c|c|c|}\hline
    Minimum backoff exponent & $B^{\prime}(\deltabar,\dbar)$\\
 \hline
    1 & 164.95\\
 \hline
    2 & 128.93\\
 \hline
    3 & 82.85\\
\hline
    4 & 35.67\\
\hline
\end{tabular}
\normalsize
\end{table*} 

\gap
\noindent
\textbf{Discussion:}

\begin{enumerate}
\item From Table~\ref{tbl:sensitivity-to-nc}, we observe the following:
\begin{enumerate}
\item $B_1(\deltabar)$ first decreases, then increases in $n_c$. This can be explained as follows: from the expression for $B_1(\deltabar)$ in \eqref{eqn:uniqueness-condition}, it is easy to show that the first term inside the $\min$ is monotonically increasing in $n_c$; it can also be shown using any mathematical toolbox (we used MATLAB R2011b) that the second term inside the $\min$ is monotonically decreasing in $n_c$ for $n_c \geq 1$. Moreover, these two functions cross over when $n_c\in(4,5)$, and for $n_c\leq 4$, the first term governs $B_1(\deltabar)$. This explains the observation that $B_1(\deltabar)$ increases with $n_c$ up to $n_c = 4$, and decreases thereafter.  
\item $B_2(\deltabar)$ is monotonically increasing in $n_c$. Intuitively, this is expected since increasing $n_c$ lowers the probability of packet discard due to successive CCA failures. Hence, the total attempt rate in the network (and therefore, the total load $\sum_{k=1}^m \lambda_k h_k$) can be pushed to a higher value before the discard probability target is violated.
\item $B(\deltabar)=\min\{B_1(\deltabar),B_2(\deltabar)\}$ increases in $n_c$ up to $n_c = 4$, then decreases in $n_c$. Note also from Table~\ref{tbl:sensitivity-to-nc} that up to $n_c = 4$, $B(\deltabar)$ equals $B_2(\deltabar)$ which is increasing in $n_c$, but thereafter, $B(\deltabar)$ equals $B_1(\deltabar)$ which is decreasing in $n_c$ for $n_c > 4$.
\item $B^{\prime}(\deltabar,\dbar)$ is monotonically decreasing in $n_c$. Intuitively, this is expected since increasing $n_c$ allows a packet to occupy the HOL position for longer duration, thereby increasing queueing delay, and causing the mean delay target to be violated sooner. 
\end{enumerate}
\item From Table~\ref{tbl:sensitivity-to-nt}, we observe the following:
\begin{enumerate}
\item $B_1(\deltabar)$ is \emph{invariant} of $n_t$. The reason is clear from Eqn.~\ref{eqn:uniqueness-condition}, where we see that the expression for $B_1(\deltabar)$ does not involve $n_t$.
\item $B_2(\deltabar)$ first increases in $n_t$, but then flattens off. Intuitively, as $n_t$ increases, the probability of packet discard due to transmission failure decreases, and hence the total attempt rate in the network can be pushed higher; thus, $B_2(\deltabar)$ should increase in $n_t$. However, when there are no hidden nodes, probability of a collision (which is only due to simulataneous channel sensing) is small; in addition, if the packet error rate on a link is small, the packet discard on a link is essentially governed by CCA failures, rather than transmission failures. Hence, increase in $n_t$ does not have a pronounced effect on $B_2(\deltabar)$.
\item For the entire tested range of $n_t$, $B(\deltabar)$ is governed by $B_1(\deltabar)$, and hence is invariant of $n_t$.
\item $B^{\prime}(\deltabar,\dbar)$ is monotonically decreasing in $n_t$. This is intuitive, since increasing $n_t$ causes a packet to occupy the HOL position longer, thereby increasing queueing delay, and violating the mean delay target sooner. The effect, however, is not as pronounced as with $n_c$, due to reasons explained in the previous point regarding $B_2(\deltabar)$. 
\end{enumerate}
\item From Table~\ref{tbl:sensitivity-to-minbo}, we observe that $B^{\prime}(\deltabar,\dbar)$ is monotonically decreasing in the minimum backoff exponent. Intuitively, this is expected since increasing the minimum backoff exponent causes the mean backoff duration of a packet to increase, thus increasing the mean delay experienced by a packet, and causing the mean delay target to be violated early. 
\end{enumerate}

\section{Conclusion}

In this paper, we have studied the problem of QoS aware network design under a class of CSMA/CA protocols including IEEE~802.15.4 CSMA/CA. Assuming that there are no hidden terminals in the network, we have derived a simplified set of fixed point equations from the more elaborate analysis developed in \cite{srivastava}. We have proved the uniqueness of the fixed point of our proposed equations, and verified through numerical experiments, their accuracy w.r.t the detailed analysis in \cite{srivastava}. From these simplified equations, we have derived an approximate inner bound on the QoS respecting throughput region of a given tree network where the QoS requirements are, a target discard probability on each link, and a target mean delay on each link (obtained by equally splitting the end-to-end QoS objectives over the links in a source-sink path). The structure of our inner bound sheds light on the dependence of the network performance on topological properties, and the arrival rate vector. In particular, our results indicate that to achieve a target per hop discard probability (respectively, mean delay), it suffices to control the total load (respectively, the maximum load over all nodes) in the network. Furthermore, for the special case of default backoff parameters of IEEE~802.15.4, equal arrival rates at all sources, and reasonable values of QoS targets, we have argued that controlling the total load is enough to meet both the discard probability and the delay targets. Under the same special case, we have also shown that the shortest path tree achieves the maximum throughput among all topologies that satisfy the approximate sufficient condition to meet the QoS targets. 

In our ongoing work, we aim to extend these results to the more general case where there are hidden terminals in the network.  

\section{Appendix}
\subsection{An algorithm for Steiner tree construction with hop constraint and relay constraint}

\begin{enumerate}
\item Run the SPTiRP algorithm (proposed in \cite{iwqos}) on the graph $G$ until the relay count in the resulting solution is $\leq N_{\max}$, or the hop constraint is violated for some source. If the resulting solution is feasible, retain it as a candidate solution, and compute its total hop count.
\item Repeat Step 1 for a fixed number of iterations, each time starting with a randomly chosen SPT as the initial solution. 
\item Pick the candidate solution with the least total hop count. 
\end{enumerate}

\subsection{Proof of Proposition~\ref{prop:vector-uniqueness}}
\label{subsec:proof-vector-uniqueness}
We begin by establishing the following result, which will be used to prove the main result.

\begin{lemma}
\label{lem:uniqueness-lipschitz}
Let $f=(f_1, f_2,\ldots,f_N):[0,a]^N\rightarrow [0,a]^N$ be such that for all $\mathbf{x}\in [0,a]^N$, and for all $i = 1,\ldots,N$, $f_i(\mathbf{x})=\sum_{j\neq i}g_j(x_j)$, where $g_j:[0,a]\rightarrow \mathbb{R}_+$ is Lipschitz continuous with Lipschitz constant $L_j$ for each $j$, $1\leq j\leq N$, and $\sum_{j=1}^N L_j < 1$. Then, the fixed point equations $\mathbf{x}=f(\mathbf{x})$ have a unique solution in $[0,a]^N$.
\end{lemma}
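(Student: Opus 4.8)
The plan is to invoke the Banach fixed point theorem (contraction mapping principle), so the whole argument reduces to equipping $[0,a]^N$ with a suitable norm and verifying that $f$ is a contraction. I would work with the sup norm $\|\mathbf{x}-\mathbf{y}\|_\infty\eqqcolon\max_{1\le j\le N}|x_j-y_j|$. First note that $[0,a]^N$ is a closed, bounded subset of $\mathbb{R}^N$ and hence a complete metric space under this norm, and that $f$ maps $[0,a]^N$ into itself by hypothesis; these supply the two standing assumptions of the fixed point theorem, leaving only the contraction estimate to establish.

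For the contraction estimate, fix $\mathbf{x},\mathbf{y}\in[0,a]^N$ and any coordinate $i$. Using the definition $f_i(\mathbf{x})=\sum_{j\ne i}g_j(x_j)$, the triangle inequality, and the Lipschitz bound $|g_j(x_j)-g_j(y_j)|\le L_j|x_j-y_j|$, I would write
\begin{align*}
|f_i(\mathbf{x})-f_i(\mathbf{y})| &\le \sum_{j\ne i}|g_j(x_j)-g_j(y_j)| \le \sum_{j\ne i}L_j|x_j-y_j| \\
&\le \Big(\sum_{j=1}^N L_j\Big)\|\mathbf{x}-\mathbf{y}\|_\infty.
\end{align*}
Since this bound is uniform in $i$, taking the maximum over $i$ yields $\|f(\mathbf{x})-f(\mathbf{y})\|_\infty\le\big(\sum_{j=1}^N L_j\big)\|\mathbf{x}-\mathbf{y}\|_\infty$. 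By the hypothesis $\sum_{j=1}^N L_j<1$, the map $f$ is therefore a contraction on $[0,a]^N$ with modulus strictly less than one.

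With completeness, the self-map property, and the contraction bound in hand, the Banach fixed point theorem immediately gives a unique $\mathbf{x}^\ast\in[0,a]^N$ with $f(\mathbf{x}^\ast)=\mathbf{x}^\ast$, which is precisely the claimed existence and uniqueness. I do not expect any serious obstacle here; the only real choice is the norm, and the sup norm is the natural one because each component $f_i$ aggregates the other coordinates through a sum, so the relevant contraction constant collapses exactly to $\sum_{j=1}^N L_j$ (the Euclidean norm would also work but would force a Cauchy--Schwarz step and a less transparent constant). When applying this lemma to the simplified vector fixed point \eqref{eqn:simple-tau-i}, \eqref{eqn:simple-alpha}, the remaining work -- deferred to the surrounding argument -- is to identify the coordinate maps $g_j$ and to verify that the resulting Lipschitz constants sum to less than one precisely under condition \eqref{eqn:uniqueness-condition}.
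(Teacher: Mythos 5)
Your proof is correct and follows essentially the same route as the paper's: equip $[0,a]^N$ with the sup metric, use the triangle inequality and the Lipschitz bounds to show $f$ is a contraction with modulus $\sum_{j=1}^N L_j < 1$, and conclude by Banach's fixed point theorem. No gaps; this matches the paper's argument step for step.
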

\begin{proof}
By our hypothesis, $f$ maps $[0,a]^N$ into $[0,a]^N$. We shall show that under the hypotheses, $f$ is a contraction on $[0,a]^N$ w.r.t the metric $d$ which is defined as $d(\mathbf{x},\mathbf{y})=\displaystyle{\max_{1\leq i\leq N}}|x_i - y_i|\:\forall \mathbf{x}, \mathbf{y}\in [0,a]^N$. Then, since $[0,a]^N$ is complete w.r.t the metric $d$, the uniqueness of the fixed point follows from Banach's fixed point theorem \cite{rudin}.

Let $\x_1,\x_2\in [0,a]^N$. Then, for all $i=1,\ldots,N$, 
\begin{align}
|f_i(\x_1)-f_i(\x_2)|&=\Bigg|\sum_{j\neq i}(g_j(x_{1,j})-g_j(x_{2,j}))\Bigg|\nonumber\\
&\leq \sum_{j\neq i}|g_j(x_{1,j})-g_j(x_{2,j})|\nonumber\\
&\leq \sum_{j\neq i}L_j|x_{1,j}-x_{2,j}|\label{eqn:use-of-lipschitz}\\
&\leq \Bigg(\sum_{j\neq i}L_j\Bigg)\max_{1\leq j\leq N}|x_{1,j}-x_{2,j}|\nonumber\\
&= \Bigg(\sum_{j\neq i}L_j\Bigg)d(\x_1,\x_2)\nonumber\\
&\leq \Bigg(\sum_{j=1}^N L_j\Bigg)d(\x_1,\x_2)\nonumber
\end{align}
where, in \eqref{eqn:use-of-lipschitz}, we have used the Lipschitz continuity of $g_j(\cdot)$. 

It follows that 
\begin{align}
d(f(\x_1),f(\x_2))&= \max_{1\leq i\leq N}|f_i(\x_1)-f_i(\x_2)|\nonumber\\
&\leq (\sum_{j=1}^N L_j)d(\x_1,\x_2)\nonumber\\
&\eqqcolon Kd(\x_1,\x_2)\nonumber
\end{align}
where, $K=\sum_{j=1}^N L_j\:<\:1$ by the hypothesis of the lemma. Thus, $f$ is a contraction on $[0,a]^N$. This completes the proof of the lemma. 
\end{proof}

\gap
\noindent
\textit{Proof of Proposition~\ref{prop:vector-uniqueness}:}

Using the notation of Lemma~\ref{lem:uniqueness-lipschitz}, let, for all $i=1,\ldots,N$, $j=1,\ldots,N$, $x_i=\overline{\tau}_{-i}$, $g_j(x_j)=\nu_j(1+\alpha_j+\ldots+\alpha_j^{n_c-1})$, with $\alpha_j=\frac{T_{tx}x_j}{1+T_{tx}x_j}$, and $f_i(\mathbf{x})=\sum_{j\neq i}g_j(x_j)$. Then, our simplified fixed point equations \eqref{eqn:simple-tau-i}, \eqref{eqn:simple-alpha} are of the form $\mathbf{x}=f(\mathbf{x})$, where $f=(f_1,\ldots,f_N)$. 

Let $\mathbf{x}\in[0,a]^N$ with $a=\frac{\alpha_{\max}}{T_{\mathsf{tx}}(1-\alpha_{\max})}$ and $\alpha_{\max}=\deltabar^{1/n_c}$. Then, $\alpha_j\leq \alpha_{\max}$ for all $j=1,\ldots,N$, and we have
\begin{align}
f_i(\mathbf{x})&= \sum_{j\neq i}\nu_j(1+\alpha_j+\ldots+\alpha_j^{n_c-1})\nonumber\\
&\leq (1+\alpha_{\max}+\ldots\alpha_{\max}^{n_c-1})\sum_{j\neq i}\nu_j\nonumber\\
&\leq (1+\alpha_{\max}+\ldots\alpha_{\max}^{n_c-1})\sum_{j=1}^N \nu_j\nonumber\\
&= (1+\alpha_{\max}+\ldots\alpha_{\max}^{n_c-1})\sum_{k=1}^m \lambda_kh_k,\:\text{using (A1)}\nonumber\\
&\leq a\nonumber
\end{align}
where, in the last step, we have used the hypothesis that $\sum_{k=1}^m \lambda_kh_k < \frac{a}{1+\alpha_{\max}+\ldots\alpha_{\max}^{n_c-1}}$. Thus, $f(\cdot)$ maps $[0,a]^N$ \emph{into} $[0,a]^N$. 

Next we investigate the functions $g_j(\cdot)$, $j=1,\ldots,N$. Let $\mathbf{x}_1,\mathbf{x}_2\in[0,a]^N$. Let $\alpha_{1,j}$ and $\alpha_{2,j}$ be the values of $\alpha_j$ corresponding to $\mathbf{x}_1$ and $\mathbf{x}_2$ respectively. Then, $\alpha_{1,j},\alpha_{2,j}\leq \alpha_{\max}$ for all $j=1,\ldots,N$, and we have
\begin{align}
|g_j(x_{1,j})-g_j(x_{2,j})|&=\nu_j|\alpha_{1,j}-\alpha_{2,j}|\times(1+\sum_{k=1}^{n_c-2}\sum_{l=0}^k\alpha_{1,j}^{k-l}\alpha_{2,j}^l)\nonumber\\
&\leq \nu_j(1+2\alpha_{\max}+\ldots +(n_c-1)\alpha_{\max}^{n_c-2})\nonumber\\
&\times|\alpha_{1,j}-\alpha_{2,j}|\nonumber\\
&\leq \nu_jT_{tx}(1+2\alpha_{\max}+\ldots +(n_c-1)\alpha_{\max}^{n_c-2})\nonumber\\
&\times|x_{1,j}-x_{2,j}|\nonumber\\
&= L_j|x_{1,j}-x_{2,j}|\nonumber
\end{align}
where, $L_j\eqqcolon \nu_jT_{tx}(1+2\alpha_{\max}+\ldots +(n_c-1)\alpha_{\max}^{n_c-2})$. Thus, $g_j(\cdot)$ is Lipschitz continuous with Lipschitz constant $L_j$ for all $j=1,\ldots, N$. Moreover, we have
\begin{align}
\sum_{j=1}^N L_j &= T_{tx}(1+2\alpha_{\max}+\ldots +(n_c-1)\alpha_{\max}^{n_c-2})\sum_{j=1}^N \nu_j\nonumber\\
&= T_{tx}(1+2\alpha_{\max}+\ldots +(n_c-1)\alpha_{\max}^{n_c-2})\sum_{k=1}^m \lambda_kh_k\nonumber\\
&< 1\nonumber
\end{align} 
where the last step follows from the condition that $\sum_{k=1}^m \lambda_kh_k < \frac{1}{T_{tx}(1+2\alpha_{\max}+\ldots +(n_c-1)\alpha_{\max}^{n_c-2})}$.

Thus, we see that all the conditions of Lemma~\ref{lem:uniqueness-lipschitz} are satisfied by the simplified fixed point equations in $[0,a]^N$. Hence, by Lemma~\ref{lem:uniqueness-lipschitz}, the simplified fixed point equations have a unique solution in $[0,a]^N$.\hfill \Square

\subsection{Proof of Lemma~\ref{lem:tau}}
\label{subsec:proof-lemma-tau}
\begin{proof}
For concreteness, we prove the result using the default backoff parameters of IEEE~802.15.4 CSMA/CA. However, the proof goes through for any protocol in this class.
(i) Observe that $\overline{\tau}_j^{(i)}=\frac{\beta_j \times b_j \times q_j}{1 - q_j + q_j \times b_j}\geq \beta_j \times b_j \times q_j$. From Equation~\eqref{eqn:nh-beta}, it can be verified that $\beta_j$ is a continuous, monotonically non-increasing function of $\alpha_j\in[0,1]$, and hence, for all $j$, $\beta_j \geq \frac{1}{238}\eqqcolon \underline{\beta}>0$ (per symbol time), where the last inequality is obtained by evaluating Equation~\eqref{eqn:nh-beta} at $\alpha_j = 1$. 

From Equation~\eqref{eqn:nh-calc_of_b}, it follows that for all $j$, $b_j \geq \frac{\overline{B}_j}{\overline{B}_j + T_{\mathsf{tx}}}=\frac{1}{1+\frac{T_{tx}}{\overline{B}_j}}\geq \frac{1}{1+\frac{T_{tx}}{78}}\eqqcolon \underline{b}>0$, since $\overline{B}_j\geq 78$ symbol time. 

Finally, notice that for all $j$, the mean service time per HOL packet, $\frac{1}{\sigma_j}$ must satisfy $\frac{1}{\sigma_j}> 78$ symbol time (the mean duration for the first back-off and CCA), since the HOL packet must spend at least one back-off duration before it can be transmitted or discarded. Hence, for all $j$, $q_j = \min\{1,\frac{\nu_j}{\sigma_j}\}\geq \min\{1,\frac{\min_{1\leq k\leq m}\lambda_k}{\sigma_j}\}\geq\min\{1,78\min_{1\leq k\leq m}\lambda_k\}\eqqcolon \underline{q}(\mathbf{\lambda})>0$, where $\mathbf{\lambda}$ is expressed in packets/symbol time. 

Combining the above bounds, we have, for all $j$, and for any given positive $\{\lambda_k\}_{k=1}^m$,
\begin{equation}
\overline{\tau}_j^{(i)}\geq \beta_j \times b_j \times q_j\geq \underline{\beta}\underline{b}\underline{q}(\mathbf{\lambda})> 0
\end{equation}
as claimed.

\noindent
(ii) For all $j$, $\overline{\tau}_j^{(i)}=\frac{\beta_j \times b_j \times q_j}{1 - q_j + q_j \times b_j}\leq \frac{\beta_j \times b_j \times q_j}{q_j \times b_j}=\beta_j\leq \frac{1}{78}/\text{symbol time}$, where the last inequality follows by observing that $\beta_j$ is monotonically non-increasing in $\alpha_j\in[0,1]$, and hence evaluating Equation~\eqref{eqn:nh-beta} at $\alpha_j = 0$. 
\end{proof}

\subsection{Proof of Proposition~\ref{prop:scalar-uniqueness}}
\label{subsec:proof-scalar-uniqueness}
To prove the existence and uniqueness of the scalar fixed point equation, we shall need one more analytical result, which we state below.

\begin{lemma}
\label{lem:scalar-uniqueness-lipschitz}
Let $a\in\mathbb{R}_+$. Suppose $f:[0,a]\rightarrow [0,a]$ such that $f$ can be expressed as $f(\cdot)= Mg(\cdot)$, where $M > 0$, and $g:[0,a]\rightarrow \mathbb{R}_+$ is a Lipschitz continuous function with Lipschitz constant $L$ such that $ML < 1$. Then,
\begin{enumerate}
\item The fixed point equation $x=f(x)$ has a unique solution in $[0,a]$.
\item Let us denote this unique fixed point as $h(M)$, to indicate its dependence on $M$. Suppose now that $g(\cdot)$ is differentiable on $(0,a)$. Then, $h(M)$ is continuous, and monotonically increasing in $M$. 
\item Suppose further that $g\in \mathcal{C}^1$, i.e., $g^{\prime}(\cdot)$ is continuous. Then, $h(\cdot)$ is differentiable. 
\end{enumerate}
\end{lemma}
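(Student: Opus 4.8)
The plan is to dispatch the three parts in sequence, reusing the Banach fixed point framework already deployed for Lemma~\ref{lem:uniqueness-lipschitz}. For Part~1 I would first note that the Lipschitz hypothesis gives $|f(x)-f(y)| = M|g(x)-g(y)| \le ML\,|x-y|$, so $f$ is a contraction on $[0,a]$ with modulus $ML<1$; since $[0,a]$ is complete and $f$ maps $[0,a]$ into itself by hypothesis, Banach's fixed point theorem delivers the unique fixed point $h(M)$. This part is routine.

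For Part~2 the key device I would introduce is the auxiliary function $\phi_M(x)\eqqcolon Mg(x)-x$, whose unique zero on $[0,a]$ is exactly $h(M)$ (existence at the endpoints follows from $\phi_M(0)\ge 0$ and $\phi_M(a)=f(a)-a\le 0$). The crucial observation is that $\phi_M$ is \emph{strictly decreasing}: for $x>y$ the Lipschitz bound yields $M(g(x)-g(y))\le ML(x-y)<x-y$, hence $\phi_M(x)-\phi_M(y)\le(ML-1)(x-y)<0$. To get monotonicity of $h$, I would fix $M_1<M_2$, set $x_1=h(M_1)$, and evaluate $\phi_{M_2}(x_1)=(M_2-M_1)g(x_1)\ge 0$ (using $\phi_{M_1}(x_1)=0$ and $g\ge 0$); since $\phi_{M_2}$ is strictly decreasing with $\phi_{M_2}(h(M_2))=0$, this forces $h(M_1)\le h(M_2)$. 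For continuity, I would subtract the two fixed point relations and insert the term $M_1 g(h(M_2))$ to obtain the a priori estimate
\begin{align*}
|h(M_1)-h(M_2)| \le M_1 L\,|h(M_1)-h(M_2)| + |M_1-M_2|\,g(h(M_2)),
\end{align*}
and then rearrange to $|h(M_1)-h(M_2)|\le \frac{G}{1-M_1 L}|M_1-M_2|$, where $G\eqqcolon\sup_{x\in[0,a]}g(x)<\infty$ by continuity of $g$ on the compact set $[0,a]$. Restricting to a range of $M$ with $ML\le\kappa<1$ makes the constant uniform, so $h$ is locally Lipschitz, hence continuous.

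For Part~3 I would invoke the implicit function theorem applied to $\Phi(M,x)\eqqcolon x-Mg(x)$, which satisfies $\Phi(M,h(M))=0$. Since $g\in\mathcal{C}^1$, $\Phi$ is jointly $\mathcal{C}^1$, and $\frac{\partial\Phi}{\partial x}=1-Mg'(x)$ evaluated at $x=h(M)$ is bounded below by $1-ML>0$ (using $\sup_x|g'(x)|\le L$ for a $\mathcal{C}^1$ function with Lipschitz constant $L$), hence nonzero. The implicit function theorem then gives that $h$ is differentiable, with $h'(M)=g(h(M))/(1-Mg'(h(M)))$.

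I expect the main obstacle to be the boundary behaviour in Part~3: differentiability of $g$ is posited only on the open interval $(0,a)$, whereas the implicit function theorem presumes the fixed point is interior. I would address this by observing that for positive load ($M>0$ with $g(0)>0$) one has $0<h(M)<a$, so the argument applies at an interior point, with the degenerate boundary cases handled separately. The remaining care throughout is to keep the contraction modulus uniformly below $1$ over the relevant range of $M$, so that the denominators $1-ML$ in the continuity estimate and in $h'(M)$ stay bounded away from zero.
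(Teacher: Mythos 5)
Your proposal is correct, and for Parts~2 and~3 it takes a genuinely different route from the paper. The paper's own proof handles Part~1 exactly as you do (Banach contraction), but for Part~2 it subtracts the fixed point relations at $M$ and $M+\epsilon$ and applies the Mean Value Theorem to $g$, obtaining the identity $h(M+\epsilon)-h(M)=\frac{\epsilon\, h(M+\epsilon)/(M+\epsilon)}{1-Mg^{\prime}(\xi)}$; monotonicity follows from the sign of the right-hand side, continuity from an $\epsilon$--$\delta$ bound on this quotient, and Part~3 by letting $\epsilon\rightarrow 0$ in the difference quotient, giving $h^{\prime}(M)=\frac{h(M)/M}{1-Mg^{\prime}(h(M))}$. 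You instead prove monotonicity via the strictly decreasing auxiliary function $\phi_M(x)=Mg(x)-x$ and a comparison of its zeros, prove continuity via the a priori estimate $|h(M_1)-h(M_2)|\leq \frac{G}{1-M_1L}|M_1-M_2|$, and obtain differentiability from the implicit function theorem applied to $\Phi(M,x)=x-Mg(x)$; note that $h(M)/M=g(h(M))$, so your derivative formula agrees with the paper's. Your approach buys two things: the monotonicity and continuity arguments use only Lipschitz continuity of $g$ (no differentiability, which the paper's MVT step requires), and your continuity argument yields a quantitative local Lipschitz bound on $h$ rather than bare continuity. The paper's approach buys a single unified computation that delivers monotonicity, continuity, and the explicit derivative from one identity, at the cost of invoking the MVT on $[0,a]$ even though $g$ is only assumed differentiable on $(0,a)$ --- an interiority gap that your proposal shares in Part~3 (the implicit function theorem needs $h(M)\in(0,a)$) but, unlike the paper, explicitly flags and quarantines as a boundary case. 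You are also more careful than the paper in requiring the contraction modulus to stay uniformly below $1$ over the range of $M$ being compared (the paper tacitly assumes $(M+\epsilon)L<1$ so that $h(M+\epsilon)$ is defined).
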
 

\begin{proof}
\begin{enumerate}
\item By hypothesis, $f$ maps $[0,a]$ into $[0,a]$. We shall show that under the condition $ML < 1$, $f(\cdot)$ is a contraction on $[0,a]$ w.r.t $|\cdot|$. Then, since $[0,a]$ is a complete metric space, it follows from Banach's fixed point theorem that the fixed point equation has a unique solution in $[0,a]$. 

Consider $x_1,x_2\in [0,a]$. Then,
\begin{align}
\Bigg|f(x_1)-f(x_2)\Bigg|&=M|g(x_1)-g(x_2)|\nonumber\\
&\leq ML|x_1-x_2|\nonumber\\
&=C|x_1 - x_2|\nonumber
\end{align}
where, $C\eqqcolon ML < 1$. Thus, $f$ is a contraction. Hence the proof. 
\item We have, $h(M)= Mg(h(M))$. Then, for $M > 0$, and non-zero $\epsilon$ (small enough such that $M+\epsilon > 0$),
\begin{align}
h(M+\epsilon) - h(M) &= (M+\epsilon)g(h(M+\epsilon)) - Mg(h(M))\nonumber\\
&= M[g(h(M+\epsilon))-g(h(M))] \nonumber\\
&+ \epsilon g(h(M+\epsilon))\nonumber\\
&= M[\{h(M+\epsilon)-h(M)\}g^{\prime}(\xi)]\nonumber\\
& + \epsilon g(h(M+\epsilon))\label{eqn:mvt}\\
&= Mg^{\prime}(\xi)[h(M+\epsilon)-h(M)]\nonumber\\
& + \epsilon g(h(M+\epsilon))\nonumber
\end{align}
where, Equation~\ref{eqn:mvt} follows from the Mean Value Theorem, with $\xi$ lying between $h(M)$ and $h(M+\epsilon)$. Thus, we have,
\begin{align}
h(M+\epsilon) - h(M) &= \frac{\epsilon g(h(M+\epsilon))}{1-Mg^{\prime}(\xi)}\nonumber\\
&= \frac{\epsilon \frac{h(M+\epsilon)}{M+\epsilon}}{1-Mg^{\prime}(\xi)}\label{eqn:difference-eqn}
\end{align}
Note that for $\epsilon > 0$, the R.H.S of Equation~\eqref{eqn:difference-eqn} is \emph{non-negative}, since $Mg^{\prime}(\xi)\leq M|g^{\prime}(\xi)|\leq ML < 1$, and $h(\cdot)\geq 0$. Thus, it follows that $h(M)$ is monotonically increasing in $M$. 

To show that $h(\cdot)$ is continuous in $M$, we proceed as follows. Since the denominator of the R.H.S of Eqn.~\ref{eqn:difference-eqn} is non-negative for $ML < 1$, we have, from Equation~\ref{eqn:difference-eqn}, 
\begin{align}
|h(M+\epsilon) - h(M)| &= \frac{|\epsilon|\Bigg|\frac{h(M+\epsilon)}{M+\epsilon}\Bigg|}{1-Mg^{\prime}(\xi)}\nonumber\\
&\leq \frac{\Bigg|\frac{\epsilon}{M+\epsilon}\Bigg||h(M+\epsilon)|}{1-ML}\nonumber\\
&\leq \frac{\Bigg|\frac{\epsilon}{M+\epsilon}\Bigg|}{1-ML}a,\:\text{since }h(M+\epsilon)\in[0,a]\nonumber
\end{align}
Thus, for any given $\delta > 0$, $|h(M+\epsilon) - h(M)|<\delta$ whenever $\Bigg|\frac{\epsilon}{M+\epsilon}\Bigg| < \frac{1-ML}{a}\delta \eqqcolon t(\delta)$. Without loss of generality\footnote{any $\epsilon$ that works for such a $\delta$ also works for higher values of $\delta$}, we take $\delta < \frac{a}{1-ML}$ so that $t(\delta)<1$. Then, $\Bigg|\frac{\epsilon}{M+\epsilon}\Bigg| < t(\delta)\Leftrightarrow -t(\delta)<\frac{\epsilon}{M+\epsilon}<t(\delta)$, which yields, after some algebraic manipulations, $\frac{-Mt(\delta)}{1+t(\delta)} < \epsilon < \frac{Mt(\delta)}{1-t(\delta)}$. This holds whenever $|\epsilon|< \frac{Mt(\delta)}{1+t(\delta)}$. Hence, the continuity of $h(\cdot)$ follows.

\item From Eqn.\ref{eqn:difference-eqn}, we have
\begin{align}
\frac{h(M+\epsilon) - h(M)}{\epsilon} &= \frac{\frac{h(M+\epsilon)}{M+\epsilon}}{1-Mg^{\prime}(\xi)}\nonumber
\end{align} 
Taking limits on both sides as $\epsilon \rightarrow 0$, and using the continuity of $h(\cdot)$ and $g^{\prime}(\cdot)$, we have
\begin{align}
h^{\prime}(M)&=\frac{h(M)/M}{1 - Mg^{\prime}(h(M))}
\end{align}
since $\xi$ lies between $h(M+\epsilon)$ and $h(M)$. This completes the proof. 
\end{enumerate}
\end{proof}

\gap
\noindent
\textit{Proof of Proposition~\ref{prop:scalar-uniqueness}:}
Using the notation of Lemma~\ref{lem:scalar-uniqueness-lipschitz}, let us write $x=\overline{\tau}$, $M=\sum_{k=1}^m \lambda_kh_k$, $g(x)=1+\alpha+\ldots +\alpha^{n_c-1}$, and $f(x)=Mg(x)$, where $\alpha = \frac{T_{tx}x}{1+T_{tx}x}$. Then, using a derivation similar to the one in the proof of Proposition~\ref{prop:vector-uniqueness}, it can be seen that $f(\cdot)$ maps $[0,a]$ into $[0,a]$ under the condition $\sum_{k=1}^m\lambda_kh_k < \frac{a}{1+\alpha_{\max}+\ldots+\alpha_{\max}^{n_c-1}}$.

Let $x_1,x_2\in [0,a]$. Then,

\begin{align}
|g(x_1)-g(x_2)| &= |\alpha_1-\alpha_2|(1+\sum_{k=1}^{n_c-2}\sum_{l=0}^k\alpha_{1}^{k-l}\alpha_{2}^l)\nonumber\\
&\leq (1+2\alpha_{\max}+\ldots + (n_c-1)\alpha_{\max}^{n_c-2})|\alpha_1-\alpha_2|\nonumber\\
&\leq T_{tx}(1+2\alpha_{\max}+\ldots + (n_c-1)\alpha_{\max}^{n_c-2})|x_1 - x_2|\nonumber\\
&= L|x_1 - x_2|\nonumber
\end{align}  
where, $L\eqqcolon  T_{tx}(1+2\alpha_{\max}+\ldots + (n_c-1)\alpha_{\max}^{n_c-2})$. Thus, $g(\cdot)$ is Lipschitz continuous with Lipschitz constant $L$. Moreover, under the condition given by Eqn.~\ref{eqn:uniqueness-condition}, $ML = (\sum_{k=1}^m\lambda_kh_k) T_{tx}(1+2\alpha_{\max}+\ldots + (n_c-1)\alpha_{\max}^{n_c-2}) < 1$. Hence, from Part~1 of lemma~\ref{lem:scalar-uniqueness-lipschitz}, it follows that the scalar fixed point equation defined by \eqref{eqn:scalar-tau}, \eqref{eqn:scalar-alpha} has a unique solution in $[0,a]$.

\subsection{Proof of Lemma~\ref{lem:delta-inc-alpha-gamma}}
\label{subsec:proof-delta}
\begin{proof}
Clearly, for a fixed value of $\alpha_i$, $\delta_i$ is non-decreasing in $\gamma_i$. Now, let us \emph{fix} $\gamma_i$, and study the behavior of $\delta_i$ as a function of $\alpha_i$ \emph{alone}. In the following discussion, we omit the subscript $i$ whenever it is immediate from the context.

Let $x=\alpha^{n_c}$. Then we have, from Equation~\eqref{eqn:nh-delta},
\begin{align}
\delta &= [x + \gamma^{n_t}(1-x)^{n_t}] + \gamma x(1-x) + \ldots+ \gamma^{n_t-1} x(1-x)^{n_t-1}\nonumber
\end{align}
For any fixed $\gamma$, it is easy to see (by taking derivatives with respect to $x$) that the last $(n_t-1)$ terms are non-decreasing in $x$ if $x\leq 1/n_t$. Also, for $\gamma^{n_t}\leq 1/n_t$, the derivative of the first term w.r.t $x$ is $1 - n_t\gamma^{n_t}(1-x)^{n_t-1}\geq 1-n_t\gamma^{n_t}\geq 0$. Thus, for fixed $\gamma$, $\delta$ is non-decreasing in $x=\alpha^{n_c}$, and hence in $\alpha$ if $\alpha^{n_c} \leq 1/n_t$, and $\gamma^{n_t}\leq 1/n_t$. Hence, for $\alpha_{i,1}\leq \alpha_{i,2}$, and $\gamma_{i,1}\leq \gamma_{i,2}$, we have
\begin{align}
\delta_{i,1}&=\delta(\alpha_{i,1},\gamma_{i,1})\nonumber\\
&\leq \delta(\alpha_{i,1},\gamma_{i,2})\nonumber\\
&\leq \delta(\alpha_{i,2},\gamma_{i,2})\nonumber\\
&= \delta_{i,2}\nonumber
\end{align}
\end{proof}

\subsection{Proof of Proposition~\ref{prop:scalar-vector-fp-relation}}
\label{subsec:proof-scalar-vector}

\begin{proof}
Recall from the proof of Proposition~\ref{prop:vector-uniqueness} that the function describing the vector fixed point equations is a contraction on $[0,a]^N$, where $a$ is as defined in Sections~\ref{subsec:vector-uniqueness} and \ref{subsec:scalar-uniqueness}. Hence, an iterative procedure starting with any initial solution in $[0,a]^N$ will converge to the unique fixed point $\{\overline{\tau}_{-i}\}_{i=1}^N$. 

Let us initialize the iteration with $\overline{\tau}_{-i}^{(0)}=\overline{\tau}$ for all $i=1,\ldots,N$. Note that this initial solution is in $[0,a]^N$. Then, from eqns.~\eqref{eqn:simple-alpha} and \eqref{eqn:scalar-alpha}, we have $\alpha_i^{(0)}=\alpha$ for all $i=1,\ldots,N$. Thus, for all $i=1,\ldots,N$,
\begin{align}
\overline{\tau}_{-i}^{(1)}&= \sum_{j\ne i}\nu_j(1+\alpha_j^{(0)}+\ldots+(\alpha_j^{(0)})^{n_c-1}),\:\text{where }\nu_j=\sum_{k=1}^mz_{k,j}\lambda_k \nonumber\\
&=(1+\alpha+\ldots+\alpha^{n_c-1})\sum_{j\ne i}\nu_j\nonumber\\
&\leq (1+\alpha+\ldots+\alpha^{n_c-1})\sum_{j=1}^N\nu_j\nonumber\\
&= (\sum_{k=1}^m\lambda_kh_k)(1+\alpha+\ldots+\alpha^{n_c-1})\nonumber\\
&= \overline{\tau}\nonumber\\
&= \overline{\tau}_{-i}^{(0)}\label{eqn:induction-base-step}
\end{align}

Now, we use induction to complete the proof. Suppose, for some iteration $k$, we have, for all $i=1,\ldots,N$, $\overline{\tau}_{-i}^{(k)}\leq \overline{\tau}_{-i}^{(k-1)}$. Then, since the R.H.S of Eqn.~\ref{eqn:simple-alpha} is monotonically increasing in $\overline{\tau}_{-i}$, it follows that for all $i=1,\ldots,N$, $\alpha_i^{(k)}\leq \alpha_i^{(k-1)}$. Hence, for all $i=1,\ldots, N$,
\begin{align}
\overline{\tau}_{-i}^{(k+1)}&= \sum_{j\ne i}\nu_j(1+\alpha_j^{(k)}+\ldots+(\alpha_j^{(k)})^{n_c-1})\nonumber\\
&\leq \sum_{j\ne i}\nu_j(1+\alpha_j^{(k-1)}+\ldots+(\alpha_j^{(k-1)})^{n_c-1})\nonumber\\
&= \overline{\tau}_{-i}^{(k)}\nonumber
\end{align}
Thus, when the monotonicity property of the iterates holds for iteration $k$, it also holds for iteration $(k+1)$. From Eqn.~\ref{eqn:induction-base-step}, it holds for $k=1$. Hence, by induction, it holds for all iterations. Since the iterates converge to the unique fixed point $\{\overline{\tau}_{-i}\}_{i=1}^N$, it follows that for all $i=1,\ldots,N$, $\overline{\tau}_{-i}\leq \overline{\tau}_{-i}^{(0)}=\overline{\tau}$. This completes the proof.
\end{proof}

\footnotesize
\bibliography{dit_astec}
\end{document}